\documentclass{llncs}
\usepackage[utf8]{inputenc}
\usepackage{qip}
\usepackage{framed}             
\usepackage[dvipsnames]{xcolor}
\usepackage{bbm}                
\usepackage{hyperref}           
\usepackage[backend=biber,style=numeric,maxcitenames=4,maxalphanames=100,maxbibnames=100,isbn=false]{biblatex}

\DeclareFieldFormat{eprint:iacr}{%
  IACR eprint\addcolon\space
  \ifhyperref
    {\href{http://eprint.iacr.org/#1}{\texttt{#1}}}
    {\texttt{#1}}}

\addbibresource{big.bib}

\pagestyle{plain}



\DeclareMathOperator{\spn}{span}
\DeclareMathOperator{\sym}{Sym}

\newcommand{\idcptp}{\mathrm{id}}
\renewcommand{\id}{\mathbbm{1}}
\newcommand{\ident}{\mathbbm{1}}

\newcommand{\symgrp}[1]{{\cal S}_{#1}}
\newcommand{\Eacc}{{\cal E}^\acc_{RS^N\rightarrow S^n}}
\newcommand{\barEacc}{\bar{\cal E}^\acc_{P^N S^N\rightarrow \Pi S^n}}
\newcommand{\rhobar}{\bar \rho_{P^NS^N}}

\newcommand{\eprlocc}{{\bf EPR-LOCC Sampling }}
\newcommand{\purifbased}{{\bf Purification-Based Sampling }}
\newcommand{\generalsampling}{{\bf General Mixed State Sampling
    Protocol }}

\newcommand{\cptn}{completely positive trace non-increasing }

\newcommand{\acc}{\mathrm{acc}}

\newcommand{\Span}{\mathrm{span}}

 \title{Secure Certification of Mixed Quantum States with Application to
   Two-Party Randomness Generation} 
\author{}\institute{}
\author{Frédéric
   Dupuis\inst{2,3}\and Serge Fehr\inst{1}\and Philippe
   Lamontagne\inst{4}\and Louis Salvail\inst{4}
 }

 \institute{CWI, Amsterdam, The Netherlands \and
     Université de Lorraine, CNRS, Inria, LORIA, F-54000 Nancy, France \and
     Faculty of Informatics, Masaryk University, Brno, Czech Republic \and
     Université de Montréal (DIRO), Montréal, Canada
   }

\begin{document}
\maketitle
\begin{abstract}
We investigate sampling procedures that certify that an arbitrary quantum state on $n$ subsystems is close to an ideal mixed state $\varphi^{\otimes n}$ for a given reference state $\varphi$, up to errors on a few positions. This task makes no sense classically: it would correspond to certifying that a given bitstring was generated according to some desired probability distribution. However, in the quantum case, this is possible if one has access to a prover who can supply a purification of the mixed state. 

In this work, we introduce the concept of mixed-state certification, and we show that a natural sampling protocol offers secure certification in the presence of a possibly dishonest prover: if the verifier accepts then he can be almost certain that the state in question has been correctly prepared, up to a small number of errors. 

We then apply this result to two-party quantum coin-tossing. Given
that strong coin tossing is impossible, it is natural to ask ``how
close can we get". This question has been well studied and is nowadays
well understood from the perspective of the bias of individual coin
tosses. We approach and answer this question from a different---and
somewhat orthogonal---perspective, where we do not look at individual
coin tosses but at the global entropy instead. We show how two
distrusting parties can produce a common high-entropy source, where
the entropy is an arbitrarily small fraction below the maximum (except with
negligible probability).

\end{abstract}

\keywords{quantum cryptography, quantum sampling, coin-tossing
}

\section{Introduction}
\label{sec:intro}

\subsection{Background and Motivation}

Certifying correctness by means of cut-and-choose techniques is at the
core of many -- classical and quantum -- cryptographic protocols. This
goes back as far as Yao's garbled circuits, introduced in the 80s,
where cut-and-choose is the main technique used to obtain active
security. Even more so, cut-and-choose is at the very heart of
essentially any quantum-cryptographic protocol, where participants are
often asked to prepare states that agree with some specification.
Certifying that quantum states satisfy this specification is essential
to proving the security of these protocols.

Underlying these techniques is one of the most fundamental tasks in
statistics: sampling. It allows one to infer facts about a large set
of data by only looking at a small subset of it. For example, one can
estimate the number of zeros in an $n$-bit string with very high
accuracy by looking only at a small, randomly selected subset of the
bits. This is also true in quantum mechanics: given an $n$-qubit
system, one can infer that it is almost entirely contained in a
subspace $\Span \{ \ket{s} : s \text{ is a bitstring with } (\delta
\pm \epsilon)n \text{ 1's} \}$ by measuring a small subset of the
qubits and observing that a fraction $\delta$ of the bits are
ones~\cite{bouman-fehr}.

One thing that a classical sampling procedure {\em cannot} do, however, is to infer the probability distribution from which the bitstring was generated. While a sampling procedure might be able to tell us that a bitstring contains roughly $n/2$ zeros and $n/2$ ones, that does not mean that it originally came from $n$ fair coin flips\,---\,for all we know, it might be a fixed string that happens to have the right number of zeros and ones. If we were somehow able to do this, it would have interesting consequences for cryptography: for instance, we could get a coin-flipping protocol by getting one party to generate the coin flips, send them to the other party, and have the other party perform this hypothetical sampling procedure to certify that most of the bits received indeed came from fair coin flips.

While this is clearly impossible in the classical case, it turns out that, perhaps surprisingly, this makes sense in the quantum scenario. This is due to the phenomenon of \emph{purification}: given a mixed quantum state $\rho_A$ on system $A$ (which corresponds to a probability distribution on quantum states), it is possible to define a bipartite \emph{pure} (i.e.~deterministic) state $\ket{\psi}_{AR}$ which is in the same mixed state as $\rho_A$ when looking at $A$ only. Hence, one can certify that $A$ is in the mixed state $\rho_A$ by asking someone to produce the purifying system $R$ and measuring that the combined system $AR$ is indeed in state $\ket{\psi}_{AR}$. To give a more concrete example, suppose $\rho_A$ is a uniformly random qubit, i.e.~$\rho_A = \frac{1}{2} \proj{0} + \frac{1}{2} \proj{1}$. Then, the pure state $\ket{\Phi}_{AR} = \frac{1}{\sqrt{2}} (\ket{00} + \ket{11})$ purifies it, and 
checking that $AR$ is in state $\ket{
\Phi}$ certifies that $A$ was uniformly distributed in the first place. Note also that one does not need to trust the party who gives us the purification, making this suitable for an adversarial setting.

This leads to the following natural sampling protocol.  Consider a
sampler Sam who holds an arbitrary quantum state $\rho_{A^n}$ on $n$
subsystems, prepared by a possibly dishonest prover Paul. Sam would
like to certify that this state is close to the ideal mixed state
$\varphi^{\otimes n}$, possibly with errors on a small number of
positions, for a given reference state $\varphi$. To do this, he
selects a small subset of $k$ positions at random, and he asks the
distrusted prover Paul to deliver the purifying systems $R^k$ for these
positions. He then measures the POVM $\{ \proj{\varphi}_{AR}, \ident -
\proj{\varphi}_{AR} \}$ on each of the selected systems in the sample
to ensure that all of them are in the state $\ket{\varphi}_{AR}$ which
purifies $\varphi_A$. He rejects if any errors are detected.

  We emphasize that for verifying a \emph{mixed} reference state,
interaction with a prover is necessary, as there is no local
measurement on Sam's side that can distinguish between the correct
state $\varphi^{\otimes n}$ and a state that consists of the
eigenvectors of $\varphi$ in the correct proportions (i.e., according
to the corresponding eigenvalues).

\subsection{Our Contribution}

In the first part of the paper, we investigate this type of sampling
procedure in detail. Several challenges arise in the analysis of this
protocol. First, defining what we mean when we say that the sampling
works is not trivial. In the case of regular quantum sampling, we
usually want to say that the state has a very small probability of
being outside of a typical subspace that corresponds to the statistics
that we have observed. For mixed states, this definition fails
completely: for instance, in the case of certifying uniformly random
qubits, this typical subspace would actually be the entire space,
yielding a vacuous statement. We might then be tempted to include the
purifying systems in the definition of the typical subspace, but then
we have no guarantee that an adversarial prover will respect the
structure we want to impose on his part of the state---we don't even
know that it consists of $n$ subsystems. A second difficulty comes
from the fact that the prover might not necessarily want to provide
the state that gives him the best chance of passing the test, even if
he has it. If we again look at the case of certifying uniformly random
qubits, even if Sam has the ideal state before the sampling begins,
Paul might want to bias the outcome, for example by passing the test
if he measures $\ket{0}$ on all of the non-sampled qubits, and failing
on purpose otherwise. Because of these difficulties, our main result
does not 
follow 
from traditional sampling theorems.

We overcome these challenges and present a general class of mixed
state certification protocols which contains the natural protocol
described above. We show that any protocol that fits this class, and
that satisfies the simple criteria of being invariant under
permutations and performing well on i.i.d. states, allows us to
control the post-sampling state in a meaningful way. A positive
consequence of this modular analysis is that previous results on
\emph{pure state} certification also fit our framework, and thus fall
under a special case of our analysis -- just as pure states are a
special case of mixed states. Because pure state certification has
already found many applications in
cryptography~\cite{bouman-fehr,damgaard2015orthogonal,dfls16,fkszz13,winkler2014efficiency},
the fact that we recover it as a special case positions our result as
a powerfool tool for quantum cryptography.

The second part of the paper is devoted to applying this result to 
coin flipping---or \emph{randomness generation}.
Given that strong coin tossing is known to be
impossible, it is natural to ask ``how close can we get?". This
question has been well studied and is nowadays well understood from
the perspective of the bias of individual coin tosses (see
Section~\ref{sec:prevwork} below).  We approach and answer this
question from a different---and somewhat orthogonal---perspective,
where we do not optimize individual coin tosses but the global entropy
instead. From this entropic perspective, we show that ``the next
best'' after strong coin tossing is possible.  We show that the
coin-flipping protocol loosely described above allows two distrusting
parties to produce a common high-entropy source, where the entropy is
an arbitrarily small fraction below the maximum (except with negligible
probability).

Our protocol for the task of two party randomness generation
outperforms any classical protocol in the information theoretical
setting. The trivial classical protocol---where each party tosses
$n/2$ unbiased coins and the output is the result of the $n$ tosses---is
optimal for this task~\cite{hofheinz2006possibility}.

The paper is organized as follows. First, in the next
subsections, we discuss some previous work in the area and the
relevance of our work for cryptography. In Section~\ref{sec:prelims},
we introduce the notation and recall some useful
facts. Section~\ref{sec:samplingstrats} presents the main result in
more detail.  The proof of our main result follows in
Section~\ref{sec:mr}. The coin-flipping protocol described above is
then presented in Section~\ref{sec:cointoss}.


\subsection{Previous Work}
\label{sec:prevwork}
Classical sampling results have been around since the foundations of modern probability theory, dating back to the work of Bernstein, Hoeffding and Chernoff on concentration of measure in the 1920s and 1930s. More recently, several quantum generalizations of these classics have been proven. These generalizations include, for instance, Ahlswede and Winter's operator Chernoff bound~\cite{ahlswede-winter} and the quantum Chernoff bound of~\cite{quantum-chernoff}. However, these generalizations are not easily amenable to giving results about sampling, unlike their classical counterparts. Other quantum results can be used to analyze sampling in certain contexts, such as quantum de Finetti theorems for quantum key distribution~\cite{renner-phd,r07,rkc08}.

But perhaps the most direct analogues of the classical sampling results are those of~\cite{bouman-fehr}. There, the authors give a generic way to transpose classical sampling procedures to the quantum case. Roughly speaking, they show that if a classical sampling protocol says that a string of random variables $X_1,\cdots,X_n$ is contained in some ``good'' subset $\mathcal{X}_{\textrm{good}}$ except with negligible probability, then the quantum version of the same sampling procedure (defined in a precise way in \cite{bouman-fehr}) would say that the final state $\rho_{X_1,\ldots,X_n}$ is almost entirely contained in the good subspace $\Span\{ \ket{x_1} \otimes \cdots \otimes \ket{x_n} : x_1,\cdots,x_n \in \mathcal{X}_{\textrm{good}}\}$, except with negligible probability. This ``good'' set would normally correspond to strings that are consistent with what was observed in the sample. Our main result can be viewed as extending this to the case of mixed state sampling.

Our main application, coin flipping, also has a long history. The
basic task was first defined in 1981 by Manuel Blum~\cite{b81}. Since
the early 2000's, it has received a lot of attention in the quantum
cryptography community, as it is one of the most natural tasks for
which quantum protocols can perform something that is impossible
classically. There are two versions of coin flipping: \emph{strong}
coin flipping, in which we require the protocol to be equivalent to a
black box that produces the coin flip and distributes the result, and
\emph{weak} coin flipping, in which each participant has a known
preferred outcome and must be prevented from biasing the outcome in
that direction. Several quantum protocols for strong coin flipping
have been developed with various biases~\cite{sr01,a02}, but a
fundamental lower bound of $(\frac{1}{\sqrt{2}}-\frac 12)$ on the bias
of such protocols was proven in~\cite{k03-2} (see
also~\cite{gw07}). Finally, a protocol with a bias matching the lower
bound was proven in~\cite{ck09}. For weak coin flipping, we have had
several protocols~\cite{kn04,sr02,m04,m05}, again with various biases,
but this time culminating in a protocol with arbitrarily small
bias~\cite{m07}. Quantum coin flipping has even been implemented in
the lab~\cite{pjlcltkd14}. Here, we go in a somewhat different
direction: we show that even though strong coin flipping with
negligible bias is impossible without assumptions, two distrustful
parties can produce a common string of min-entropy arbitrarily close
to maximum.

  A strong quantum coin tossing protocol
  using ideas similar to that of the 
protocol described in Section~\ref{sec:cointoss} has been previously
considered by H\o{}yer and Salvail (unpublished) for achieving
in a slightly simpler way the same $\frac{1}{4}$ bias than the one in~\cite{a02}. 
Alice prepares two EPR pairs and sends one half of each to Bob. Bob picks at
random one qubit out of the two and verifies that Alice holds the
corresponding purification register of an EPR pair by asking her to
measure it in a random BB84 basis before comparing the result with his
own. If this test succeeds, Bob gets some evidence that the
remaining pair of qubits can be used as a coin toss after measuring it
in the canonical basis. Our protocol extends this test to a random
sample of a population of $N$ qubits, increasing the confidence that
Bob has about the remaining qubits being ``close'' to ideal coin
tosses when the test is successful.

\subsection{Applications to Cryptography}
\label{sec:applications}

\paragraph{Sampling with a Pure Reference State. }
\label{sec:purestates}

Previous results on sampling from a quantum population have dealt with
\emph{pure} reference states. In this case, the sampler can choose its
sample and perform local measurements on the sampled positions without
any help from the prover.  This setting allows for standard classical
tools such as Hoeffding's inequality to be used to derive the
probability that the sampled positions' proximity to the reference
state is not a good indicator for the unsampled positions' proximity
to the same reference state.

Since pure states are a special case of mixed states, a natural
property that we would want for our mixed state sampling result is to
recover a statement similar to the one for pure state sampling in the
framework of~\cite{bouman-fehr}.  This is indeed the case when we
restrict our attention to the task of certification, i.e. when we do
not tolerate any error in the sample. Although our results do not use
the same tools, and are expressed in terms of a
\emph{post-selected} operator instead of in terms of proximity to an
ideal state (see Sect.~\ref{sec:samplingstrats}), we recover a
statement equivalent to that of~\cite{bouman-fehr}, albeit with
slightly worse parameters, when we apply our results to pure reference
states.  Since most
applications~\cite{bouman-fehr,dfls16,fkszz13,winkler2014efficiency}
of pure state sampling has been in the setting of certification, our
results can also be used to prove those applications.

\paragraph{Sampling with a Distributed Pure Reference State. }

Our mixed state sampling result is also applicable to an instance of
pure state certification that falls outside  the framework
of~\cite{bouman-fehr} and which was presented and analyzed in an ad
hoc way in~\cite{damgaard2015orthogonal}. Their sampling algorithm was
used as part of a protocol for leakage resilient computation.

The sampling task considered in~\cite{damgaard2015orthogonal} is as
follows: spacially separated Alice and Bob want to certify that their
joint registers -- which was prepared by an untrusted third party --
is of the form $\ket\varphi_{AB}^{\otimes n}$ for some entangled state
$\ket \varphi$ where Alice holds the $A$ part of each of the $n$
states and Bob the $B$ part. The fact that the state is distributed
between Alice and Bob means that the techniques of~\cite{bouman-fehr}
do not apply: the two samplers cannot perform a projective measurement
to check that their shared registers are in the reference state
$\ket\varphi_{AB}$.

Our results of Sect.~\ref{sec:mr} only requires that the sampling
protocol's verification procedures is invariant under the permutation
of the quantum population, and that it aborts when performed on an
\emph{obviously bad} state. 
Since the pure state certification protocol
of~\cite{damgaard2015orthogonal} satisfies these properties, our
techniques readily apply and can be used to analyze their protocol.

\paragraph{Application to Two-Party Computation. }

In \cite{salvail2015quantifying}, the power of quantum communication
for secure unconditional two-party computation is investigated.  Among
other results,
it was shown that \emph{correct} quantum implementations of two-party
classical cryptographic
primitives 
must leak at least some minimal amount of information to one of the
parties. 
For example, randomized variants\footnote{Variants where the
  primitives considered are applied to random inputs.} of
one-out-of-two OT and secure AND sharing must leak at least
$\frac{1}{2}$ bit on average.
Protocols exist in the quantum honest-but-curious model that minimize
the amount of leakage for a given primitive. The simplest such
protocol consists of an adversary preparing and distributing an
\emph{embedding} of the primitive. An embedding of a cryptographic primitive
is a pure state that yields the correct outcomes when measured in the
computational basis, i.e. from each party's point of view, the state
shared before the final measurement is a purification of the
probability distribution for this party's output.

A protocol that achieves
minimal leakage against \emph{active} adversaries under the sole
assumption that the parties have access to strong strong
coin-tosses is easily obtained from mixed-state certification. 
One of the parties would generate many copies of the
embedding of the primitive that minimizes leakage and the other party
certifies correctness using our sampling procedure. They then choose
one of the remaining embeddings, the target embedding, and measure it; the outcome acts as
the output of the protocol. If the sampling succeeds, the unsampled
positions are close to ideal embeddings from the sampler's
perspective and randomly picking the target embedding 
would then have close to minimal leakage with
good probability. However, without additional resources, an
adversary (the sampler say) could measure its part of a few embeddings before choosing 
the target embedding as one  
that produces the output the 
adversary wants to see.
Coin-tosses are therefore required to pick the target embedding without  bias.



\section{Preliminaries}
\label{sec:prelims}

\subsection{Notation}
\label{sec:notation}

Let $\hilbert_A,\hilbert_B$ be two Hilbert spaces, we write
$L(\hilbert_A, \hilbert_B)$ for the set of linear operators from
$\hilbert_A$ to $\hilbert_B$ and we write $L(\hilbert_A)$ for
$L(\hilbert_A, \hilbert_A)$.  Let $\leqdensity\hilbert$ be the set of
positive semi-definite operators with trace less than or equal to 1,
and let $\density\hilbert$ be the set of density operators on
$\hilbert$.  The set of isometries from $\hilbert_A$ to $\hilbert_B$
is denoted $U(\hilbert_A, \hilbert_B)$. We use the notation
$U_{A\rightarrow B}$ to illustrate that $U_{A\rightarrow B}\in
U(\hilbert_A, \hilbert_B)$. When there is no ambiguity from doing so,
we write $U_A$ instead of $U_{A\rightarrow B}$. For an arbitrary
isometry $U$, we sometimes write $[U](\rho)$ as shorthand for $U\rho
U^\dagger$.  For a pure state $\ket\psi$, we write $\psi$ as shorthand
for $\proj\psi$ when this creates no ambiguity. For a linear operator
$A$, $\|A\|_1:= \trace{\sqrt{A^\dagger A}}$ denotes the \emph{trace
  norm}. We denote $\id_A$ as the identity operator on $\hilbert_A$
and $\idcptp_A$ as the CPTP map that acts trivially on register $A$.

We let $[n]:=\{1,\dots,n\}$ denote the set of the first $n$ positive
integers for $n\in \naturals$. For a fixed finite set $Y$ and any subset
$X\subseteq Y$, $\bar X$ denotes the complement of $X$ in $Y$,
i.e. $\bar X= Y\setminus X$. Let $h(p):= -p\log_2(p)
-(1-p)\log_2(1-p)$ be the binary entropy function; we make use of the
fact that $\binom n{\beta n}\leq 2^{h(\beta)n}$ for $0<\beta<1$.

Let $A$ be a quantum register, we use the notation $A^n$ to denote $n$
identical copies of $A$ and label them $A_1,\dots,A_n$ when the need
arises to distinguish individual registers. For $t\subseteq [n]$, we
write $A_t$ as the composite register containing registers $A_i$ for
each $i\in t$.

\subsection{Permutation Invariance and the Symmetric Subspace}
\label{ssec:symdef}

Let $\symgrp n$ denote the symmetric group on $n$ elements and let
$A_1,\dots, A_n$ be $n$ quantum registers with identical state space
$\hilbert$. For $\pi\in \symgrp n$, we use the same symbol to denote the
unitary operation that acts on $\hilbert^{\otimes n}$ by 
\begin{equation}
  \pi (\ket{\phi_1}_{A_1}\otimes\dots\otimes \ket{\phi_n}_{A_n}) =
  \ket{\phi_{\pi^{-1}(1)}}_{A_1} \otimes \dots \otimes
  \ket{\phi_{\pi^{-1}(n)}}_{A_n}
  \enspace .
\label{eq:permutationaction}
\end{equation}

\begin{definition} 
  A density operator $\rho\in \density{\hilbert^{\otimes n}}$ is called
  \emph{permutation invariant} if $\pi \rho\pi^\dagger=\rho$ for all
  $\pi\in \symgrp n$.

  The \emph{Symmetric subspace} of $\hilbert^{\otimes n}$, denoted
  $\sym^n(\hilbert)$, is the space spanned by all permutation
  invariant vectors of $\hilbert^{\otimes n}$, i.e. all vectors $\ket
  \phi\in\hilbert^{\otimes n}$ such that $\pi\ket\phi=\ket\phi$ for
  any $\pi\in \symgrp n$.
\end{definition}

Although not all permutation invariant operators have support in the
symmetic subspace, the next lemma asserts that they have a
purification that does.
\begin{remark}[\cite{renner-phd,ckmr07}]\label{lem:purif}
  For any permutation invariant density operator $\rho_{A^n}$ on
  $\hilbert_A^{\otimes n}$ there exists a pure state $\ket{
    \rho_{A^nB^n}}\in \sym^{n}(\hilbert_A\otimes\hilbert_B)$ where
  $\hilbert_A\simeq \hilbert_B$, such that $\trace[B^{n}]{
    \rho_{A^nB^n}}=\rho_{A^n}$.
\end{remark}

\begin{remark}[\cite{renner2010,renner-phd}]\label{rem:projsymsubspace}
  Let $\hilbert$ be a $d$-dimensional Hilbert space. The projector
  onto the symmetric subspace $\sym^n(\hilbert)$ can be expressed
  as $$ c_{n,d} \int \proj\theta^{\otimes n} d\ket\theta$$ where
  $d\ket\theta$ is the measure on the set of pure states of $\hilbert$
  induced by the Haar measure on the set of unitaries acting on
  $\hilbert$ 
  and where $c_{n,d} := \binom{n+d-1}{n}\leq (n+1)^{d-1}$ is the
  dimension of $\sym^n(\hilbert)$.
\end{remark}

\subsection{Mathematical Tools and Definitions}
\label{sec:tools}

We say that an operator $\tilde \rho_B$ is \emph{post-selected} from
register $A$ of $\rho_{AB}$ if there exists a POVM element $0\leq
E_A\leq \id_A$ such that $\tilde\rho_B= \trace[A]{(E_A \otimes
  \id_B)\rho_{AB}}$.  The following remark on relation between the
reduced operator of a joint system before and after a post-selected
measurement takes place will be useful throughout this paper.
 \begin{remark}\label{rem:fq09h4g} Let $\rho_{AB}$ be an arbitrary positive
   semi-definite operator on registers $AB$. Let $0\leq E_A\leq \id_A$
   be a positive semidefinite operator acting on register $A$. Then it
   holds that
   \begin{equation*}
     \trace[A]{(E_A\otimes \id_B) \rho_{AB}} \leq
     \trace[A]{\rho_{AB}}\enspace. 
   \end{equation*}
 \end{remark}

 The following observation shows that there is a strong relation
 between post-selected operators and upper-bounded operators.
\begin{proposition}\label{prop:leqequivpostsel}
  Let $c\geq 0$ and let $\rho_Q, \sigma_Q$ be two positive
  semi-definite operators. Then $\rho_Q\leq c\cdot \sigma_Q$ if and
  only if for any purification $\ket{\sigma_{R_1Q}}$ of $\sigma_Q$ and
  $\ket{\rho_{R_2Q}}$ of $\rho_Q$, there exists a linear operator
  $A_{R_1\rightarrow R_2}$ such that $A_{R_1}^\dagger A_{R_1}\leq
  \id_{R_1}$ and
  \begin{equation}\label{eq:post-selection}
    \ket{\rho_{R_2Q}}= \sqrt{c}\cdot (A_{R_1\rightarrow R_2}\otimes \id_Q) \ket{\sigma_{R_1Q}}
  \end{equation}
\end{proposition}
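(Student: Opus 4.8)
The plan is to prove the two directions separately, using the fact that all purifications of a given operator are related by an isometry on the purifying system (up to the dimension of that system; I would implicitly enlarge $R_1$ or $R_2$ as needed so that partial isometries can be completed, or simply state the claim for purifications into a sufficiently large fixed reference space and note that the general case follows by composing with the connecting isometry).

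For the ``if'' direction, suppose \eqref{eq:post-selection} holds for some $A_{R_1 \to R_2}$ with $A_{R_1}^\dagger A_{R_1} \leq \id_{R_1}$. Then I would compute $\rho_Q = \trace[R_2]{\proj{\rho_{R_2 Q}}} = c \cdot \trace[R_2]{(A_{R_1} \otimes \id_Q)\proj{\sigma_{R_1 Q}}(A_{R_1}^\dagger \otimes \id_Q)}$, and by cyclicity of the partial trace this equals $c \cdot \trace[R_1]{(A_{R_1}^\dagger A_{R_1} \otimes \id_Q)\proj{\sigma_{R_1 Q}}}$. Since $0 \leq A_{R_1}^\dagger A_{R_1} \leq \id_{R_1}$ is a legitimate POVM element, Remark \ref{rem:fq09h4g} (applied with the roles of the registers swapped) gives that this is $\leq c \cdot \trace[R_1]{\proj{\sigma_{R_1 Q}}} = c \cdot \sigma_Q$, which is the desired inequality. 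This direction is essentially a one-line computation once the post-selection identity from Remark \ref{rem:fq09h4g} is in hand.

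For the ``only if'' direction — which I expect to be the main obstacle — suppose $\rho_Q \leq c\cdot\sigma_Q$. I would like to build the operator $A$ explicitly. Write $\sigma_Q$ in a spectral/SVD-type decomposition and use the canonical form of purifications: fixing an orthonormal basis, every purification $\ket{\sigma_{R_1 Q}}$ can be written as $(\id_Q \otimes V_{R_1}) \sum_i \sqrt{\lambda_i}\, \ket{e_i}_Q \ket{i}_{R_1}$ for eigenpairs $(\lambda_i, \ket{e_i})$ of $\sigma_Q$ and some isometry $V$, and similarly for $\rho$. The natural candidate is to define $A$ on the canonical purifications via $A := \sqrt{c}^{-1}\, \rho_Q^{1/2} \sigma_Q^{-1/2}$ suitably interpreted (restricted to the support of $\sigma_Q$, and noting $\rho_Q \leq c\sigma_Q$ forces $\mathrm{supp}(\rho_Q) \subseteq \mathrm{supp}(\sigma_Q)$). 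The key points to verify are: (i) that $\rho_Q^{1/2}\sigma_Q^{-1/2}$ is a bounded operator with operator norm at most $\sqrt{c}$ — this follows because $\|\rho_Q^{1/2}\sigma_Q^{-1/2}\|^2 = \|\sigma_Q^{-1/2}\rho_Q \sigma_Q^{-1/2}\| \leq c$ by the operator inequality; (ii) that plugging this $A$ into the right-hand side of \eqref{eq:post-selection} (transported through the basis-matching isometry of $R_1$ and $R_2$) reproduces $\ket{\rho_{R_2 Q}}$ exactly — this is a direct manipulation using $\rho_Q^{1/2}\sigma_Q^{-1/2}\sigma_Q^{1/2} = \rho_Q^{1/2}$ on the support; and (iii) that the condition $A_{R_1}^\dagger A_{R_1} \leq \id$ is exactly the bound from (i). The delicate bookkeeping is handling the kernels of $\sigma_Q$ and $\rho_Q$ and the freedom in the purifications: I would absorb the two arbitrary isometries $V_{R_1}$ (for $\sigma$) and $W_{R_2}$ (for $\rho$) into the definition of $A$ by setting $A_{R_1 \to R_2} := \sqrt{c}^{-1} W\, (\text{canonical } A)\, V^\dagger$, and check the norm bound is preserved under this conjugation by (partial) isometries.

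Finally I would remark that the equivalence is symmetric in the expected way and that the statement is the operator-algebraic counterpart of the elementary fact that $\rho \leq c\sigma$ iff $\rho$ is obtained from $c\sigma$ by a ``filtering'' operation, which is the viewpoint used throughout the rest of the paper when we pass between upper bounds on reduced states and post-selected (sampled-and-accepted) states. The only real work is the linear-algebra in the ``only if'' direction, and the cleanest route is the explicit $\rho_Q^{1/2}\sigma_Q^{-1/2}$ construction on supports together with the norm identity $\|X^{1/2}Y^{-1/2}\|^2 = \|Y^{-1/2}XY^{-1/2}\|$.
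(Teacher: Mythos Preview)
Your proof is correct in both directions, and the ``if'' direction matches the paper's argument essentially verbatim (partial trace plus Remark~\ref{rem:fq09h4g}).

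For the ``only if'' direction, however, you take a genuinely different route from the paper. You construct $A$ analytically via the transpose trick on canonical purifications, essentially setting $A$ to be (the transpose of) $c^{-1/2}\rho_Q^{1/2}\sigma_Q^{-1/2}$ on the support of $\sigma_Q$, and you extract the contraction property from the operator-norm identity $\|\rho_Q^{1/2}\sigma_Q^{-1/2}\|^2=\|\sigma_Q^{-1/2}\rho_Q\sigma_Q^{-1/2}\|\leq c$. The paper instead argues structurally: it writes $c\,\sigma_Q=\rho_Q+\tilde\sigma_Q$ with $\tilde\sigma_Q:=c\,\sigma_Q-\rho_Q\geq 0$, builds the explicit ``flagged'' purification $\ket{\sigma_{R'R_2Q}}=c^{-1/2}\bigl(\ket{0}_{R'}\ket{\rho_{R_2Q}}+\ket{1}_{R'}\ket{\tilde\sigma_{R_2Q}}\bigr)$, and takes $A=(\bra{0}_{R'}\otimes\id_{R_2})V$ where $V$ is the isometry relating the given purification of $\sigma_Q$ to this one. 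Your approach is the standard ``pretty good'' construction and yields an explicit formula for $A$, at the cost of having to manage pseudoinverses, supports, and the transpose bookkeeping; the paper's approach sidesteps all of that by exhibiting the purification in which the projection onto $\ket{0}_{R'}$ manifestly picks out $\ket{\rho}$, so the contraction property $A^\dagger A\leq\id$ is immediate. Both are short and complete; the paper's version is arguably cleaner for the purposes of this article since the ``flag and project'' picture is exactly how the result is used downstream (e.g.\ around equation~(\ref{yetanotherinterpretation})).
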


The following Proposition is a generalization of a Lemma that appeared
in~\cite{bouman-fehr}, which is itself has roots
in~\cite{renner-phd}. A direct consequence of this Proposition is that
a superposition of a few states can be \emph{approximated} by a
mixture of the same few states.
\begin{proposition}
  \label{prop:smallnumbterms}
  Let $\{\ket {\psi_i}\}_{i\in \mathcal J}$ be a family of vectors
  living on a Hilbert space $\hilbert$ indexed by some finite set
  $\mathcal J$. Define operators $$\rho = \sum_{i,j\in \mathcal J}
   \ketbra{\psi_i}{\psi_j} \text{ and }
  \rho^{mix}= \sum_{i\in \mathcal J}
  \proj{\psi_i}\enspace .$$ Then, $\rho \leq |\mathcal J|
  \cdot\rho^{mix}$.
\end{proposition}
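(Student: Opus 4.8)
The plan is to show that the difference $|\mathcal J|\cdot\rho^{mix} - \rho$ is positive semi-definite by exhibiting it as a sum (or integral) of positive operators. The cleanest route is the standard averaging trick: for a root of unity or a random phase assignment, the "cross terms" $\ketbra{\psi_i}{\psi_j}$ with $i\neq j$ average out, while the diagonal terms survive with the right multiplicity. Concretely, first I would let $\omega = e^{2\pi i/|\mathcal J|}$ and, for each $k\in\{0,1,\dots,|\mathcal J|-1\}$, define the vector $\ket{\Psi_k} := \sum_{i\in\mathcal J}\omega^{ki}\ket{\psi_i}$ (after fixing some enumeration $\mathcal J=\{1,\dots,|\mathcal J|\}$). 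Then
\begin{equation*}
  \sum_{k=0}^{|\mathcal J|-1} \proj{\Psi_k}
  = \sum_{k=0}^{|\mathcal J|-1}\sum_{i,j\in\mathcal J} \omega^{k(i-j)}\ketbra{\psi_i}{\psi_j}
  = \sum_{i,j\in\mathcal J}\Bigl(\sum_{k=0}^{|\mathcal J|-1}\omega^{k(i-j)}\Bigr)\ketbra{\psi_i}{\psi_j}
  = |\mathcal J|\sum_{i\in\mathcal J}\proj{\psi_i},
\end{equation*}
using the geometric-sum identity $\sum_{k}\omega^{k(i-j)} = |\mathcal J|$ if $i=j$ and $0$ otherwise.

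The key observation is then that the left-hand side dominates $\rho$: since $\ket{\Psi_0} = \sum_i \ket{\psi_i}$, we have $\proj{\Psi_0} = \rho$ exactly, and every other term $\proj{\Psi_k}$ is positive semi-definite, so $\sum_{k=0}^{|\mathcal J|-1}\proj{\Psi_k} \geq \proj{\Psi_0} = \rho$. Combining this with the displayed identity gives $|\mathcal J|\cdot\rho^{mix} = \sum_k \proj{\Psi_k} \geq \rho$, which is exactly the claim. I would also remark briefly on the connection advertised in the statement: since $|\mathcal J|\cdot\rho^{mix} - \rho \geq 0$ and $\trace{\rho^{mix}} = \sum_i\|\psi_i\|^2$ while $\rho$ and $\rho^{mix}$ have the same trace when the $\ket{\psi_i}$ are orthogonal, one gets that the superposition state $\rho/\trace{\rho}$ is close to the mixture in the sense made precise when $|\mathcal J|$ is small.

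I do not expect a serious obstacle here; the only mild subtlety is bookkeeping the enumeration of $\mathcal J$ so that the roots of unity are well-defined, and making sure the geometric-sum identity is applied with the correct normalization. An alternative that avoids complex phases entirely would be to write $|\mathcal J|\cdot\rho^{mix}-\rho = \tfrac12\sum_{i\neq j}(\ket{\psi_i}-\ket{\psi_j})(\bra{\psi_i}-\bra{\psi_j}) \geq 0$, which one checks by expanding the right-hand side: each pair contributes $\proj{\psi_i}+\proj{\psi_j}-\ketbra{\psi_i}{\psi_j}-\ketbra{\psi_j}{\psi_i}$, and summing over ordered pairs $i\neq j$ yields $(|\mathcal J|-1)\sum_i\proj{\psi_i} - \sum_{i\neq j}\ketbra{\psi_i}{\psi_j} = |\mathcal J|\rho^{mix} - \rho$. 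This second argument is arguably the most transparent and is the one I would present, keeping the root-of-unity version as a remark.
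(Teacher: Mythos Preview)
Your proof is correct, and both arguments you present are valid. The paper, however, takes a different and arguably even shorter route: it directly verifies $\bra{a}\bigl(|\mathcal J|\,\rho^{mix}-\rho\bigr)\ket{a}\geq 0$ for an arbitrary $\ket{a}\in\hilbert$ by applying the Cauchy--Schwarz inequality to the vectors $\sum_{i\in\mathcal J}\ket{i}$ and $\sum_{i\in\mathcal J}\braket{a}{\psi_i}\ket{i}$, which yields
\[
\Bigl|\sum_{i\in\mathcal J}\braket{a}{\psi_i}\Bigr|^2 \leq |\mathcal J|\sum_{i\in\mathcal J}|\braket{a}{\psi_i}|^2,
\]
i.e.\ $\bra{a}\rho\ket{a}\leq |\mathcal J|\,\bra{a}\rho^{mix}\ket{a}$.

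Your root-of-unity argument is a constructive refinement of the same idea: it realizes the difference $|\mathcal J|\,\rho^{mix}-\rho$ explicitly as the positive operator $\sum_{k\geq 1}\proj{\Psi_k}$, whereas the paper only checks positivity pointwise. Your second (pairwise) decomposition $|\mathcal J|\,\rho^{mix}-\rho=\tfrac12\sum_{i\neq j}(\ket{\psi_i}-\ket{\psi_j})(\bra{\psi_i}-\bra{\psi_j})$ is an elegant and entirely self-contained alternative that avoids both Cauchy--Schwarz and complex phases. Each approach buys something slightly different: the paper's is the most economical in length; your pairwise identity is the most elementary and gives an explicit rank-one decomposition of the slack; the Fourier-type argument generalizes most naturally (e.g.\ to averaging over a group of phases). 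Any of the three would be acceptable here.
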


\begin{definition}[Quantum ``Hamming Ball'']\label{def:quanthammingball}
  Let $\ket \Psi\in \hilbert^{\otimes n}$ for $n\in \naturals$ and let
  $r\in [n]$. We define the quantum \emph{Hamming ball} of radius $r$
  around $\ket \Psi$, denoted $\Delta_r(\ket\Psi)$, as the space
  spanned by all vectors of the form $U\ket \Psi$ where $U$ is a
  unitary that acts as the identity on at least $n-r$ subsystems.  

  For the special case where $\ket\Psi=\ket \nu^{\otimes n}$,
  \begin{equation*}
    \Delta_r(\ket\nu^{\otimes n}) = \spn\{ \pi( \ket \nu^{\otimes
      n-r}\otimes \ket u)\;:\; \ket u\in \mathcal{B}, \pi \in \symgrp n\}
  \end{equation*}
  where $\mathcal{B}$ is an orthonormal basis of $\hilbert^{\otimes
    r}$.
\end{definition}

The projector onto the quantum Hamming ball of radius $r$ around an
i.i.d.  state $\ket\nu^{\otimes n}\in \hilbert_{A_1}\otimes\dots
\otimes \hilbert_{A_n}$ can be written as
\begin{equation*}
  \mathbb P_{A^n}^{r, \ket\nu} = \sum_{E\subseteq [n]\;:\; |E|\leq r}
  \left(\bigotimes_{i\in E} (\id-\proj \nu)_{A_i} \bigotimes_{i\notin E} \proj
    \nu_{A_i}\right)\enspace .
\end{equation*}

The following Lemma says that $n$ i.i.d. copies of a state close to
$\ket \nu$ is almost entirely contained in a Hamming ball around $\ket
\nu^{\otimes n}$.
\begin{lemma}
  \label{lem:puriftauisideal}
  Let $\ket{\nu}, \ket \theta\in \hilbert$ be such that
  $|\braket{\theta}{\nu}|^2\geq 1-\epsilon$.
  Then, for any $\alpha>0$, 
  \begin{equation*}
    \trace{\mathbb P^{r, \ket\nu}\cdot
       \proj\theta^{\otimes n}}\geq 1-\exp(-2\alpha^2 n)
  \end{equation*}
  where $\mathbb P^{r, \ket\nu}$ is the projector onto
  $\Delta_{r}(\ket \nu^{\otimes n})$ for $r=
  (\epsilon+\alpha) n$.
\end{lemma}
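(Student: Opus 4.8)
The plan is to reduce the statement to a one-line application of Hoeffding's inequality. First I would plug the explicit formula for the Hamming-ball projector $\mathbb P^{r,\ket\nu}$ (the displayed sum over subsets $E\subseteq[n]$ with $|E|\le r$, which one checks is genuinely the orthogonal projector onto $\Delta_r(\ket\nu^{\otimes n})$ since the terms for distinct $E$ have mutually orthogonal ranges and each is itself a projector) into the trace and use that $\proj\theta^{\otimes n}=\bigotimes_{i\in[n]}\proj\theta_{A_i}$ is a product operator, so the trace factorizes across the $n$ tensor slots. For each $E$ the contribution is $\prod_{i\in E}\trace{(\id-\proj\nu)\proj\theta}\cdot\prod_{i\notin E}\trace{\proj\nu\proj\theta}=(1-|\braket{\nu}{\theta}|^2)^{|E|}\,(|\braket{\nu}{\theta}|^2)^{n-|E|}$, using $\trace{\proj\theta}=1$ and $\trace{\proj\nu\proj\theta}=|\braket{\nu}{\theta}|^2$. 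Writing $p:=1-|\braket{\nu}{\theta}|^2$ and collecting terms by $j=|E|$, this yields
\begin{equation*}
  \trace{\mathbb P^{r,\ket\nu}\cdot\proj\theta^{\otimes n}}
  = \sum_{j=0}^{\lfloor r\rfloor}\binom nj p^{j}(1-p)^{n-j}
  = \Pr\bigl[\,\mathrm{Bin}(n,p)\le r\,\bigr],
\end{equation*}
i.e.\ exactly the lower tail of a binomial random variable with success probability $p$.

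Next I would use the hypothesis $|\braket{\theta}{\nu}|^2\ge 1-\epsilon$, which says $p\le\epsilon$, so that $r=(\epsilon+\alpha)n\ge(p+\alpha)n$. Since the binomial CDF is monotone in its threshold and $\mathrm{Bin}(n,p)$ is integer-valued, $\Pr[\mathrm{Bin}(n,p)\le r]\ge\Pr[\mathrm{Bin}(n,p)\le(p+\alpha)n]$. Realizing $\mathrm{Bin}(n,p)$ as a sum of $n$ i.i.d.\ Bernoulli$(p)$ random variables (each taking values in $[0,1]$) and applying Hoeffding's inequality to its upper tail gives $\Pr[\mathrm{Bin}(n,p)\ge(p+\alpha)n]\le\exp(-2\alpha^2 n)$, hence $\Pr[\mathrm{Bin}(n,p)\le r]\ge 1-\exp(-2\alpha^2 n)$, which is the claim.

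There is no genuine obstacle here; the only points needing a little care are bookkeeping ones. The threshold $r=(\epsilon+\alpha)n$ need not be an integer, but this is harmless since $|E|\le r$ iff $|E|\le\lfloor r\rfloor$ and $\lfloor(\epsilon+\alpha)n\rfloor\ge\lfloor(p+\alpha)n\rfloor$, while the event $\{\mathrm{Bin}(n,p)\le(p+\alpha)n\}$ is unchanged by replacing its threshold with the floor. One should also note that the bound only improves when $p<\epsilon$, so it suffices to argue against the worst case $p=\epsilon$. No quantum-specific machinery beyond the product structure of $\proj\theta^{\otimes n}$ enters the argument; the content is entirely the classical concentration bound.
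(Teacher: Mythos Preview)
Your proposal is correct and follows essentially the same approach as the paper: both identify $\trace{\mathbb P^{r,\ket\nu}\proj\theta^{\otimes n}}$ with the probability that a sum of $n$ i.i.d.\ Bernoulli$(p)$ trials (with $p=1-|\braket\nu\theta|^2\le\epsilon$) does not exceed $r=(\epsilon+\alpha)n$, and then invoke Hoeffding's inequality. The paper phrases this identification in terms of measuring each copy of $\ket\theta$ with the two-outcome measurement $\{\proj\nu,\id-\proj\nu\}$, while you unpack the same computation more explicitly via the factorization of the trace; the content is identical.
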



\section{Sampling from a Quantum Population with a \emph{Mixed}
  Reference State}
\label{sec:samplingstrats}

The task we analyze can be understood as an interactive game between two
participants: a \emph{prover} Paul, and a \emph{sampler} Sam.
Paul is supposed to prepare multiple copies of some
{\em reference state} $\varphi$ before sending them to Sam, 
and the purpose of the game is for Sam to detect when the state produced by Paul is (close to) what it is supposed to be, no matter how maliciously Paul behaves. 
Here, the reference state $\varphi$ may be an arbitrary but known {\em mixed}
state. 
A canonical example of such a quantum sampling protocol  is depicted in
Fig.~\ref{fig:samplingprot}. It consists of
Sam asking Paul to deliver the purification registers 
of $k$ randomly chosen positions. Sam then measures these purifications 
in order to learn if they were in the right state.%
\footnote{Note that there is no loss in generality in announcing the positions that Sam wants to check {\em in one go} as is done in Fig.~\ref{fig:samplingprot}, compared to announcing them {\em one-by-one}; doing it the latter way only makes it harder for Paul. }

\begin{figure}[h]
  \begin{framed}
  \begin{center}
    \purifbased
  \end{center}
    \begin{enumerate}
    \item\label{step:StateDistr} Paul prepares $N$ copies of the
      purification $\ket{\varphi_{P S}}$ of $\varphi_{ S}$, he sends
      $N$ registers in state $\varphi_S$ labeled $S_1$ to $S_N$ to Sam
      and keeps the corresponding purification registers $P_1$ to
      $P_N$.
    \item \label{st:f048ht3} Sam picks a subset $t\subseteq [N]$ of
      size $k$ uniformly at random.
    \item Sam sends $t$ to Paul and asks him to send him the
      purification registers $P_i$ for $i\in t$. \label{purifsend} 
    \item\label{step:test} Sam measures each register $P_iS_i$ for $i\in
      t$ using projective measurement $\{\proj\varphi_{ P S},
      \id_{ P S}-\proj\varphi_{ P S}\}$. 
      Sam accepts if he observed $\proj\varphi^{\otimes k}$,
      otherwise, he rejects.\label{st:f0q4gh}
    \end{enumerate}
  \end{framed}
  
  \caption{The purification-based mixed state quantum sampling
    protocol with reference state $\varphi_{ S}$. Paul and Sam
    need to have previously agreed on a purification $\ket{\varphi_{ P
        S}}$ of $\varphi_{ S}$.}
  \label{fig:samplingprot}
\end{figure}

In the extreme case of a reference state that is empty on Paul's side,
and thus pure on Sam's side (and so there is no purification for Paul
to provide in step~\ref{st:f048ht3}), the sampling protocol of
Fig.~\ref{fig:samplingprot} pretty much coincides with the pure-state
sampling procedure considered and analyzed in~\cite{bouman-fehr}.
For a true mixed reference state, however, it is significantly
harder to prove that the sampling protocol ``does its job'' 
because of
the additional freedom that Paul has in preparing the purification
registers that may depend on the choice of $t$.  This very much renders the
techniques from~\cite{bouman-fehr} useless. Indeed, the idea of the
analysis in~\cite{bouman-fehr} was to assume, for the sake of the
argument, that the positions outside of $t$ are measured as well, and
then to delay the choice of $t$ to after the measurement so as to
reduce to a classical sampling procedure. Because of
Paul's freedom in choosing the purifications dependent on $t$, it
makes no sense to speak about the outcome of the reference measurement
$\{\proj\varphi, \id-\proj\varphi\}$ {\em before} $t$ is chosen, or
about the measurement being applied to a position {\em outside}
of~$t$.  As such, we need an entirely different approach.

Before worrying about analyzing the mixed-state sampling protocol of
Fig.~\ref{fig:samplingprot}, we first need to specify what it should
actually mean for it to ``do its job''; this is not entirely obvious.
Intuitively, we want that after the sampling, if Sam accepts
then his part of the state should be ``somehow close'' to what it is
supposed to be, namely $\varphi^{\otimes n}$ where we set $n = N-k$.
However, Paul can obviously cheat in a small number of positions,
i.e., start off with a state that consists of i.i.d. copies of
$\ket{\varphi}$ except for a small number of positions where the state
may deviate arbitrarily, and he still has a fair chance of not being
caught. Of course, the same holds for a mixture of such states, and
therefore, by purification, also for a superposition of such states.
This motivates the definition below of an ``ideal state'', which
captures the best we can hope for. The formal statement of what the
sampling protocol of Fig.~\ref{fig:samplingprot} achieves is then in
terms of controlling Sam's part of the state after the protocol by
means of Sam's part of such an ideal state. This is somewhat similar
in spirit as the approach in\cite{bouman-fehr} for pure-state
sampling, though there are some technical differences.

\begin{definition}[Ideal States]\label{idealdef}
 For $\epsilon>0$, 
  a state $\psi_{S^n}\in \mathcal{D}_{\leq}(\hilbert_{S}^{\otimes n})$ is said to be \emph{$\epsilon$--ideal} if there
  exists a  purification $\ket{\psi_{RP^nS^n}}$ of $\psi_{S^n}$
   such that
    \begin{equation*}
      \ket{\psi}_{RP^nS^n}\in \hilbert_R\otimes
        \Delta_{\epsilon n}(\ket\varphi^{\otimes n}_{P^nS^n} )\enspace.
      \end{equation*}
      We loosely say that  $\psi_{S^n}$ is \emph{ideal} when it is 
      $\epsilon$--ideal for \emph{small}  $\epsilon$.
\end{definition}

Our analysis of the sampling protocol described in
Fig.~\ref{fig:samplingprot} (and some variants of it) preserves many
aspects of the operational interpretation provided in
\cite{bouman-fehr} when sampling with respect to a pure reference
state.  We establish that Sam's \emph{subnormalized} final state of
register $S^n$ upon acceptance is overwhelmingly close to an ideal
state. The subnormalized state is simply the state Sam is left with
when he accepts scaled down by the probability of acceptance (i.e. its
trace corresponds to the probability for Sam to accept).
Let $d:=\dim{(\hilbert_{S})}$ be the size of the register holding  $\varphi_S$ and
let  $\epsilon>0$ be a parameter. 
Informally, our main theorem 
(Theorem~\ref{thm:mainresultunpermuted} and Corollary~\ref{thm:mrupperbound})
establishes that Sam's subnormalized final state upon acceptance 
$\rho_{S^n}^\acc\in \mathcal{D}_{\leq}(\hilbert_{S}^{\otimes n})$
is such that
\begin{equation}\label{intuitmain}
\rho_{S^n}^\acc \leq (N+1)^{d^2-1}\psi_{S^n}+\sigma_{S^n}\enspace,
\end{equation}
where ${\psi}_{S^n}$ is ideal 
and $\|\sigma_{S^n}\|_1$ is negligible in $N$.

Any state $\rho_{S^n}^\acc$ that satisfies (\ref{intuitmain}) can be
considered to be an ideal state in many applications.  Let
$\mathcal{Q}$ be a completely positive trace non-increasing
super-operator modelling a task that we would like to apply upon
$\rho_{S^n}^\acc$. Suppose that $\mathcal{Q}$ behaves nicely when it
is executed from an ideal state $\psi_{S^n}$. That is, the bad event
represented by a POVM element $E_{\text{bad}}$ has negligible
probability on the ideal state $p^{\text{id}}_{\text{bad}} :=
\trace{E_{\text{bad}} \mathcal{Q}(\psi_{S^n})} \leq 2^{-\alpha N}$ for
$\alpha>0$.  Running $\mathcal{Q}$ upon $\rho_{S^n}^\acc$ instead
produces the state $\mathcal{Q}(\rho_{S^n}^\acc) \leq
\mathcal{Q}((N+1)^{d^2-1}\psi_{S^n} + \sigma_{S^n})$.  We then have
that the probability of the bad event in the real case is
$p^{\text{real}}_{\text{bad}} := \trace{E_{\text{bad}}
  \mathcal{Q}(\rho_{S^n}^\acc)} \leq
(N+1)^{d^2-1}p^{\text{id}}_{\text{bad}}+ \|\sigma_{S^n}\|_1$, which
remains negligible when $p^{\text{id}}_{\text{bad}}$ is negligible and
$d$ is small enough (i.e. a constant).  In other words, any negligible
upper bound on the probability of some ``bad'' event occurring when
processing the ideal state translates to a negligible upper bound on
the ``bad'' event when processing the real state instead.  In these
cases, it is good enough to analyze the ideal state, for which an
analysis is typically simpler because of the specific form of the
state as given by Definition~\ref{idealdef}.

Our main result can also be interpreted 
as a statement about Paul and Sam's joint state 
when Sam accepts. To do so, we invoke Proposition~\ref{prop:leqequivpostsel}
upon (\ref{intuitmain}). For the sake of simplicity, assume that 
$\rho_{S^n}^\acc \leq c\cdot \psi_{S^n}$, which is essentially
what (\ref{intuitmain}) means
for $c :=(N+1)^{d^2-1}$. Proposition~\ref{prop:leqequivpostsel} 
then establishes the existence of  a linear operator
$A$ acting upon registers $R P^n$ for which $A^\dagger A\leq \id$
such that
\begin{equation}\label{yetanotherinterpretation}
 \ket{\rho^\acc}_{R P^n S^n}  = \sqrt{c} (A\otimes \id_{S^n})\ket{\psi}_{R P^n S^n} \enspace,
\end{equation}
where $\ket{\rho^\acc}_{R P^n S^n}$ and 
$\ket{\psi}_{R P^n S^n}$ are purifications of
$\rho^\acc_{S^n}$ and $\psi_{S^n}$, respectively.
The operator $E:=AA^{\dagger}$ can be viewed as 
the outcome of a POVM applied upon registers
$R P^n$ implemented by the detection operator $A$. 
It  follows from (\ref{yetanotherinterpretation})
that $\rho^\acc_{R P^n S^n}$ can be obtained with a non-negligible probability of success $1/c$ by applying a measurement upon an ideal state $\psi_{R P^n S^n}$.
Therefore, any application having a negligible probability for Paul
to generate a \emph{bad} shared state from an ideal one has also a negligible probability 
to generate a \emph{bad} shared state from the real one.

\subsection{Sampling Protocol Using Local Measurements and Classical
  Communication}
\label{sec:EPRsampling}

Our analysis of mixed state sampling protocols is not limited to the
protocol of Fig.~\ref{fig:samplingprot}. In Sect.~\ref{sec:mr}, we
show that any sampling protocol that satisfy certain criteria can be
analyzed using our techniques. One such protocol is the one depicted
in Fig.~\ref{fig:EPRsampling}. It is a protocol for certifying that
Paul prepares---and purifies---halves of EPR pairs that requires
only local operations and classical communication (LOCC) after the
initial state preparation and distribution phase. EPR pairs are states
of the form $\ket{\Phi^+}=\frac 1{\sqrt 2}(\ket {00}+\ket {11})$ that
have the unique property that measurements in both the computational
and diagonal bases are perfectly correlated. The protocol exploits
this fact in the following way: for each position in the sample, Sam
asks Paul for the result of measuring his purifying register in a
random basis, and checks that this result corresponds to his own
measurement in the same basis.

\begin{figure}[h]
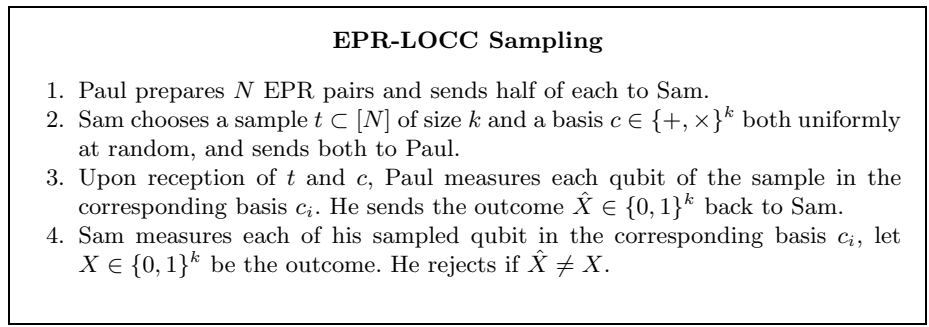

  \centering
  \begin{framed}
    \eprlocc
    \begin{enumerate}
    \item Paul prepares $N$ EPR pairs and sends half of each to Sam. 
    \item Sam chooses a sample $t\subset[N]$ of size
      $k$ and a basis $c\in\{+,\times \}^k$ both uniformly at random,
      and sends both to Paul.
    \item  Upon reception of $t$ and $c$, Paul
      measures each qubit of the sample in the corresponding basis
      $c_i$. He sends the outcome $\hat X\in \bool^k$ back to Sam.
    \item  Sam measures each of his sampled qubit
      in the corresponding basis $c_i$, let $X\in \bool^k$ be the
      outcome. He rejects if $\hat X\neq X$.
    \end{enumerate}
  \end{framed}
  \caption{The sampling protocol with local measurements for sampling
    halves of EPR pairs, i.e. with reference state $\varphi=\frac \id
    2$.}
  \label{fig:EPRsampling}
\end{figure}




\section{Main Result}
\label{sec:mr}

In this section, we present the techniques that allow to analyze
sampling protocols similar to that of Fig.~\ref{fig:samplingprot}. The
key property of the sampling protocol that makes the tools of this
section applicable is that it is invariant under the permutation of
the sampler's register, up to an adjustment of the adversary's attack
and of the output state.  In order to make this more explicit, we
actually consider and analyze a general class of sampling protocols
that are permutation invariant and perform well on i.i.d. states, and
we then show (1) that the protocol of Fig.~\ref{fig:samplingprot}
falls into that class and (2) that any protocol from that class allows
us to control the post-sampling state the way we want. As an
additional bonus of this modular analysis is that we can then easily
extend our results to other sampling protocols. For instance, the
sampling protocol of Fig.~\ref{fig:EPRsampling} for certifying EPR
pairs presented in Sect.~\ref{sec:EPRsampling} also falls into the
class of protocols that we consider. In that protocol, Paul is not
asked to provide his respective parts of the EPR pairs from within the
sampled subset, but he is instead asked to provide the {\em
  measurement outcome} of those, when measured in a random basis
chosen and announced by Sam, and Sam compares with the corresponding
measurement outcomes on his side.

\subsection{Mixed State Sampling Protocols and Permutation Invariance}
\label{sec:perminvsampling}

The general form of the sampling protocols we consider is depicted in
Fig.~\ref{fig:generalsampling}. For simplicity, we assume that the
protocol always outputs the same number of qudits $n=N-k$, i.e. that
it lives in the Hilbert space $\hilbert_S^{\otimes n}$.  Note that
this means that there is no freedom in the way we choose the sample
$t$; the only permutation invariant probability distribution on the
subsets of $[N]$ of size $k$ is the uniform distribution. We also
assume that $k$ is of the order of $N$.

\begin{figure}[h]
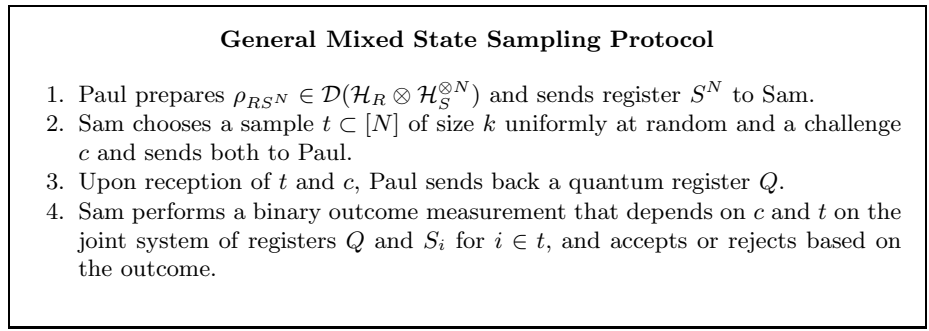

  \begin{framed}
    \begin{center}
      \generalsampling
    \end{center}
    \begin{enumerate}
    \item Paul prepares $\rho_{RS^N}\in \density{\hilbert_R\otimes
        \hilbert_S^{\otimes N}}$ and sends register $S^N$ to Sam.
    \item \label{st:pickc} Sam chooses a sample $t\subset[N]$ of size
      $k$ uniformly at random and a challenge $c$ and sends both to
      Paul.
    \item \label{st:sendpprime} Upon reception of $t$ and $c$, Paul
      sends back a quantum register $Q$.
    \item \label{st:measurec} Sam performs a binary outcome
      measurement that depends on $c$ and $t$ on the joint system of
      registers $Q$ and $S_i$ for $i\in t$, and accepts or rejects
      based on the outcome.
    \end{enumerate}
  \end{framed}
  \caption{The general form of a mixed state sampling protocol for
    sampling a mixed reference state $\varphi$.}
  \label{fig:generalsampling}
\end{figure}

The obvious example instantiation of such a sampling protocol is the
sampling protocol of Fig.~\ref{fig:samplingprot}, where $c$ is empty
and Sam's measurement
consists of projecting onto $\proj\varphi^{\otimes k}$. 
Another example is the one we discuss in Section~\ref{sec:EPRsampling}
for certifying EPR pairs, where
$c$ then is a randomly chosen sequence of bases that specifies how Paul is supposed to measure his parts of the EPR pairs. 

Clearly, for a given instantiation of the general protocol of
Fig.~\ref{fig:generalsampling}, the adversary's attack strategy consists
of the choice of $\rho_{RS^N}$ and of the quantum operation (that
depends on $t$ and $c$) that produces $Q$ in
step~\ref{st:sendpprime}. 

We now define the notion of permutation invariance that sampling
strategies must satisfy for our techniques to apply.

\begin{definition}[Permutation Invariance for Sampling Protocols]
  \label{def:perminv}
  A sampling protocol that implements the framework of
  Fig.~\ref{fig:generalsampling} is \emph{invariant under the
    permutation of the sampler's register} if for any adversarial
  strategy for Paul, the completely positive trace non-increasing map
  $\Eacc$, which represents the output
  state of the sampler when he accepts, 
  satisfies
  \begin{enumerate}
  \item for any input $\rho_{RS^N}\in \density{\hilbert_R\otimes
        \hilbert_S^{\otimes N}}$ there exists $\barEacc$ such that
    \begin{equation}
      \label{eq:0q4gh0awfe}
      \frac 1{n!}\sum_{\pi\in \symgrp n} \proj\pi_\Pi\otimes
      \pi_{S^n} \Eacc(\rho_{RS^N})
      \pi_{S^n}^\dagger=
      \barEacc(\bar \rho_{P^NS^N})
    \end{equation}
    for some symmetric purification $\ket{\bar \rho_{P^N
        S^N}}\in\sym^N(\hilbert_P\otimes \hilbert_S)$ of $ \frac
    1{N!}\sum_{\pi\in \symgrp N} \linebreak \pi_{S^N} \rho_{S^N}
    \pi_{S^N}^\dagger$,
  \item for any $\epsilon>0$, $\|\barEacc(\proj\theta^{\otimes N})\|_1 \leq
    \exp(-\Omega(N))$ whenever $F(\theta_{S}, \varphi_{
      S})^2< 1-\epsilon$, and
  \item $\barEacc$ acts trivially on the unsampled systems, up to
    reordering. Formally, $\barEacc$ satisfies
    \begin{equation*}
      \label{eq:outputnontouche}
      \trace[\Pi]{\barEacc(\proj\theta^{\otimes N}_{PS})} \leq
      \theta^{\otimes n}_S \enspace .
    \end{equation*}

  \end{enumerate}
\end{definition}
The first
criterion effectively requires that 
any attack against the sampling protocol of
Fig.~\ref{fig:generalsampling} 
can be transformed into an \emph{equivalent} attack on a symmetric
state\,---\,up to a random reordering of the positions.
The second criterion demands that Bob rejects with overwhelming
probability in case of an ``obviously bad'' state, i.e., in case of
i.i.d. copies of a state that is far from the reference
state~$\varphi$.  The third criterion simply asks that the sampling
protocol (and the corresponding symmetrized map $\barEacc$) does not measure registers
outside the sample.

From a technical perspective, the first criterion allows us to apply
the observations from Section~\ref{ssec:symdef} to the promised
symmetric state, so that we can upper bound the latter by a convex
linear combination of i.i.d. states, and the second criterion then
allows us to control the ``bad part'' of this convex linear
combination (see Section~\ref{sec:symmetricproof}). What then still
turns out to be cumbersome to deal with is the random permutation,
which got introduced by the first criterion, and to get a bound on the
actual state $\Eacc(\rho_{RS^N})$ instead; we show how to do this in
Section~\ref{sec:unpermute}.

We point out that the ``cheap'' way to deal with the random
permutation would be to simply modify the sampling protocol by
\emph{really} permuting the registers at the end of the protocol, so
that the permuted state {\em is} the final state after the sampling
protocol.  Besides being esthetically less appealing, because it would
mean a less natural and more complicated sampling protocol than really
necessary, this would also give more freedom to the party who chooses
the permutation in choosing it adversarially. For instance, in our
application in Section~\ref{sec:cointoss}, where the final state is
used to produce a high min-entropy source, we cannot allow that either
player can rearrange the registers and so, say, move the zero-outputs
into the positions he wants them to be.

\subsection{Permutation Invariance of our Sampling Protocols}

As a first step in analyzing the sampling protocol \purifbased of
Fig.~\ref{fig:samplingprot}, we show that it satisfies the above
definition of permutation invariance. Given that Sam's actions are
obviously symmetric with respect to permuting his registers, this is
probably not very surprising; spelling out the details though still
turns out to be somewhat cumbersome. We therefore move the proof to
Sect.~\ref{sec:purifscheme} and simply give a high-level proof
sketch below.

\begin{proposition}
  \label{prop:perminvofpurifsampling}
  The protocol \purifbased of Fig.~\ref{fig:samplingprot} satisfies
  Definition~\ref{def:perminv}.
\end{proposition}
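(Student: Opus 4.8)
The plan is to make the permutation symmetry of the protocol of Fig.~\ref{fig:samplingprot} explicit on both sides of Definition~\ref{def:perminv} and then match them. First I would fix an arbitrary strategy for Paul: a state $\rho_{RS^N}$ — which, by a Stinespring dilation and since extra registers only help the prover, may be taken to be a pure state, written $\ket{\rho_{RS^N}}$, with $R$ the whole purifying system — together with, for every sample $t\subseteq[N]$ of size $k$, an isometry $\mathcal A_t:R\to R'Q$ describing how he produces the register $Q$ returned in step~\ref{purifsend}. Composing with Sam's test, the accept map becomes
\[
  \Eacc(\rho_{RS^N})=\frac{1}{\binom{N}{k}}\sum_{t}\eta_t\!\left[\trace[R'QS_t]{\bigl(\id\otimes\bigotimes_{i\in t}\proj\varphi_{Q_iS_i}\bigr)\,[\mathcal A_t\otimes\id_{S^N}]\bigl(\proj{\rho_{RS^N}}\bigr)}\right],
\]
where $\eta_t$ relabels the surviving registers $S_{\bar t}$ to $S^n=S_1\cdots S_n$ along the order-preserving bijection $\bar t\to[n]$; note that $\trace{\Eacc(\rho_{RS^N})}$ is exactly Sam's acceptance probability.

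For the first and substantive criterion I would set $\sigma_{S^N}:=\frac{1}{N!}\sum_{\pi\in\symgrp N}\pi_{S^N}\rho_{S^N}\pi_{S^N}^\dagger$, take the symmetric purification $\ket{\bar\rho_{P^NS^N}}\in\sym^N(\hilbert_P\otimes\hilbert_S)$ supplied by Remark~\ref{lem:purif}, and design $\barEacc$ so that running it on $\bar\rho$ reproduces the output-symmetrized real protocol. The construction has three ingredients. (i) A \emph{purification transfer}: $\frac{1}{\sqrt{N!}}\sum_{\pi\in\symgrp N}\ket{\pi}_{\Pi'}\otimes(\id_R\otimes\pi_{S^N})\ket{\rho_{RS^N}}$ is a purification of $\sigma_{S^N}$ with purifying register $\Pi'R$, hence is related to $\ket{\bar\rho}$ by a (partial) isometry on the purifying side, which $\barEacc$ applies — appending a fresh ancilla if dimensions require. (ii) After measuring $\Pi'$ to learn $\pi$, $\barEacc$ now holds Paul's state with Sam's register permuted by $\pi$; it runs Paul's attack with the sample re-indexed through $\pi$, i.e.\ it applies $\mathcal A_{\pi^{-1}(t)}$ and Sam's projective test to the $\pi$-displaced sample, keeping the accept branch. (iii) It relabels the $n$ surviving $S$-registers, applies a uniformly random $\sigma\in\symgrp n$, and records $\sigma$ in $\Pi$. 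Verifying that $\barEacc(\bar\rho)$ equals the left-hand side of the first criterion then reduces to the combinatorial observation that a uniform pair (permutation $\pi\in\symgrp N$, injection $g:[n]\hookrightarrow[N]$) is in bijection with a uniform pair (sample $t$, output permutation $\sigma\in\symgrp n$), so the $\symgrp N$-symmetry baked into $\bar\rho$ cancels exactly against the random sample and the output reordering, with no leftover normalization factor. Keeping straight which permutation acts where — on $S^N$, on $\bar\rho$'s $P^N$, on the sample, and on the output block, all at once and without creating spurious correlations — is what I expect to be genuinely cumbersome; conceptually there is no surprise, since Sam's test is manifestly invariant under permuting his registers.

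The two remaining criteria follow from the explicit shape of $\barEacc$. By construction it performs no measurement on the $n$ unsampled $S$-registers — every projector it applies lives on the sampled positions or on the purifying side — so for a product input $\proj\theta^{\otimes N}_{PS}$, tracing out $\Pi$ together with the sampled and returned registers can only shrink the $S^n$-marginal below its value before any test; with Remark~\ref{rem:fq09h4g} and the permutation-invariance of $\theta_S^{\otimes n}$ this gives $\trace[\Pi]{\barEacc(\proj\theta^{\otimes N}_{PS})}\le\theta_S^{\otimes n}$, which is the third criterion. For the second criterion, $\|\barEacc(\proj\theta^{\otimes N}_{PS})\|_1$ is the acceptance probability of the symmetrized test on the i.i.d.\ state $\theta^{\otimes N}_{PS}$; a short fidelity computation — bounding the joint accept projector on the sampled $(Q,S)$-pairs via $\trace{\proj\varphi_{QS}\,\omega_{QS}}\le F(\varphi_S,\omega_S)^2$ (Uhlmann) and invoking $F(\varphi_S,\theta_S)^2<1-\epsilon$ — forces the acceptance probability to be at most $(1-\epsilon)^{\Omega(N)}=\exp(-\Omega(N))$ since $k=\Theta(N)$, which is the second criterion.
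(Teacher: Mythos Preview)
Your overall strategy --- purification transfer from $\bar\rho$ to the ``obvious'' purification carrying a classical copy of $\pi$, then simulating Paul's real attack on the permutation-shifted sample --- is exactly the paper's, and your treatment of criteria~2 and~3 matches theirs. There is, however, a concrete error in step~(iii) that breaks criterion~1. After steps~(i)--(ii) the surviving $S$-registers already carry a permutation $\mu\in\symgrp n$ determined by $\pi$ (and the sample), since you are working in the $\pi$-permuted copy of $S^N$. If you now apply a \emph{fresh} uniformly random $\sigma$ and record only $\sigma$ in $\Pi$, the permutation actually acting on $S^n$ is $\sigma\mu$, not $\sigma$; averaging gives
\[
  \barEacc(\bar\rho)\;=\;\frac{1}{n!}\sum_{\sigma}\proj{\sigma}_\Pi\otimes \sigma\Bigl(\frac{1}{n!}\sum_{\mu}\mu\,\Eacc(\rho_{RS^N})\,\mu^\dagger\Bigr)\sigma^\dagger,
\]
which equals the left-hand side of criterion~1 only when $\Eacc(\rho_{RS^N})$ is already permutation invariant --- not true for a general adversary. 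Relatedly, the set-theoretic ``bijection'' you invoke cannot hold: $|\symgrp N|\cdot|\{\text{injections }[n]\hookrightarrow[N]\}|=N!\cdot N!/k!$, whereas $\binom{N}{k}\cdot n!=N!/k!$.

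The paper avoids this by introducing \emph{no} fresh randomness in $\barEacc$. It fixes the sampled block to positions $[k]$ and uses the exact decomposition $\symgrp N\ni\pi\leftrightarrow(t_\pi,\tau^\pi,\bar\tau^\pi)\in\binom{[N]}{k}\times\symgrp k\times\symgrp n$ with $t_\pi=\pi^{-1}([k])$: the real attack $\mathcal A_{t_\pi}$ is applied, the returned $P^k$ is reordered by $\tau^\pi$ so that each $P_i$ meets the correct sampled $S_i$ (a re-alignment step you also leave implicit), and the permutation recorded in $\Pi$ is $\bar\tau^\pi$ --- precisely the one already sitting on the output. The $\symgrp k$-factor drops because $\proj\varphi^{\otimes k}$ is permutation invariant, and the remaining sum over $(t_\pi,\bar\tau^\pi)$ reproduces $\frac{1}{n!}\sum_{\bar\tau}\proj{\bar\tau}_\Pi\otimes\bar\tau\,\Eacc(\rho_{RS^N})\,\bar\tau^\dagger$ on the nose. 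Replacing your step~(iii) by ``compute $\mu$ from $\pi$ and record $\mu$ in $\Pi$'' (and dropping the fresh $\sigma$) repairs your argument along these lines.
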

\begin{proof}[sketch]
  For the first criterion, we need to argue that any adversary against
  the real sampling protocol can be adapted into an adversary against a
  symmetrized version of the protocol that will yield the same output
  state, up to a random permutation.  

  We first observe that when sampling from a permutation invariant
  operator, it doesn't matter which registers we sample from since the
  reduced density operator of any subset of $k$ registers is the same,
  i.e. $\rho_{S_t}= \rho_{S_{t'}}$ for any $t,t'\subseteq [N]$ of size
  $k$. Therefore we can make the simplifying assumption that we always
  sample from the first $k$ registers of $S^N$.

  We construct the symmetric adversary: from the symmetric state $\bar
  \rho_{P^NS^N}$ from the first criterion of
  Definition~\ref{def:perminv}, the adversary will compute the
  permutation $\pi\in \symgrp N$ applied on $S^N$. This permutation defines
  the set $t_\pi\subset[N]$ of positions to which $\pi$ sends
  positions $1,\dots, k$. The symmetric adversary will then simulate
  the real adversary on this sample $t_\pi$ and will permute the
  output according to $\pi$ before sending it to Sam (such that each
  register sent by the adversary aligns with the corresponding
  register on Sam's side).

  The second criterion follows from the observation that the maximal
  probability of measuring $\proj\varphi^{\otimes k}$ in the sampling
  protocol on input $\proj\theta^{\otimes N}$ is the fidelity between
  $\theta^{\otimes k}$ and $\varphi^{\otimes k}$ which is negligible
  in $k$ when $F(\theta_{ S}, \varphi_{ S})^2< 1-\epsilon$.

  The third criterion follows from the fact that the unsampled
  positions are untouched in both the real and the symmetrized
  protocols.\qed
\end{proof}

The following proposition allows us to apply the techniques of this
section to the LOCC sampling protocol presented in
Fig.~\ref{fig:EPRsampling}. Its proof can be found in
Sect.~\ref{sec:proofeprsampl}.

\begin{proposition}
  \label{prop:EPRsampling}
  The sampling protocol \eprlocc from Fig.~\ref{fig:EPRsampling}
  satisfies Definition~\ref{def:perminv}.
\end{proposition}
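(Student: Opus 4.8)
The plan is to verify the three criteria of Definition~\ref{def:perminv} for the \eprlocc protocol, following closely the pattern already used for the \purifbased protocol in Proposition~\ref{prop:perminvofpurifsampling}. The reference state here is $\varphi_S = \tfrac{\id}{2}$, purified by an EPR pair $\ket{\Phi^+}_{PS}$, and the challenge $c$ is now non-trivial: it is the vector of bases in $\{+,\times\}^k$ that Sam announces together with $t$. The key structural observation is that Sam's verification in step~4 is a projective measurement on $Q S_t$ (where $Q$ is Paul's claimed outcomes $\hat X$, which we may treat as a classical register, embedded quantumly): for each $i\in t$ Sam measures $S_i$ in basis $c_i$, obtains $X_i$, and projects onto $X_i = \hat X_i$. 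This whole operation is symmetric under simultaneously permuting the sampled positions, the coordinates of $c$, and the coordinates of $\hat X$.

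For the \textbf{first criterion}, I would reuse the argument from the proof sketch of Proposition~\ref{prop:perminvofpurifsampling} essentially verbatim. Start from an arbitrary attack $(\rho_{RS^N}, \text{Paul's response map})$. Since we may symmetrize Sam's choice of $t$ and the reduced state on any $k$ registers of a permutation-invariant operator is the same, restrict attention to the sample $t = \{1,\dots,k\}$. Let $\ket{\bar\rho_{P^N S^N}} \in \sym^N(\hilbert_P \otimes \hilbert_S)$ be a symmetric purification of the permutation-averaged state of Paul; this exists by Remark~\ref{lem:purif}. The symmetrized adversary reads off from its register (or from the global purification) a permutation $\pi \in \symgrp N$, sets $t_\pi = \pi(\{1,\dots,k\})$, simulates the real adversary's measurement-in-basis-$c$ response on positions $t_\pi$ using the bases $c$ that Sam sent (permuting the basis labels through $\pi$ accordingly), produces the claimed outcomes $\hat X$, and then permutes everything back by $\pi$ so that the register it hands to Sam aligns with Sam's sampled positions. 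The resulting $\barEacc$ satisfies~\eqref{eq:0q4gh0awfe} by the same bookkeeping as in the \purifbased case — the only new ingredient is carrying the basis vector $c$ along with the permutation, which is routine since $c$ is applied coordinatewise and commutes with relabelling.

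For the \textbf{second criterion}, I would compute the maximal acceptance probability on an i.i.d.\ input $\proj\theta^{\otimes N}$. On each sampled position $i$, Sam picks $c_i \in \{+,\times\}$ uniformly and accepts iff Paul's announced bit equals Sam's measurement outcome; the best Paul can do on a fixed single-qubit state $\theta_S$ is, for each basis, to announce the more likely outcome. Averaging over the two bases, the per-position acceptance probability is $\tfrac12\bigl(p_{\max}^+(\theta) + p_{\max}^\times(\theta)\bigr) \le 1 - \Omega(1)$ whenever $F(\theta_S, \tfrac{\id}{2})^2 = \tfrac12\bigl(1 + \|\theta_S - \tfrac\id2\|\text{-type quantity}\bigr)$ — more precisely, one shows this per-position probability is bounded away from $1$ exactly when $F(\theta_S, \varphi_S)^2 < 1-\epsilon$, using that $F(\theta, \tfrac\id2)^2 = \tfrac12 + \tfrac12\sqrt{1-\det\text{ stuff}}$; the cleanest route is to note that announcing outcomes that always match a BB84-type measurement forces $\theta$ close to maximally mixed. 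Since the positions are independent i.i.d.\ copies and $|t| = k = \Omega(N)$, the overall acceptance probability is at most $\bigl(1-\Omega(1)\bigr)^k = \exp(-\Omega(N))$, which bounds $\|\barEacc(\proj\theta^{\otimes N})\|_1$. The \textbf{third criterion} is immediate: in \eprlocc Sam only ever measures his sampled qubits $S_t$, and (after symmetrization) the unsampled registers $S_{\bar t}$ are passed through untouched, so $\trace[\Pi]{\barEacc(\proj\theta^{\otimes N}_{PS})} \le \theta^{\otimes n}_S$.

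The \textbf{main obstacle} I anticipate is the second criterion — specifically, pinning down the exact relationship between the per-position cheating probability of the ``announce-and-match'' test and the fidelity condition $F(\theta_S, \varphi_S)^2 < 1-\epsilon$, so that the $\Omega(1)$ gap is genuinely controlled by $\epsilon$. This requires a short single-qubit calculation optimizing Paul's announcement over the two BB84 bases against an arbitrary $\theta_S$; everything else (the permutation bookkeeping in criterion~1, and criterion~3) is a direct transcription of the \purifbased argument, with the basis vector $c$ playing the role that was trivial there.
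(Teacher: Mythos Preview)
Your plan for the first and third criteria is correct and essentially identical to the paper's argument: the permutation bookkeeping (now carrying the basis string $c$ through the relabelling) and the observation that unsampled qubits are untouched are exactly what the paper does.

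There is, however, a real gap in your treatment of the second criterion. You write that ``the best Paul can do on a fixed single-qubit state $\theta_S$ is, for each basis, to announce the more likely outcome,'' and you propose to bound the per-position acceptance by $\tfrac12\bigl(p_{\max}^+(\theta_S) + p_{\max}^\times(\theta_S)\bigr)$. This is not Paul's optimal strategy: in the symmetrized map $\barEacc$ the input is $\proj{\theta_{PS}}^{\otimes N}$, so the adversary holds the purifying register $P$ of each copy and can \emph{measure} it to predict Sam's outcome. For instance, if $\theta_S = p\proj0 + (1-p)\proj1$ with $p\neq\tfrac12$, your classical bound gives $\max(p,1-p)<1$ in the computational basis, but Paul's collapsed states on $P$ are orthogonal and he guesses with probability $1$ in that basis. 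So the quantity you propose to bound is simply the wrong one.

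The paper handles this correctly: the per-position guessing probability in basis $c$ is governed by the distinguishability of the \emph{collapsed states on Paul's side}, $\ket{\theta^0_P}=(\id_P\otimes\bra0_S)\ket{\theta_{PS}}$ and $\ket{\theta^1_P}$ (resp.\ $\ket{\theta^\pm_P}$). The paper then shows via a short Hadamard-matrix identity that
\[
\bigl|\braket{\theta^0_P}{\theta^1_P}\bigr| + \bigl|\braket{\theta^+_P}{\theta^-_P}\bigr| \;\ge\; \tfrac12\,|\lambda_0-\lambda_1|,
\]
where $\lambda_0,\lambda_1$ are the Schmidt coefficients of $\ket{\theta_{PS}}$, and relates $|\lambda_0-\lambda_1|$ to $\|\theta_S - \tfrac{\id}{2}\|_1 \ge 2\epsilon$. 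Hence at least one pair of collapsed states has overlap $\ge\epsilon/2$, so in at least one basis Paul's guessing probability is bounded strictly below $1$ by a function of $\epsilon$, giving $\gamma < 1$ and thus $\gamma^k = \exp(-\Omega(N))$. Your single-qubit calculation needs to be this one, not an optimization over classical announcements against $\theta_S$ alone.
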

\begin{proof}[sketch]
  We need to argue that the protocol is permutation invariant in the
  sense of Definition~\ref{def:perminv}, and that it performs well on
  i.i.d. states. The first part follows from the permutation
  invariance of the choice of $t$ and $c$ and of the measurement on
  the sampler's qubits. Suppose Sam was to permute his register with
  $\pi\in \symgrp N$ before performing the sampling. Then we can
  modify the adversary such that it attacks the sampling protocol with
  this new ordering of Sam's register: if Sam chooses sample $t$,
  announce $\pi(t)$ to Paul instead, the same goes for $c$. Let $x$ be
  Paul's message to Sam, then permute $x$ such that it aligns
  correctly with the corresponding qubits on Sam's register.  The
  probability of accepting is exactly the same and the output of the
  protocol will be shuffled according to $\pi$'s action on the
  unsampled qubits.

  The second criterion follows from the fact that the only state that
  is perfectly correlated in both the computational and the diagonal
  bases is the EPR pair $\ket{\Phi^+}$. Therefore if all of Paul and
  Sam's measurement outcomes are perfectly correlated in the randomly
  chosen basis, it should hold that they shared states close to
  perfect EPR pairs. More precisely, if they share a state
  $\ket\theta^{\otimes N}$ where each $\theta$ has fidelity at most
  $1-\epsilon$ with $\ket{\Phi^+}$, then their outputs cannot be
  perfectly correlated in at least one of the bases, except with
  negligible probability.  The third criterion follows trivially from
  the fact that the unsampled qubits are not measured or acted upon.
  \qed
\end{proof}

\subsection{Proof of Sampling Against Symmetric Adversaries}
\label{sec:symmetricproof}

By considering sampling protocols that are permutation invariant in the
sense of Definition~\ref{def:perminv}, we can use the specific
properties of symmetric states to upper-bound the failure probability
of such protocols for symmetric adversaries (adversaries which prepare a
state $\ket{\bar \rho_{P^N S^N}}$ that lives in the symmetric subspace
$\sym^N(\hilbert_P\otimes \hilbert_S)$).

Lemma~\ref{lem:postsamplupperbound} below shows that since symmetric
states are approximated by a mixture of i.i.d. states, then the output
of the sampling executed on such a mixture is approximated by a
mixture of states i.i.d. in states that are close to the reference
state $\varphi$.

\begin{lemma}\label{lem:postsamplupperbound}
  Let $\Eacc$ be the output of a sampling protocol that satisfies
  Definition~\ref{def:perminv} and let $\rho_{RS^N}\in
  \density{\hilbert_R\otimes \hilbert_S^{\otimes N}}$. For any
  $\epsilon > 0$ there exists a subnormalized measure $d\theta_S$ on
  the set of mixed states $\theta_S \in \density{\hilbert_S}$ which
  satisfy $F(\theta_{ S}, \varphi_{ S})^2\geq 1-\epsilon$ and an
  operator $\tilde\sigma_{S^n}$ such that
  \begin{equation}\label{eq:postsamplupperbound}
    \frac 1{n!} \sum_{\pi\in \symgrp n} \pi_{S^n}\Eacc(\rho_{RS^N})
    \pi_{S^n}^\dagger \leq c_{N,d^2}\cdot \int \theta^{\otimes 
        n}_{S^n}d\theta_S + \tilde \sigma_{S^n}
  \end{equation}
  and $\|\tilde \sigma_{S^n}\|_1 \leq \exp(-\Omega(N))$, where
  $c_{N,d^2}$ is the dimension of $\sym^N(\hilbert_P\otimes \hilbert_S)$.
\end{lemma}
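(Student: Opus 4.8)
The plan is to exploit the three properties of Definition~\ref{def:perminv} in turn. First I would invoke criterion~1 to pass from the real output map to the symmetrized one: there is a symmetric purification $\ket{\bar\rho_{P^NS^N}}\in \sym^N(\hilbert_P\otimes\hilbert_S)$ of the permutation-averaged state $\frac1{N!}\sum_\pi \pi_{S^N}\rho_{S^N}\pi_{S^N}^\dagger$ such that the left-hand side of (\ref{eq:postsamplupperbound}), after tracing out the $\Pi$ register from (\ref{eq:0q4gh0awfe}), equals $\trace[\Pi]{\barEacc(\bar\rho_{P^NS^N})}$. So it suffices to upper bound this quantity.

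Next I would bring in Remark~\ref{rem:projsymsubspace}: since $\ket{\bar\rho_{P^NS^N}}$ lies in the symmetric subspace of $(\hilbert_P\otimes\hilbert_S)^{\otimes N}$, which has dimension $c_{N,d^2}$ (here $\dim(\hilbert_P\otimes\hilbert_S)=d^2$ since $\hilbert_P\simeq\hilbert_S$), the projector onto that subspace is $c_{N,d^2}\int \proj\theta_{PS}^{\otimes N}\,d\ket\theta$, where the integral is over Haar-random pure states $\ket\theta_{PS}$. Because $\bar\rho_{P^NS^N}$ is supported on the symmetric subspace, applying Proposition~\ref{prop:smallnumbterms} (or more directly the operator inequality $\bar\rho\le P_{\sym}\,\mathrm{tr}(\bar\rho)\le P_{\sym}$ combined with monotonicity of $\barEacc$ and of the partial trace) gives
\begin{equation*}
  \trace[\Pi]{\barEacc(\bar\rho_{P^NS^N})} \le c_{N,d^2}\int \trace[\Pi]{\barEacc(\proj\theta_{PS}^{\otimes N})}\,d\ket\theta \enspace.
\end{equation*}
Now I would split the Haar measure into the ``good'' region $G=\{\ket\theta : F(\theta_S,\varphi_S)^2\ge 1-\epsilon\}$ and its complement. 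On $G$, criterion~3 of Definition~\ref{def:perminv} bounds the integrand by $\theta_S^{\otimes n}$, so that part contributes $c_{N,d^2}\int_G \theta_S^{\otimes n}\,d\ket\theta$, which is exactly the first term on the right-hand side of (\ref{eq:postsamplupperbound}) once we define $d\theta_S$ as the pushforward of the Haar measure restricted to $G$ under $\ket\theta\mapsto\theta_S=\trace[P]{\proj\theta}$ (noting $F(\theta_S,\varphi_S)^2\ge1-\epsilon$ on $G$ by construction). On the complement, criterion~2 gives $\|\barEacc(\proj\theta_{PS}^{\otimes N})\|_1\le\exp(-\Omega(N))$ pointwise, hence $\tilde\sigma_{S^n} := c_{N,d^2}\int_{G^c}\trace[\Pi]{\barEacc(\proj\theta_{PS}^{\otimes N})}\,d\ket\theta$ has trace norm at most $c_{N,d^2}\exp(-\Omega(N))$; since $c_{N,d^2}\le(N+1)^{d^2-1}$ is only polynomial in $N$, this is still $\exp(-\Omega(N))$. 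Collecting the two pieces yields (\ref{eq:postsamplupperbound}).

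The main obstacle I anticipate is the step where I want to apply criterion~2 ``pointwise'' inside the Haar integral: criterion~2 as stated is a statement about a fixed i.i.d. input $\proj\theta^{\otimes N}$, but here $F(\theta_S,\varphi_S)^2<1-\epsilon$ holds for $\ket\theta$ ranging over the uncountable complement $G^c$, and I need the $\exp(-\Omega(N))$ bound to be uniform over that set so that it survives integration and multiplication by $c_{N,d^2}$. This requires unpacking the $\Omega(N)$ in the relevant instantiations (it comes from a Hoeffding/fidelity bound of the form $(1-\epsilon')^k$ with $k=\Omega(N)$, which is uniform over all $\theta$ with fidelity deficit at least $\epsilon$), so the uniformity is genuinely available; still, this is the point where the argument is most delicate and where one must be careful that the constant hidden in $\Omega(N)$ does not degrade as $\ket\theta$ approaches the boundary of $G$. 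A secondary subtlety is making sure the measure $d\theta_S$ on mixed states is well-defined as the pushforward and is genuinely subnormalized (its total mass is $c_{N,d^2}^{-1}$ times... — in fact it is just the restricted Haar measure, of mass $\le 1$), and that $\barEacc$ and $\trace[\Pi]{\cdot}$ are both monotone under the operator ordering $\le$ so that the inequalities propagate correctly; these follow from complete positivity and are routine.
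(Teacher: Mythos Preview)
Your proposal is correct and follows essentially the same route as the paper: invoke criterion~1 to pass to $\barEacc(\bar\rho_{P^NS^N})$, use the symmetric-subspace bound $\bar\rho_{P^NS^N}\le c_{N,d^2}\int\proj\theta^{\otimes N}d\ket\theta$ from Remark~\ref{rem:projsymsubspace} together with monotonicity of the completely positive map, split the Haar integral by the fidelity threshold, apply criterion~3 on the good part and criterion~2 on the bad part, and absorb the polynomial $c_{N,d^2}$ into the $\exp(-\Omega(N))$. The paper proceeds exactly this way (it does not go through Proposition~\ref{prop:smallnumbterms} here, but uses the direct operator inequality you also mention), and it likewise identifies $d\theta_S$ as the pushforward of the restricted Haar measure; the uniformity issue you flag for criterion~2 is not explicitly addressed in the paper's proof either and is handled, as you say, by the form of the underlying bound.
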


\begin{proof}
  By Definition~\ref{def:perminv}, there exists $\barEacc$ and
  $\rhobar\in \sym^N(\hilbert_P\otimes \hilbert_S)$ such
  that
  \begin{equation}
    \frac 1{n!}\sum_{\pi\in \symgrp n} \proj\pi_\Pi\otimes
      \pi_{S^n} \Eacc(\rho_{RS^N})
      \pi_{S^n}^\dagger=
      \barEacc(\rhobar)\enspace.
  \label{eq:f284hf}
  \end{equation}
  Therefore it suffices to prove the statement for $\bar{\cal
    E}^\acc_{P^N S^N\rightarrow S^n}$ obtained by tracing out the
  register $\Pi$ from the output of $\barEacc$.
  
  Since $\ket{\rhobar}\in\sym^N(\hilbert_P\otimes \hilbert_S)$, it
  holds by remark~\ref{rem:projsymsubspace} that $\rhobar\leq
  c_{N,d^2}\cdot \int \proj\theta^{\otimes N}_{P^N
    S^N}\;d\ket{\theta_{PS}}$ where $d\ket{\theta_{PS}}$ is the normalized
  Haar measure on the set of pure states on $\hilbert_P\otimes
  \hilbert_S$. It follows that
  \begin{align*}
    \bar{\cal
    E}^\acc_{P^N S^N\rightarrow S^n}(\rhobar)
    &\leq \bar{\cal
      E}^\acc_{P^N S^N\rightarrow S^n}
      \left(c_{N,d^2}\cdot \int \proj\theta^{\otimes N}_{P^N
      S^N}\;d\ket\theta\right)\\ 
    &=
      \begin{aligned}[t]
        c_{N,d^2}\cdot \bar{\cal E}^\acc_{P^N S^N\rightarrow S^n}
        \bigg(&\int_{\theta_{S}\approx^\epsilon \varphi_{S}}
        \proj\theta^{\otimes N}_{P^N S^N}\; d\ket\theta\\
        &+\int_{\theta_{S}\not\approx^\epsilon \varphi_{S}}
        \proj\theta^{\otimes N}_{P^N S^N}\; d\ket\theta\bigg)
      \end{aligned}
\\
    &\leq c_{N,d^2}\cdot \int_{\theta_{ S}\approx^\epsilon \varphi_{S}} 
      \theta^{\otimes n}_{S^n}\;d\theta_S
      + \tilde \sigma_{S^n} \\
  \end{align*}
  where $\theta_{ S}\approx^\epsilon \varphi_{ S}$ means that
  $F(\theta_{ S}, \varphi_{ S})^2\geq 1-\epsilon$ and where
  the operator $\tilde\sigma_{S^n}:= c_{N,d^2}\cdot\bar{\cal
    E}^\acc_{P^N S^N\rightarrow S^n} \left(\int_{\theta\not\approx^\epsilon \varphi}
    \proj\theta^{\otimes N} d\ket\theta\right)$ satisfies
  $\|\tilde\sigma_{S^n}\|_1 \leq \exp(-\Omega(N))$ by the second
  criterion of Definition~\ref{def:perminv}. The last inequality of
  the above follows from the third criterion of
  Definition~\ref{def:perminv} and from Remark~\ref{rem:fq09h4g}:
  since the trace non-increasing map $\bar{\cal
    E}^\acc_{P^N S^N\rightarrow S^n}$ does not act on the unsampled qubits, the
  state of $S^n$ \emph{after} the application of this map is
  upper-bounded by the state of the unsampled qubits \emph{before} its
  application. 

  Finally, the measure $d\theta_S$ is obtained by taking the partial
  trace over $P$ on the measure $d\ket{\theta_{PS}}$ on the restricted
  set of $\ket{\theta_{PS}}$ where $F(\theta_{ S}, \varphi_{
    S})^2\geq 1-\epsilon$. This corresponds to a measure proportional
  to the Hilbert-Schmidt measure~\cite{0305-4470-34-35-335,renner2010}
  over density operators on $\hilbert_S$ which have fidelity squared at
  least $1-\epsilon$ with $\varphi_S$.  \qed
\end{proof}

From the above Lemma, we can conclude that the \emph{permuted} output
of the sampling protocol is upper bounded by an ideal state in the
spirit of~\eqref{intuitmain}.

\begin{corollary}\label{cor:fja948h9hf}
  Let $\Eacc$ be the output of a sampling protocol that satisfies
  Definition~\ref{def:perminv} and let $\rho_{RS^N}\in
  \density{\hilbert_R\otimes \hilbert_S^{\otimes N}}$. For any
  $\epsilon>0$, there exist a subnormalized $\epsilon$-ideal operator
  $\psi_{S^n}\in {\cal D}_\leq(\hilbert_S^{\otimes n})$ and $\sigma_{S^n}$ such that
  \begin{equation}
    \label{eq:f0849b3493b9fb39bf}
    \frac 1{n!} \sum_{\pi\in \symgrp n} \pi_{S^n} \Eacc(\rho_{RS^N})
    \pi_{S^n}^\dagger \leq c_{N,d^2}\cdot\psi_{S^n}
    +\sigma_{S^n}
  \end{equation}
  where $\|\sigma_{S^n}\|_1 \leq \exp(-\Omega(N))$.
\end{corollary}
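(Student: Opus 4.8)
The plan is to combine Lemma~\ref{lem:postsamplupperbound} with the observation from Lemma~\ref{lem:puriftauisideal} that $n$ i.i.d.\ copies of a state close to $\ket\varphi$ are almost entirely supported inside a quantum Hamming ball, and then to package the resulting inequality so that the Hamming-ball part becomes a genuine \emph{subnormalized $\epsilon$-ideal} operator in the sense of Definition~\ref{idealdef}. Concretely, Lemma~\ref{lem:postsamplupperbound} already gives us a subnormalized measure $d\theta_S$ supported on mixed states with $F(\theta_S,\varphi_S)^2\geq 1-\epsilon'$ and an error term $\tilde\sigma_{S^n}$ of negligible trace norm with
\[
  \frac1{n!}\sum_{\pi\in\symgrp n}\pi_{S^n}\Eacc(\rho_{RS^N})\pi_{S^n}^\dagger
  \leq c_{N,d^2}\cdot\int\theta_{S^n}^{\otimes n}\,d\theta_S+\tilde\sigma_{S^n}\enspace,
\]
for any $\epsilon'>0$; I would pick $\epsilon'$ as a suitable fraction of the target $\epsilon$.

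Next I would handle each i.i.d.\ term $\theta_{S^n}^{\otimes n}$ in the integral. Fix a purification $\ket{\theta_{PS}}$ of $\theta_S$ on $\hilbert_P\otimes\hilbert_S$; since $F(\theta_S,\varphi_S)^2\geq 1-\epsilon'$, by Uhlmann's theorem there is a purification $\ket{\varphi_{PS}}$ (agreeing with the fixed reference purification up to an isometry on $P$, which we may absorb) with $|\braket{\theta_{PS}}{\varphi_{PS}}|^2\geq 1-\epsilon'$. Then Lemma~\ref{lem:puriftauisideal} applied with $\ket\nu=\ket{\varphi_{PS}}$ and $\ket\theta=\ket{\theta_{PS}}$, on the $n$-fold system $(PS)^n$, gives $\trace{\mathbb P^{r,\ket{\varphi_{PS}}}\theta_{PS}^{\otimes n}}\geq 1-\exp(-2\alpha^2 n)$ with $r=(\epsilon'+\alpha)n$, i.e.\ $\theta_{PS}^{\otimes n}$ is $\exp(-\Omega(n))$-close in trace norm to its compression $\theta_{PS}^{\otimes n}{}':=\mathbb P^{r,\ket{\varphi_{PS}}}\theta_{PS}^{\otimes n}\mathbb P^{r,\ket{\varphi_{PS}}}$, which is supported in $\Delta_{r}(\ket{\varphi_{PS}}^{\otimes n})$ and hence, after tracing out $P^n$, whose $S^n$-marginal is a purification-admitting operator witnessing $\epsilon$-ideality (with $\epsilon=\epsilon'+\alpha$). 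The point is that the projector $\mathbb P^{r,\ket{\varphi_{PS}}}$ depends on $\theta$ only through $\epsilon'$, uniformly over the support of $d\theta_S$, so we may define $\psi_{S^n}:=\trace[P^n]{\int \mathbb P^{r,\ket{\varphi_{PS}}}\theta_{PS}^{\otimes n}\mathbb P^{r,\ket{\varphi_{PS}}}\,d\theta_S}$ and collect the differences $\theta_{S^n}^{\otimes n}-\trace[P^n]{\mathbb P^{r}\theta_{PS}^{\otimes n}\mathbb P^{r}}$ (which are $\leq$ a difference of operators of small trace) into the final error $\sigma_{S^n}$, together with $\tilde\sigma_{S^n}/c_{N,d^2}$ scaled back.

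The one genuinely delicate point is checking that $\psi_{S^n}$, so defined as an \emph{average} (over $d\theta_S$) of $S^n$-marginals of vectors/operators each living in a Hamming ball around a \emph{possibly different} purification $\ket{\varphi_{PS}}$, still admits a single purification $\ket{\psi_{RP^nS^n}}$ lying in $\hilbert_R\otimes\Delta_{\epsilon n}(\ket\varphi_{P^nS^n}^{\otimes n})$ for one fixed reference purification $\ket\varphi_{PS}$. Here the remark is that all the Uhlmann purifications $\ket{\varphi_{PS}}$ differ from the fixed one only by an isometry acting on $P$ alone; such an isometry maps $\Delta_r(\ket\varphi_{PS}^{\otimes n})$ into $\Delta_r$ of the new state by Definition~\ref{def:quanthammingball} (it acts on all $n$ copies of $P$, which is allowed since the Hamming ball only constrains which $S$-subsystems are touched — more carefully, one absorbs the $P$-isometries into the environment $R$). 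Concretely, introduce the extra register $R$ to carry the label $\theta$ (a Naimark-type dilation of the integral), so $\ket{\psi_{RP^nS^n}}=\int\sqrt{d\theta_S}\,\ket\theta_R\otimes(W_\theta)_{P^n}\mathbb P^{r}\ket{\theta_{PS}}^{\otimes n}/\|\cdot\|$, with $W_\theta$ the $P$-isometry aligning the $\theta$-purification to the fixed one; each summand then lies in $\hilbert_R\otimes\Delta_r(\ket\varphi_{PS}^{\otimes n})$, hence so does the whole vector, and $r=(\epsilon'+\alpha)n\leq\epsilon n$ for our choice of parameters. Tracing out $R$ recovers $\psi_{S^n}$ up to the normalization already accounted for. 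Finally, bounding $\|\sigma_{S^n}\|_1$: it is at most $c_{N,d^2}$ times the $d\theta_S$-average of $\exp(-\Omega(n))$ plus $\|\tilde\sigma_{S^n}\|_1$; since $c_{N,d^2}\leq(N+1)^{d^2-1}$ is only polynomial in $N$ and $n=N-k=\Omega(N)$, the product stays $\exp(-\Omega(N))$, which is the claimed bound. The main obstacle is thus the bookkeeping in the previous sentence's purification argument — making precise that a convex mixture of Hamming-ball states, each around a different Uhlmann purification, still has a joint purification inside a single Hamming ball — rather than any hard inequality.
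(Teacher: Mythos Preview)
Your approach is essentially the paper's: invoke Lemma~\ref{lem:postsamplupperbound}, lift each $\theta_S^{\otimes n}$ to a purification close to $\ket\varphi_{PS}^{\otimes n}$, project onto the Hamming ball via Lemma~\ref{lem:puriftauisideal}, and absorb the leftover into $\sigma_{S^n}$ using the Gentle Measurement Lemma. The structure and the estimates are the same.

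The one place where you make life harder than necessary is the Uhlmann step. You fix a purification $\ket{\theta_{PS}}$ of each $\theta_S$ and then invoke Uhlmann to produce a purification $\ket{\varphi_{PS}}$ of $\varphi_S$ achieving the fidelity---so your reference purification depends on $\theta$, and with it the projector $\mathbb P^{r,\ket{\varphi_{PS}}}$. This is what generates your ``genuinely delicate point'' about patching together Hamming balls around different centers via the controlled isometries $W_\theta$. (Incidentally, your parenthetical ``the Hamming ball only constrains which $S$-subsystems are touched'' is not right: $\Delta_r(\ket{\varphi_{PS}}^{\otimes n})$ is defined with $(P_iS_i)$ as the subsystem, so the constraint is on joint $(PS)$-positions. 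Your conclusion still holds because $W_\theta^{\otimes n}$ is a product of \emph{local} unitaries on each $(P_iS_i)$, which maps one Hamming ball to another.)

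The paper sidesteps all of this by applying Uhlmann in the other direction: fix the reference purification $\ket{\varphi_{PS}}$ once and for all (it is, after all, the purification Paul and Sam agreed on), and for each $\theta_S$ choose the purification $\ket{\theta_{PS}}$ so that $|\braket{\theta_{PS}}{\varphi_{PS}}|^2=F(\theta_S,\varphi_S)^2\geq 1-\epsilon'$. Then $\mathbb P^{r,\ket\varphi}$ is a single fixed projector, the integrand $\mathbb P\,\theta_{PS}^{\otimes n}\,\mathbb P$ is supported in the one fixed space $\Delta_r(\ket\varphi_{PS}^{\otimes n})$ for every $\theta$, and any purification of $\int\mathbb P\,\theta_{PS}^{\otimes n}\,\mathbb P\,d\theta_S$ automatically lies in $\hilbert_R\otimes\Delta_r(\ket\varphi_{PS}^{\otimes n})$. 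No $W_\theta$'s, no Naimark dilation of the measure, no alignment argument. With that one change your proposal becomes the paper's proof verbatim.
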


\begin{proof}
  Fix $\beta= \epsilon/2$ and let $d\theta_S$ and $\tilde\sigma_{S^n}$
  be as in Lemma~\ref{lem:postsamplupperbound} for parameter $\beta$,
  i.e. such that
    \begin{equation}\label{eq:f9849fwh3fh}
    \frac 1{n!} \sum_{\pi\in \symgrp n} \pi_{S^n}\Eacc(\rho_{RS^N})
    \pi_{S^n}^\dagger \leq c_{N,d^2}\cdot \int \theta^{\otimes 
        n}_{S^n}d\theta_S + \tilde \sigma_{S^n}
    \end{equation}
    where $d\theta_S$ is a subnormalized measure on the set of mixed
    states which satisfy $F(\theta_S,\varphi_S)^2\geq 1-\beta$ and
    where $\tilde \sigma_{S^n}$ has negligible norm.

    Let $\tau_{P^nS^n}:= \int \proj \theta_{P^nS^n}^{\otimes
      n}d\theta_S$ be an extension of $\int \theta_{ S^n}^{\otimes
      n}d\theta_S$ where each $\ket{\theta_{ P S}}$ is such that
    $|\braket{\theta_{ P S}}{\varphi_{ P S}}|^2= F(\theta_{ S},
    \varphi_{ S})^2 \geq 1-\beta$ and let $ \tilde \sigma_{P^nS^n}$ be
    an extension of $\tilde \sigma_{S^n}$. Then from
    Lemma~\ref{lem:puriftauisideal}, we have
  \begin{equation}
    \label{eq:f48g93b9gb3g}
    \trace {(\id-\mathbb P_{P^n S^n}^{2\beta n,
        \ket\varphi})\left(\tau_{P^nS^n}\right)}  
    \leq \exp(-2\beta^2 n)\enspace.
  \end{equation}

  Choose $\psi_{S^n}= \trace[P^n]{\mathbb P_{P^n S^n}^{2\beta n,
      \ket\varphi}\tau_{P^nS^n}\mathbb P_{P^n S^n}^{2\beta n,
      \ket\varphi}}$. Then, using (\ref{eq:f9849fwh3fh}), we have
  \begin{align*}
    \frac 1{n!} \sum_{\pi\in \symgrp n}& \pi_{S^n}\Eacc(\rho_{RS^N})
    \pi_{S^n}^\dagger
    \leq c_{N,d^2}\cdot \int \theta^{\otimes 
      n}_{S^n}d\theta_S + \tilde \sigma_{S^n}\\
    &=  \trace[P^n]{c_{N,d^2}\cdot\tau_{P^nS^n} + \tilde \sigma_{P^nS^n}}= c_{N,d^2}\cdot\psi_{S^n} + \sigma_{S^n}
  \end{align*}
  where $\sigma_{S^n}:= \trace[P^n]{c_{N,d^2}(\tau_{P^nS^n} - \mathbb
    P_{P^n S^n}^{2\beta n, \ket\varphi}\tau_{P^nS^n}\mathbb P^{2\beta
      n, \ket\varphi})+ \tilde \sigma_{P^nS^n}}$ has norm
  upper bounded by
  \begin{equation*}
\|\sigma_{P^nS^n}\|_1 \leq
  c_{N,d^2}\|\tau_{P^nS^n} - \mathbb P_{P^n S^n}^{2\beta n,
    \ket\varphi}\;\tau_{P^nS^n}\;\mathbb P_{P^n S^n}^{2\beta n,
    \ket\varphi}\|_1 + \|\tilde\sigma_{P^nS^n}\|_1\leq
  \exp(-\Omega(N))
\end{equation*}
by first applying the triangle inequality and then
  the Gentle Measurement's Lemma~\cite{winter99,on02} with the bound
  of (\ref{eq:f48g93b9gb3g}).\qed
\end{proof}
It should be noted that the operator $ \sigma_{S^n}$ from the above
Corollary is not positive semidefinite in general, but since its norm
is negligible, this shouldn't matter because it can simply be ignored
for most applications.

\subsection{Proof Against Arbitrary Adversaries: Unpermuting the
  Output}
\label{sec:unpermute}

In order to conclude that the sampling protocol works as intended on
an arbitrary input state and adversarial strategy, we need to argue
that if we remove the permutation from the contents
of~(\ref{eq:f0849b3493b9fb39bf}), then the left-hand side, which
becomes the post-sampling state, is still approximated by a state
having a purification in a low-error subspace. It turns out that the
intuitive statement ``if the permuted output is ideal then the
non-permuted output is also ideal'' that we want to show is quite
tricky to prove. We stress that this step is necessary if we want to
keep the permutation ``under the hood'' and have a statement that
doesn't require to physically shuffle the systems, which would lead to
unnatural sampling protocols.

Lemma~\ref{lem:unpermuteideal} below is the first step in this proof,
it shows that the property of having a purification in a low-error
subspace, i.e. of being \emph{ideal}, does indeed persist after
``unpermutation'' of the registers. Its proof is straightforward and
can be found in Appendix~\ref{app:addproofs2}.
\begin{lemma}\label{lem:unpermuteideal}
  Let $\epsilon>0$ and let $\sigma_{S^n}\in
  \density{\hilbert_S^{\otimes n}}$ be such that $\frac
  1{n!}\sum_{\pi\in \symgrp n} \pi_{S^n} \sigma_{S^n}\pi^\dagger_{S^n}$ is
  $\epsilon$-ideal
  , then $\sigma_{S^n}$ 
  is also $\epsilon$-ideal.
\end{lemma}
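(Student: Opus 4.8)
The plan is to trade the quantifier over purifications for a quantifier over extensions, then to realise the symmetrisation as post-selection on a ``which-permutation'' register, and finally to push the Hamming-ball structure through the essential uniqueness of purifications.

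First I would record two elementary facts. (i) A positive semidefinite operator $\psi_{S^n}$ is $\epsilon$-ideal exactly when there is a positive semidefinite operator $\hat\psi_{P^nS^n}$ with $\mathrm{supp}(\hat\psi_{P^nS^n})\subseteq\Delta_{\epsilon n}(\ket\varphi^{\otimes n}_{P^nS^n})$ and $\trace[P^n]{\hat\psi_{P^nS^n}}=\psi_{S^n}$: a purification of such a $\hat\psi$ with a fresh register $R$ automatically lies in $\hilbert_R\otimes\Delta_{\epsilon n}$ (its $P^nS^n$-marginal is $\hat\psi$), and conversely $\hat\psi:=\trace[R]{\psi_{RP^nS^n}}$ does the job for a witness $\ket{\psi_{RP^nS^n}}$. (ii) The subspace $\Delta_{\epsilon n}(\ket\varphi^{\otimes n}_{P^nS^n})$ is invariant under the joint permutation $\pi_{P^nS^n}$ of the $n$ pairs $(P_i,S_i)$, because $\ket\varphi^{\otimes n}_{P^nS^n}$ is permutation invariant and conjugation by a permutation preserves the class of unitaries acting as the identity on at least $n-\epsilon n$ pairs. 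Writing $\bar\sigma:=\frac1{n!}\sum_{\pi\in\symgrp n}\pi_{S^n}\sigma_{S^n}\pi^\dagger_{S^n}$, fact (ii) lets me symmetrise the given ideal extension $\hat{\bar\sigma}_{P^nS^n}$ of $\bar\sigma$ over the pairs and assume it is $\pi_{P^nS^n}$-invariant, while still $\trace[P^n]{\hat{\bar\sigma}}=\bar\sigma$ and $\mathrm{supp}(\hat{\bar\sigma})\subseteq\Delta_{\epsilon n}$; by Remark~\ref{lem:purif} I may moreover take a purification $\ket{\hat{\bar\sigma}}_{RP^nS^n}$ that is symmetric under the simultaneous permutation of $R$, $P$ and $S$.

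For the main construction, pick a purification $\ket\sigma_{R'S^n}$ of $\sigma_{S^n}$ and form $\ket{\bar\sigma}'_{\Pi R'S^n}:=\frac1{\sqrt{n!}}\sum_\pi\ket\pi_\Pi\otimes(\id_{R'}\otimes\pi_{S^n})\ket\sigma_{R'S^n}$, which is again a purification of $\bar\sigma_{S^n}$. Since $\ket{\hat{\bar\sigma}}_{RP^nS^n}$ is also a purification of $\bar\sigma_{S^n}$ (enlarge $\hilbert_R$ so that its dimension dominates), there is an isometry $V_{\Pi R'\to RP^n}$ with $\ket{\hat{\bar\sigma}}_{RP^nS^n}=(V\otimes\id_{S^n})\ket{\bar\sigma}'_{\Pi R'S^n}$, and one checks it is equivariant for the $\symgrp n$-actions on both sides. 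Projecting the $\Pi$-register onto $\ket{\mathrm{id}}_\Pi$ yields $\tfrac1{\sqrt{n!}}\ket\sigma_{R'S^n}=\big((\bra{\mathrm{id}}_\Pi V^\dagger)\otimes\id_{S^n}\big)\ket{\hat{\bar\sigma}}_{RP^nS^n}$; equivalently, setting $E:=V(\proj{\mathrm{id}}_\Pi\otimes\id_{R'})V^\dagger\le\id_{RP^n}$, post-selecting the purifying system $RP^n$ of $\ket{\hat{\bar\sigma}}$ with $E$ leaves $S^n$ in $\tfrac1{n!}\sigma_{S^n}$. The candidate extension is then $\hat\sigma_{P^nS^n}:=n!\,\trace[R]{(E^{1/2}\otimes\id_{S^n})\proj{\hat{\bar\sigma}}(E^{1/2}\otimes\id_{S^n})}$, which satisfies $\trace[P^n]{\hat\sigma}=\sigma_{S^n}$; by fact (i) it remains only to show $\mathrm{supp}(\hat\sigma)\subseteq\Delta_{\epsilon n}$.

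The step I expect to be the main obstacle is precisely this last one: $E$ acts on all of $\hilbert_R\otimes\hilbert_{P^n}$, and $P^n$ is exactly the register glued to $S^n$ inside $\Delta_{\epsilon n}$, so a priori post-selecting with $E$ can push $\hat\sigma$ out of the Hamming ball (it is easy to see that $\mathrm{supp}(\hat\sigma)\subseteq\mathrm{supp}(\hat{\bar\sigma})$ would follow only if $E$ acted on $R$ alone). Overcoming this must use the $\pi_{P^nS^n}$-symmetry of $\hat{\bar\sigma}$ together with the symmetric choice of $\ket{\hat{\bar\sigma}}$ to arrange that the branch extraction commutes with the Hamming-ball projector $\mathbb P^{\epsilon n,\ket\varphi}_{P^nS^n}$ -- equivalently, that $E^{1/2}\otimes\id_{S^n}$ maps $\hilbert_R\otimes\Delta_{\epsilon n}$ into itself. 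I expect this compatibility between the equivariant isometry $V$ and the Hamming-ball structure to be the technical heart of the argument; once it is in place, everything else is bookkeeping with purifications and facts (i)--(ii).
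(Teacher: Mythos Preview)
You have put your finger precisely on the obstacle, but the resolution you sketch does not go through, and the paper takes a different route that sidesteps the difficulty altogether.

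Your plan is to post-select on the branch $\ket{\mathrm{id}}_\Pi$ via $E=V(\proj{\mathrm{id}}_\Pi\otimes\id_{R'})V^\dagger$ and then argue, using equivariance of $V$, that $E^{1/2}\otimes\id_{S^n}$ preserves $\hilbert_R\otimes\Delta_{\epsilon n}$. But even if $V$ is chosen equivariant (which you assert without justification), conjugating $E$ by $\pi_R\otimes\pi_{P^n}$ sends it to $V(\proj{\pi}_\Pi\otimes\id_{R'})V^\dagger$, not back to $E$: the identity branch is \emph{not} a fixed point of the symmetry, so no invariance argument pins down the support of $\hat\sigma_{P^nS^n}$. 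Since $E$ genuinely entangles $R$ and $P^n$, Remark~\ref{rem:fq09h4g} does not apply either, and $\mathrm{supp}(\hat\sigma)\subseteq\mathrm{supp}(\hat{\bar\sigma})$ is simply false in general for post-selection operators touching $P^n$.

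The paper avoids post-selection entirely. Writing $\sigma_{S^n}=\sum_i p_i\proj{i_{S^n}}$, it builds the explicit purification $\ket{\bar\sigma_{\Pi P^nS^n}}=\frac{1}{\sqrt{n!}}\sum_{\pi,i}\sqrt{p_i}\,\ket\pi_\Pi\ket{i_{P^n}}\otimes\pi_{S^n}\ket{i_{S^n}}$ and takes the isometry $V_{\Pi P^n\to RP^n}$ mapping it to the given ideal purification $\ket{\bar\sigma_{RP^nS^n}}\in\hilbert_R\otimes\Delta_{\epsilon n}$. The crucial observation is that the vectors $\ket{\xi_{\pi,i}}:=V\ket\pi\ket{i}$ are \emph{orthonormal} in $RP^n$, so one can apply an isometry on $RP^n$ that coherently reads off $\pi$, and then applies $\pi^{-1}$ simultaneously to $P^n$ and to $S^n$. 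This is exactly a controlled joint permutation of the pairs $(P_i,S_i)$, which by your own fact~(ii) preserves $\Delta_{\epsilon n}(\ket\varphi^{\otimes n}_{P^nS^n})$; on the $S^n$ side it turns every $\pi_{S^n}\ket{i_{S^n}}$ back into $\ket{i_{S^n}}$, so the resulting vector purifies $\sigma_{S^n}$. The point is that undoing $\pi$ on \emph{both} $P^n$ and $S^n$ is harmless for the Hamming ball, whereas your projection onto a single $\pi$ acts asymmetrically on $P^n$ (through $V$) and $S^n$ (not at all), which is why it fails.
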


We now have all the tools we need to prove our main result,
Theorem~\ref{thm:mainresultunpermuted} below. Its proof combines the
above lemma with Lemmas~\ref{lem:puriftauisideal}
and~\ref{lem:postsamplupperbound} to show that the output of the
sampling is negligibly close to a state that is post-selected from a purification of
an ideal state.

\begin{theorem}[Main Result]
  \label{thm:mainresultunpermuted}
  Let $\Eacc$ be the output of a sampling protocol that satisfies
  Definition~\ref{def:perminv} and let $\rho_{RS^N}\in
  \density{\hilbert_R\otimes \hilbert_S^{\otimes N}}$. For any
  $\epsilon>0$, there exists a non-normalized vector
  \begin{equation*}
    \ket{\tilde \psi_{{R'}P^nS^n}}\in \hilbert_{R'}\otimes \Delta_{\epsilon n}(\ket
    \varphi^{\otimes n}_{P^nS^n})
  \end{equation*}
  and a completely positive trace non-increasing superoperator
  $\tilde {\cal K}_{{R'}P^n\rightarrow \complex}$ such that
\begin{equation*}
  \left\| 
     \Eacc(\rho_{RS^N}) - c_{N,d^2}
    (\tilde {\cal K}_{ {R'}P^n}\otimes \idcptp_{S^n})( \tilde \psi_{ {R'}P^nS^n})\right\|_1\leq
  \exp(-\Omega(N))
  \end{equation*}
\end{theorem}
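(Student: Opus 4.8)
The plan is to combine Corollary~\ref{cor:fja948h9hf}, which controls the \emph{permuted} output, with Lemma~\ref{lem:unpermuteideal}, which lets us remove the permutation, and finally to translate the resulting operator inequality into the post-selection form via Proposition~\ref{prop:leqequivpostsel}. Let me think this through carefully, because the order matters and one step is genuinely subtle.

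First I would like to say: apply Corollary~\ref{cor:fja948h9hf} to get $\tfrac{1}{n!}\sum_{\pi}\pi_{S^n}\Eacc(\rho_{RS^N})\pi_{S^n}^\dagger \leq c_{N,d^2}\cdot\psi_{S^n}^{\mathrm{sym}}+\sigma^{\mathrm{sym}}_{S^n}$ with $\psi^{\mathrm{sym}}_{S^n}$ ideal and $\|\sigma^{\mathrm{sym}}_{S^n}\|_1$ negligible, and then invoke Lemma~\ref{lem:unpermuteideal} to push the idealness down from the symmetrized operator to $\Eacc(\rho_{RS^N})$ itself. But one cannot literally do this: Lemma~\ref{lem:unpermuteideal} talks about the idealness of an operator \emph{equal to} a symmetrization, not an operator \emph{dominated by} a symmetrization plus junk. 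So the actual route must be: first discard the negligible part $\sigma^{\mathrm{sym}}_{S^n}$ using the Gentle Measurement Lemma (as is already done inside the proof of Corollary~\ref{cor:fja948h9hf}), reducing to the case where the symmetrized output is \emph{exactly} (up to negligible trace-norm error) an ideal operator; then apply Lemma~\ref{lem:unpermuteideal} to conclude that $\Eacc(\rho_{RS^N})$ is itself $\epsilon$-ideal up to a negligible correction. Concretely, I expect to define an auxiliary operator $\hat\rho_{S^n}$ close to $\Eacc(\rho_{RS^N})$ whose symmetrization is exactly ideal, apply Lemma~\ref{lem:unpermuteideal} to $\hat\rho_{S^n}$, and absorb $\|\Eacc(\rho_{RS^N})-\hat\rho_{S^n}\|_1$ into the final error term.

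Once we know $\Eacc(\rho_{RS^N})$ equals (up to negligible trace norm) an $\epsilon$-ideal operator $\psi_{S^n}$, Definition~\ref{idealdef} hands us a purification $\ket{\psi_{RP^nS^n}}\in\hilbert_R\otimes\Delta_{\epsilon n}(\ket\varphi^{\otimes n}_{P^nS^n})$, i.e.~a vector in a low-error subspace. But the theorem wants $\Eacc(\rho_{RS^N})$ itself expressed as $c_{N,d^2}\,(\tilde{\cal K}_{R'P^n}\otimes\idcptp_{S^n})(\tilde\psi_{R'P^nS^n})$, not merely bounded by an ideal operator. This is where Proposition~\ref{prop:leqequivpostsel} enters: from the operator inequality $\Eacc(\rho_{RS^N})\leq c_{N,d^2}\cdot\psi_{S^n}$ (valid after we move the negligible $\sigma_{S^n}$ to the error side), and taking a purification $\ket{\tilde\psi_{R'P^nS^n}}$ of $\psi_{S^n}$ living in $\hilbert_{R'}\otimes\Delta_{\epsilon n}(\ket\varphi^{\otimes n}_{P^nS^n})$ — here I'd enlarge $R$ to $R'=R\otimes R''$ so that the purifying space accommodates the Hamming-ball structure, using that the purification in Definition~\ref{idealdef} can always be dilated — Proposition~\ref{prop:leqequivpostsel} gives a contraction $A_{R'P^n\to(\text{purifying register of }\Eacc(\rho))}$ with $A^\dagger A\leq\id$ such that $\ket{\Eacc(\rho)}=\sqrt{c_{N,d^2}}\,(A\otimes\id_{S^n})\ket{\tilde\psi}$. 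Tracing out the target register of $A$ turns conjugation by $A$ into the CP trace-non-increasing map $\tilde{\cal K}_{R'P^n\to\complex}(\,\cdot\,)=\trace{A(\,\cdot\,)A^\dagger}$, yielding exactly the claimed form $c_{N,d^2}(\tilde{\cal K}_{R'P^n}\otimes\idcptp_{S^n})(\tilde\psi_{R'P^nS^n})$.

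The main obstacle I anticipate is the bookkeeping around \emph{which} operator is exactly ideal versus merely dominated: Lemma~\ref{lem:unpermuteideal} is stated for operators equal to a symmetrization of an ideal operator, whereas everything we produce naturally comes with additive negligible junk and with $\leq$ rather than $=$. Threading the negligible errors through — first stripping $\sigma^{\mathrm{sym}}$ via Gentle Measurement, then un-permuting, then re-introducing a (different, still negligible) $\sigma_{S^n}$ on the $S^n$ side, and finally converting $\leq$ to an exact post-selection equation via Proposition~\ref{prop:leqequivpostsel} — requires care to ensure each error stays $\exp(-\Omega(N))$ and that the subspace membership $\ket{\tilde\psi}\in\hilbert_{R'}\otimes\Delta_{\epsilon n}(\ket\varphi^{\otimes n}_{P^nS^n})$ survives the dilation from $R$ to $R'$. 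The fidelity-versus-overlap identity $|\braket{\theta_{PS}}{\varphi_{PS}}|^2=F(\theta_S,\varphi_S)^2$ used inside Corollary~\ref{cor:fja948h9hf} is the reason the purifications can be chosen coherently, and I'd make sure that same choice of purification is the one fed into Proposition~\ref{prop:leqequivpostsel}, so no extra freedom leaks in.
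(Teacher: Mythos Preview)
Your ordering of the two key conversions is inverted, and this creates a genuine gap. You try to apply Lemma~\ref{lem:unpermuteideal} to (a perturbation of) $\Eacc(\rho_{RS^N})$ itself, and only afterwards invoke Proposition~\ref{prop:leqequivpostsel}. But the only information Corollary~\ref{cor:fja948h9hf} gives you is an \emph{inequality}
\[
\tfrac{1}{n!}\sum_{\pi}\pi_{S^n}\Eacc(\rho_{RS^N})\pi_{S^n}^\dagger \;\leq\; c_{N,d^2}\,\psi_{S^n}^{\mathrm{sym}}+\sigma_{S^n}^{\mathrm{sym}},
\]
and stripping the negligible $\sigma^{\mathrm{sym}}$ still leaves a $\leq$, not a $\approx$. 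There is no reason an operator $\hat\rho_{S^n}$ close to $\Eacc(\rho_{RS^N})$ with exactly ideal symmetrization should exist: ``$\frac{1}{n!}\sum_\pi \pi A\pi^\dagger \leq B$'' with $B$ permutation-invariant does \emph{not} imply $A\leq B$, let alone $A\approx B$. In particular, $\Eacc(\rho_{RS^N})$ is in general \emph{not} close to any $\epsilon$-ideal operator; it is only dominated by one, up to the factor $c_{N,d^2}$. So the sentence ``once we know $\Eacc(\rho_{RS^N})$ equals (up to negligible trace norm) an $\epsilon$-ideal operator'' is where the argument breaks.

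The paper avoids this by applying Proposition~\ref{prop:leqequivpostsel} \emph{first}, at the level of the labeled-permuted state $\frac{1}{n!}\sum_\pi \proj{\pi}_\Pi\otimes\pi_{S^n}\Eacc(\rho)\pi_{S^n}^\dagger$, obtaining an \emph{equality} of the form $c_{N,d^2}(\mathcal{K}_{R'P^n\to\Pi}\otimes\idcptp_{S^n})(\tau_{R'P^nS^n})$ with $\tau$ trace-norm close to the ideal $\psi$. Because this is an equality, one can then legitimately apply the same ``measure $\Pi$ and undo $\pi$ on $S^n$'' operation to both sides; the left side becomes $\Eacc(\rho_{RS^N})$, and the right side, after Stinespring-dilating $\mathcal{K}$, produces a new operator $\tilde\psi_{ZS^n}$ with the property that $\psi_{S^n}=\frac{1}{n!}\sum_\pi\pi_{S^n}\tilde\psi_{S^n}\pi_{S^n}^\dagger$. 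It is to \emph{this} $\tilde\psi_{S^n}$\,---\,not to $\Eacc(\rho)$\,---\,that Lemma~\ref{lem:unpermuteideal} is applied, yielding the ideal purification $\ket{\tilde\psi_{R'P^nS^n}}$ and the final map $\tilde{\mathcal K}$. In short: post-select first, unpermute second, and apply Lemma~\ref{lem:unpermuteideal} to the state appearing on the right-hand side, not to the sampler's output.
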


By means of Proposition~\ref{prop:leqequivpostsel} and Remark~\ref{rem:fq09h4g}, we can express the statement of Theorem~\ref{thm:mainresultunpermuted} in terms of an operator inequality as suggested in (\ref{intuitmain}), rather than by means of post-selection. 

\begin{corollary}
  \label{thm:mrupperbound}
  Let $\Eacc$ be the output of a sampling protocol that satisfies
  Definition~\ref{def:perminv} and let $\rho_{RS^N}\in
  \density{\hilbert_R\otimes \hilbert_S^{\otimes N}}$. For any
  $\epsilon>0$, there exist a subnormalized $\epsilon$-ideal operator
  $\psi_{S^n}\in {\cal D}_\leq(\hilbert_S^{\otimes n})$ and
  $\sigma_{S^n}$ such that
  \begin{equation*}
    \Eacc(\rho_{RS^N}) \leq c_{N,d^2}\cdot\psi_{S^n}
    +\sigma_{S^n}
  \end{equation*}
  where $\|\sigma_{S^n}\|_1 \leq \exp(-\Omega(N))$.
\end{corollary}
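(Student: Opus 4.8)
The plan is to repackage Theorem~\ref{thm:mainresultunpermuted}, turning its post-selected form into an operator inequality by exploiting the fact that any operator post-selected from a subsystem is dominated by the corresponding reduced operator (Remark~\ref{rem:fq09h4g}). First I would invoke Theorem~\ref{thm:mainresultunpermuted} with the given $\epsilon$ to obtain a non-normalized vector $\ket{\tilde\psi_{R'P^nS^n}}\in\hilbert_{R'}\otimes\Delta_{\epsilon n}(\ket\varphi^{\otimes n}_{P^nS^n})$ and a completely positive trace non-increasing map $\tilde{\cal K}_{R'P^n\rightarrow\complex}$ with $\|\Eacc(\rho_{RS^N})-c_{N,d^2}(\tilde{\cal K}_{R'P^n}\otimes\idcptp_{S^n})(\tilde\psi_{R'P^nS^n})\|_1\leq\exp(-\Omega(N))$. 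The natural candidate for the ideal operator is $\psi_{S^n}:=\trace[R'P^n]{\tilde\psi_{R'P^nS^n}}$: by construction $\ket{\tilde\psi_{R'P^nS^n}}$ purifies $\psi_{S^n}$ and it lies in $\hilbert_{R'}\otimes\Delta_{\epsilon n}(\ket\varphi^{\otimes n}_{P^nS^n})$, so $\psi_{S^n}$ is $\epsilon$-ideal in the sense of Definition~\ref{idealdef} (with $R'$ playing the role of the purifying register $R$); subnormalization of $\psi_{S^n}$ is inherited from the construction underlying Theorem~\ref{thm:mainresultunpermuted}.

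The key step is to show $(\tilde{\cal K}_{R'P^n}\otimes\idcptp_{S^n})(\tilde\psi_{R'P^nS^n})\leq\psi_{S^n}$. Since $\tilde{\cal K}_{R'P^n\rightarrow\complex}$ is completely positive and trace non-increasing with one-dimensional output, it is of the form $\tilde{\cal K}(X)=\trace{E_{R'P^n}X}$ for some $0\leq E_{R'P^n}\leq\id_{R'P^n}$; hence $(\tilde{\cal K}_{R'P^n}\otimes\idcptp_{S^n})(\tilde\psi_{R'P^nS^n})=\trace[R'P^n]{(E_{R'P^n}\otimes\id_{S^n})\tilde\psi_{R'P^nS^n}}$, which is precisely an operator post-selected from registers $R'P^n$ of $\tilde\psi_{R'P^nS^n}$. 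Remark~\ref{rem:fq09h4g} (equivalently, the ``only if'' direction of Proposition~\ref{prop:leqequivpostsel}) then yields $\trace[R'P^n]{(E_{R'P^n}\otimes\id_{S^n})\tilde\psi_{R'P^nS^n}}\leq\trace[R'P^n]{\tilde\psi_{R'P^nS^n}}=\psi_{S^n}$.

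To finish, I would set $\sigma_{S^n}:=\Eacc(\rho_{RS^N})-c_{N,d^2}(\tilde{\cal K}_{R'P^n}\otimes\idcptp_{S^n})(\tilde\psi_{R'P^nS^n})$, so that $\|\sigma_{S^n}\|_1\leq\exp(-\Omega(N))$ directly by Theorem~\ref{thm:mainresultunpermuted}, and then $\Eacc(\rho_{RS^N})=c_{N,d^2}(\tilde{\cal K}_{R'P^n}\otimes\idcptp_{S^n})(\tilde\psi_{R'P^nS^n})+\sigma_{S^n}\leq c_{N,d^2}\psi_{S^n}+\sigma_{S^n}$, using the inequality just established together with $c_{N,d^2}\geq 0$. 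As already noted in the text, $\sigma_{S^n}$ need not be positive semidefinite, but this is immaterial since only its trace norm appears in the statement. I do not expect a serious obstacle here: the corollary is essentially a reformulation of Theorem~\ref{thm:mainresultunpermuted}, and the only point requiring a little care is writing the one-dimensional-output map $\tilde{\cal K}$ in the form $\trace{E\,\cdot}$ so that Remark~\ref{rem:fq09h4g} becomes applicable.
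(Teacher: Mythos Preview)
Your proposal is correct and follows essentially the same route as the paper, which simply says the corollary is obtained from Theorem~\ref{thm:mainresultunpermuted} ``by means of Proposition~\ref{prop:leqequivpostsel} and Remark~\ref{rem:fq09h4g}''; you have spelled out those details accurately, in particular the observation that a completely positive trace non-increasing map into $\complex$ is necessarily of the form $X\mapsto\trace{E X}$ with $0\leq E\leq\id$, which is exactly what makes Remark~\ref{rem:fq09h4g} applicable. One cosmetic slip: the direction of Proposition~\ref{prop:leqequivpostsel} you invoke (post-selection implies the operator inequality) is the ``if'' direction, not the ``only if'' direction.
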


\begin{proof}[of Theorem~\ref{thm:mainresultunpermuted}]
  Let $\psi_{S^n}$ and $ \sigma_{S^n}$ be as in the statement of
  Corollary~\ref{cor:fja948h9hf}, i.e. such that
  \begin{equation}
    \label{eq:fjowiejfq4g}
    \frac 1{n!} \sum_{\pi\in \symgrp n} \pi_{S^n}\Eacc(\rho_{RS^N})
    \pi_{S^n}^\dagger \leq c_{N,d^2}\cdot\psi_{S^n}
    +\sigma_{S^n}
  \end{equation}
  and define $ \tau_{S^n}:= \psi_{S^n} + c_{N,d^2}^{-1}\cdot
  \sigma_{S^n}$. Since $\psi_{S^n}$ is $\epsilon$-ideal, let
  $\ket{\psi_{{R'}P^nS^n}}$ be the purification of $\psi_{S^n}$ that
  lives in the low error subset $\hilbert_{R'}\otimes \Delta_{\epsilon
    n}(\ket\varphi^{\otimes n}_{P^nS^n})$. Let
  $\ket{\tau_{{R'}P^nS^n}}$ be a purification\footnote{The existence
    of a purification of $\tau_{S^n}$ with this property can be argued
    by using Uhlmann's Theorem: since $\tau_{S^n}$ is close in
    fidelity to $\psi_{S^n}$, for any purification
    $\ket{\psi_{{R'}P^nS^n}}$ of $\psi_{S^n}$, there exists a
    purification $\ket{\tau_{{R'}P^nS^n}}$ that is also close to
    $\ket{\psi_{{R'}P^nS^n}}$. } of $\tau_{S^n}$ such that $\|
  \psi_{{R'}P^nS^n}- \tau_{{R'}P^nS^n}\|_1\leq \exp(-\Omega(N))$.
  From (\ref{eq:fjowiejfq4g}) and
  Proposition~\ref{prop:leqequivpostsel} we can show that there exists
  a trace non-increasing completely positive map ${\cal
    K}_{{R'}P^n\rightarrow \Pi}$ that produces a classical register
  $\Pi$ from purification registers ${R'}P^n$ with the property that
  \begin{align*}
    \frac 1{n!} \sum_{\pi\in \symgrp n} \proj \pi_\Pi \otimes \pi_{S^n}
    \Eacc(\rho_{RS^N})\pi^\dagger_{S^n}
    &= c_{N,d^2}({\cal K}_{{R'}P^n\rightarrow \Pi}\otimes \idcptp_{S^n}) (
      \tau_{{R'}P^nS^n})\enspace. 
  \end{align*}

  Suppose now we were to submit both sides of the above equality to
  the following quantum operation: measure register $\Pi$ and undo the
  observed permutation on register $S^n$. The left-hand side of the
  above would become $\Eacc(\rho_{RS^N})$
  whereas the right-hand side becomes
  \begin{equation*}\label{eq:f184h19f}
    c_{N,d^2}\cdot\sum_{\pi\in \symgrp n} (\bra \pi_\Pi\otimes \pi^{-1}_{S^n})({\cal
        K}_{{R'}P^n\rightarrow \Pi}\otimes \idcptp_{S^n}) ( 
      \tau_{{R'}P^nS^n})(\ket \pi_\Pi\otimes
      (\pi^{-1}_{S^n})^\dagger)\enspace .
  \end{equation*}

  We now show how to represent this operator in a way that corresponds
  to the statement we need to prove, i.e. as post-selected from a
  rank-one operator living almost entirely in the low-error
  subspace. To this end, define\footnote{It is always possible to
    define such an isometry and projector for any trace non-increasing
    completely positive superoperator ${\cal E}_{A\rightarrow B}$. To
    see this, let ${\cal E}(\sigma_A)=\sum_k E_k \sigma_A E_k^\dagger$
    where $E_k\in L(\hilbert_A,\hilbert_B)$ are the Kraus operators of
    $\cal E$ and define the isometry $U_{A\rightarrow BZ}$ as mapping
    an arbitrary state $\ket\psi_A$ to $\sum_k E_k \ket\psi_A \ket k_Z
    + \sqrt{ \id-\sum_k E_k^\dagger E_k} \ket \psi_A \ket \bot_Z$
    where $\ket\bot_Z$ is orthogonal to $\ket k_Z$ for every $k$. Then
    $\mathbb P_Z=\sum_k \proj k_Z$ suffices as the required projector
    since $\trace[Z]{(\id_B\otimes \mathbb P_Z) U_{A\rightarrow BZ}
      \sigma_A U_{A\rightarrow BZ}^\dagger}= \sum_k E_k \sigma_A
    E_k^\dagger={\cal E}_{A\rightarrow B}(\sigma_A)$. } an isometry
  $U_{{R'}P^n\rightarrow Z\Pi}$ that purifies the action of ${\cal
    K}_{{R'}P^n\rightarrow \Pi}$, i.e. such that for any
  $\nu_{{R'}P^n}$,
  \begin{equation*}
    \mathcal{K}_{{R'}P^n\rightarrow \Pi} (\nu_{{R'}P^n}):=
    \trace[Z]{(\mathbb P_{Z}\otimes \id_{\Pi}) \cdot U_{{R'}P^n\rightarrow
        Z\Pi} \cdot \nu_{{R'}P^n}\cdot
      (U_{{R'}P^n\rightarrow Z\Pi})^\dagger}
  \end{equation*}
  for some projector $\mathbb P_{Z}$.  Using this representation, the
  post-sampling operator can be expressed as
  \begin{equation}\label{eq:fqowiefq04g}
   \Eacc(\rho_{RS^N})=
    c_{N,d^2}\cdot\trace[Z]{(\mathbb P_{Z}\otimes \id_{ S^n}) \cdot
      \sum_{\pi\in \symgrp n}[U^\pi_{{R'}P^n\rightarrow 
        Z}\otimes \pi^{-1}_{S^n}]   (\tau_{{R'}P^nS^n})
    }
  \end{equation}
  where $U^\pi_{{R'}P^n\rightarrow Z}:= (\id_Z\otimes \bra
  \pi_\Pi)\cdot U_{{R'}P^n\rightarrow Z\Pi}$ and where $[U](\rho)$ is
  short for $U\rho U^\dagger$.

  Define the operator
  \begin{equation*}
    \tilde\psi_{ZS^n}:= \sum_{\pi\in \symgrp n}(U^\pi_{{R'}P^n\rightarrow
      Z}\otimes \pi^{-1}_{S^n}) \psi_{{R'}P^nS^n}
    (U^\pi_{{R'}P^n\rightarrow Z}\otimes \pi^{-1}_{S^n})^\dagger\enspace. 
  \end{equation*}
  where $ \psi_{{R'}P^nS^n}$ is the purification of $ \psi_{S^n}$
  defined earlier. It isn't too hard to show that $\tilde\psi_{S^n}$
  is such that $\psi_{S^n}= \frac 1{n!}\sum_{\pi\in \symgrp
    n}\pi_{S^n} \tilde\psi_{S^n}\pi_{S^n}^\dagger$.  Since
  $\psi_{S^n}$ has a purification in the low-error subspace,
  Lemma~\ref{lem:unpermuteideal} implies that $\tilde\psi_{S^n}$
  itself admits a purification in this subspace. Let
  $\ket{\tilde\psi_{{R'}P^nS^n}}$ be this purification and let $\tilde
  {\cal K}_{{R'}P^n\rightarrow \complex}$ be the superoperator that
  first maps $\ket{\tilde \psi_{{R'}P^nS^n}}$ to $\tilde \psi_{ZS^n}$
  and then applies $\sigma_Z\mapsto\trace[Z]{\mathbb P_Z\sigma_Z}$ to
  register $Z$. Then, using the definition of $\tilde \psi_{
    {R'}P^nS^n}$ and $\tilde {\cal K}_{{R'}P^n}$, and since completely
  positive trace non-increasing maps cannot increase the trace
  distance,
  \begin{align*}
    \|\Eacc(\rho_{RS^N}) &- c_{N,d^2}
      (\tilde {\cal K}_{ {R'}P^n}\otimes \idcptp_{S^n})(\tilde \psi_{
      {R'}P^nS^n})\|_1\\
    &=
      \begin{aligned}[t]
        \bigg\|c_{N,d^2}\cdot&\tr_{Z}\bigg(\mathbb P_{Z}\otimes \id_{
          S^n}) \cdot \\&\sum_{\pi\in \symgrp
          n}[U^\pi_{{R'}P^n\rightarrow Z}\otimes \pi^{-1}_{S^n}]
        \Big(\tau_{{R'}P^nS^n}-\psi_{{R'}P^nS^n}\Big)
        \bigg)\bigg\|_1
      \end{aligned}
\\
    &\leq c_{N,d^2}\cdot\| \tau_{{R'}P^nS^n} - \psi_{{R'}P^nS^n}\|_1\\
    &\leq \exp(-\Omega(N))
  \end{align*}
  where in the first inequality $\Eacc(\rho_{RS^N})$ is replaced with (\ref{eq:fqowiefq04g}) and the
  last inequality follows from our choice of $\ket
  {\tau_{{R'}P^nS^n}}$. \qed

\end{proof}


\section{Two-Party Randomness Generation}
\label{sec:cointoss}

\subsection{The Protocol}
\label{sec:cointossprotocol}

The protocol for 
randomness generation is depicted in Fig.~\ref{fig:cointossprotocol}. 
The protocol works as follows: Alice first has to generate $N$ EPR
pairs and send half of each to Bob. Bob then uses our sampling
protocol of Fig.~\ref{fig:samplingprot} to certify that the state
Alice sent him is (close to) the prescribed state. If Bob's check
succeeds, then our quantum sampling result says that Alice basically
prepared the right state, up to a few errors. Bob's measurement
outcome will then have very high min-entropy (arbitrarily close to the
maximum $n$).

\begin{figure}[h]
  \begin{framed}
    \begin{enumerate}
    \item Alice prepares the state $\ket{\Phi^+}^{\otimes N}_{A^NB^N}$
      for $\ket{\Phi^+}:= \frac 1{\sqrt 2}(\ket {00}+ \ket{11})$ and
      sends the system $B^N$ to Bob.
    \item \label{step:samplingforct}Alice and Bob perform protocol
      \purifbased from Fig.~\ref{fig:samplingprot} with Alice as the
      prover and Bob as the sampler and with $k=\beta N$ for $\beta>0$
      such that $\beta N$ is an integer.  Let $\rho_{A^nB^n}\in
      \density{(\hilbert_2\otimes \hilbert_2)^{\otimes n}}$ be the
      resulting normalized joint state of $n= N-k$ pairs of
      qubits.
      \item Alice and Bob respectively measure their $n$ qubits in the
        computational basis and output their
        respective measurement outcomes $X_A$ and $X_B$.
    \end{enumerate}
  \end{framed}
  
  \caption{The randomness generation protocol. $N$ is the security
    parameter, $\beta$ determines the size of the sample.}
  \label{fig:cointossprotocol}
\end{figure}

\subsection{Entropy of Alice and Bob's Outputs}
\label{sec:cointossproof}

Since Alice is the preparer of the $N$ EPR pairs, her output will have
high min-entropy. The tricky part of the following proof is showing
that Bob's freedom in choosing $t$ and accepting or refusing the
sampling outcome cannot influence too much the distribution of Alice's
measurement outcome.

\begin{lemma}[Entropy of Alice's output]
\label{lem:aliceentropy}
  If Alice follows the protocol, then for any $\gamma>0$, her output
  $X_A\in \bool^n$ satisfies
  \begin{equation*}
    \hmin{X_A}\geq (1-\gamma)n\enspace ,
  \end{equation*}
  except with probability negligible in $n$.
\end{lemma}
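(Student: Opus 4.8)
The plan is to reduce the statement to the min-entropy of Alice's output in an idealized scenario, and then invoke the main sampling result to transfer the bound to the real execution. First I would fix $\gamma>0$, set $\epsilon := \gamma/2$ (or some similar small constant), and consider the bad event $E_{\mathrm{bad}}$ for Bob's acceptance combined with Alice's output having min-entropy below $(1-\gamma)n$. Formally, Alice measures her $n$ remaining qubits in the computational basis; the relevant object is the joint (subnormalized, conditioned on Bob accepting) state $\rho^\acc_{A^n B^n}$. By Corollary~\ref{thm:mrupperbound} applied to the sampling protocol \purifbased (which satisfies Definition~\ref{def:perminv} by Proposition~\ref{prop:perminvofpurifsampling}), we have $\rho^\acc_{B^n} \leq c_{N,d^2}\cdot\psi_{B^n} + \sigma_{B^n}$ where $\psi_{B^n}$ is $\epsilon$-ideal with respect to the reference state $\varphi = \frac{\id}{2}$, the halves of EPR pairs, and $\|\sigma_{B^n}\|_1 \leq \exp(-\Omega(N))$. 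Here $d = 2$, so $c_{N,d^2} = c_{N,4} \leq (N+1)^3$ is polynomial.

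The key observation is that from Bob's side the reference state is $\varphi_B = \frac{\id}{2}$, and its canonical purification is the EPR pair $\ket{\Phi^+}_{PB}$; measuring $B$ in the computational basis of such a state gives a perfectly uniform bit. So the next step is to show: if $\psi_{B^n}$ is $\epsilon$-ideal, then Alice's outcome distribution obtained by measuring her purifying register in the computational basis has min-entropy at least $(1-2\epsilon)n$ (say) — because the $\epsilon$-ideal state has a purification in $\hilbert_R\otimes\Delta_{\epsilon n}(\ket{\Phi^+}^{\otimes n})$, i.e. it agrees with $n$ EPR pairs on at least $n - \epsilon n$ positions, and on those positions the computational-basis outcome is uniform; the remaining $\epsilon n$ positions contribute at most $\epsilon n$ bits of "guessability", so the guessing probability is at most $2^{-(1-\epsilon)n}\cdot 2^{\epsilon n}$-type bound — I would be slightly more careful here, using that the Hamming-ball structure means there is a unitary supported on $\epsilon n$ positions relating the state to $\ket{\Phi^+}^{\otimes n}$, and bounding $p_{\guess}$ by summing over the choice of the bad positions. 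This gives $p^{\mathrm{id}}_{\mathrm{bad}} := \Pr[\hmin{X_A} < (1-\gamma)n \text{ on the ideal state}] \leq 2^{-\Omega(n)}$ once $\gamma > 2\epsilon$ with room to spare, using the combinatorial bound $\binom{n}{\epsilon n}\leq 2^{h(\epsilon)n}$ mentioned in the preliminaries.

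Finally I would assemble the pieces exactly as in the "Any state $\rho_{S^n}^\acc$ that satisfies (\ref{intuitmain})..." discussion following the main theorem: letting $E_{\mathrm{bad}}$ be the POVM element corresponding to "Alice's computational-basis measurement yields a string in a set of size $\leq 2^{(1-\gamma)n}$" (the event witnessing low min-entropy, via the measurement $\mathcal{Q}$ that performs Alice's measurement), we get
\[
p^{\mathrm{real}}_{\mathrm{bad}} \leq c_{N,d^2}\cdot p^{\mathrm{id}}_{\mathrm{bad}} + \|\sigma_{B^n}\|_1 \leq (N+1)^3\cdot 2^{-\Omega(n)} + \exp(-\Omega(N)),
\]
which is negligible in $n$ (recall $N = n + k = n + \beta N$, so $N = \Theta(n)$). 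Since this holds for Alice's output conditioned on Bob accepting, and on Bob rejecting there is no output, the lemma follows. The main obstacle I anticipate is the middle step: carefully bounding the min-entropy of the outcome of measuring the purifying register of an $\epsilon$-ideal state — one must handle the fact that the "bad" $\epsilon n$ positions are not a fixed set and that Alice's purifying register $R$ (the $A^n$ register together with any extra purification $R'$) can be correlated arbitrarily with the bad positions, so the clean bound really relies on the Hamming-ball structure $\Delta_{\epsilon n}(\ket{\Phi^+}^{\otimes n})$ localizing all deviation to few positions; getting the union bound over position sets to interact correctly with the min-entropy definition is the delicate part.
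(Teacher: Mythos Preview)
Your proposal misidentifies which party is adversarial. In this lemma \emph{Alice is honest} (she really prepares $\ket{\Phi^+}^{\otimes N}$) and \emph{Bob is the potential adversary}: he may choose the sample $t$ after looking at his qubits, and he may abort selectively. The main sampling result (Theorem~\ref{thm:mainresultunpermuted} / Corollary~\ref{thm:mrupperbound}) goes the other way---it protects an honest sampler Bob against a dishonest prover Alice, and it is what the paper uses in the companion lemma on the entropy of \emph{Bob's} output. Here Bob need not follow the sampling honestly at all, so the map $\Eacc$ is not even the right model, and even if it were, the bound it yields is on $\rho^{\acc}_{B^n}$, not on Alice's register. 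Your attempt to pass from the ideal $\psi_{B^n}$ back to Alice's measurement outcome confuses the abstract purifying register $R'P^n$ of the ideal state with Alice's actual register $A^n$; these are related only through the post-selection operator of Proposition~\ref{prop:leqequivpostsel}, which acts on the purifying side arbitrarily and gives no control over computational-basis measurements there.

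The paper's proof is entirely elementary and does not invoke the sampling machinery. Since Alice is honest, her $N$ qubits are maximally mixed to begin with, so if she measured all of them the outcome would have min-entropy exactly $N$. Bob can only hurt this in two ways: (1) choosing $t$ adaptively after measuring $B^N$, and (2) aborting selectively. For (1), a Markov argument on the guessing probability shows that, except with probability $2^{-\alpha N}$ over Bob's choice, $\hmin{X_A\mid T=t}\geq (1-\alpha)N$; the chain rule then removes the $k=\beta N$ sampled positions. For (2), conditioning on acceptance inflates the guessing probability by at most $1/\Pr[\acc]$, costing at most $\alpha N$ bits of min-entropy unless $\Pr[\acc]\leq 2^{-\alpha N}$. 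Combining these gives $\hmin{X_A}\geq (1-2\alpha-\beta)N$ except with negligible probability, and choosing $\alpha,\beta$ so that $2\alpha+\beta=\gamma$ finishes.
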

\begin{proof}
  Let $\rho_{A^NB^N}$ be the joint state of Alice and Bob before the
  sampling phase. As the preparer of the quantum state, Alice prepares
  $N$ perfect EPR pairs (i.e. $\rho_{A^NB^N}= \proj{\Phi^+}^{\otimes
    N}$), so her measurement outcome would have maximal min-entropy
  for the $n$ remaining qubits were it not for Bob's actions. Bob can
  bias the outcome of Alice's measurement in two possible ways: (1) he
  can measure his register $B^N$ \emph{before} choosing $t$ and make
  $t$ depend on this measurement outcome and (2) he can make the
  sampling abort even though Alice was honest. We analyze both
  possibilities separately, showing that each cannot reduce the
  min-entropy by more than a small linear amount, except with
  negligible probability.

  For (1), suppose Bob performs some measurement on his register $B^N$
  that yields sample choice $t\subset[N]$ with probability $p_t$ and
  results in the reduced density operator $\rho^t_{A^N}$ on Alice's
  side. Suppose also that Alice was to measure her whole state at this
  point, resulting in a measurement outcome $X_A\in \bool^N$. Observe
  that by the law of total probability, 
  \begin{equation*}
    2^{-N}=  2^{-\hmin{X_A}_{\rho}}= \sum_t p_t\cdot 2^{-\hmin{X_A\mid
        T=t}_{\rho^t}} 
    \enspace ,  
  \end{equation*}
  where $2^{-\hmin{X_A\mid T=t}_{\rho^t}}$ gives the maximal
  probability of guessing $X_A$ given $T=t$ when $X_A$ was obtained by
  measuring $\rho^t_{A^N}$.  It holds by Markov's inequality that
  \begin{equation*}
    \sum_t p_t\cdot [\hmin{X_A\mid T=t}_{\rho^t} \leq N-(\alpha
    N)]\leq 2^{-\alpha N}
  \end{equation*}
  where $[\cdot ]$ is the Iverson bracket which evaluates to 1 if the
  contents is true and to 0 otherwise.  In other words, the values of
  $t$ for which $\hmin{X_A\mid T=t}_{\rho^t}$ is less than $(1-\alpha)
  N$ have combined probability less than $ 2^{-\alpha N}$.  Now, Alice
  does not measure her whole state, but instead only those positions
  that do not belong to $t$, so let $X_A^{\bar t}$ be the outcome of
  measuring the qubits outside of $t$ and let $X_A^t$ be the outcome
  for the positions in $t$.  The following holds except with
  negligible probability over the choice of $t$:
  \begin{equation}\label{eq:fjq0293hf0q3hg}
    \hmin{X_A^{\bar t}\mid T=t}\geq \hmin{X_A\mid T=t, X^t_A}\geq
    (1-\alpha -\beta)N 
  \end{equation}
  where the last inequality follows from the chain rule for the
  min-entropy with $\hmax{X_A^t}= \beta N$.
  
  To deal with (2), observe that
  \begin{equation}\label{eq:FHAEfh0wh0fah}
    2^{-\hmin{X_A^{\bar t}\mid T=t, \acc}} \leq 2^{-\hmin{X_A^{\bar
      t}\mid T=t}}/\Pr[\acc] \leq 2^{- \hmin{X_A^{\bar t}\mid T=t}+\alpha N}
  \end{equation}
  whenever $\Pr[\acc]\geq 2^{-\alpha N}$. 

  We can conclude that, except with negligible probability
  upper bounded by $2\cdot 2^{-\alpha N}$, the min-entropy of Alice's
  output is
  \begin{equation*}
    \hmin{X_A^{\bar t}\mid T=t, \acc}\geq  (1-2\alpha -\beta)N 
  \end{equation*}
  by combining the bounds (\ref{eq:fjq0293hf0q3hg}) and
  (\ref{eq:FHAEfh0wh0fah}) and the respective probabilities that these
  bounds hold.  The statement is satisfied by choosing $\alpha$ and
  $\beta$ such that $\gamma= 2\alpha + \beta$ and noting that
  $N>n$.\qed
\end{proof}

We rely on the next Lemma to lower-bound the amount of min-entropy in
the measurement outcome of Bob. It says that if the joint state of
Alice and Bob lives in a quantum Hamming ball of small radius around
$n$ copies of an EPR pair, then Bob's reduced density operator has
high min-entropy.
\begin{lemma}\label{lem:minentropyidealstate}
  Let $\epsilon>0$ and $\ket{\sigma_{RP^nS^n}}\in \hilbert_R\otimes
  \Delta_{\epsilon n}(\ket{\Phi^+}^{\otimes n}_{P^nS^n})$. It holds that
  \begin{equation*}
\hmin{S^n}_\sigma
  \geq (1-\epsilon-h(\epsilon))n \enspace .
\end{equation*}
\end{lemma}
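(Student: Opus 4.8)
The plan is to control the largest eigenvalue $\lambda_{\max}(\sigma_{S^n})$ of Bob's reduced state, since $\hmin{S^n}_\sigma=-\log_2\lambda_{\max}(\sigma_{S^n})$; concretely I want $\lambda_{\max}(\sigma_{S^n})\le 2^{-(1-\epsilon-h(\epsilon))n}$. Write $r:=\epsilon n$ for the Hamming radius. The starting point is the explicit form of the projector onto the quantum Hamming ball recalled in Section~\ref{sec:tools}: setting $\Pi_E:=\bigotimes_{i\in E}(\id-\proj{\Phi^+})_{P_iS_i}\otimes\bigotimes_{i\notin E}\proj{\Phi^+}_{P_iS_i}$, we have $\mathbb P^{r,\ket{\Phi^+}}_{P^nS^n}=\sum_{E\subseteq[n],\,|E|\le r}\Pi_E$, the summands being mutually orthogonal projectors. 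As $\ket{\sigma_{RP^nS^n}}$ is a unit vector in $\hilbert_R\otimes\Delta_r(\ket{\Phi^+}^{\otimes n}_{P^nS^n})$, it is fixed by $\id_R\otimes\mathbb P^{r,\ket{\Phi^+}}_{P^nS^n}$ and hence decomposes orthogonally as $\ket\sigma=\sum_{|E|\le r}\ket{\sigma_E}$ with $\ket{\sigma_E}:=(\id_R\otimes\Pi_E)\ket\sigma$; in particular $\sum_{|E|\le r}p_E=1$ where $p_E:=\|\ket{\sigma_E}\|^2$.

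The key structural point is that each $\ket{\sigma_E}$ factorizes as $\ket{\Phi^+}^{\otimes|\bar E|}_{P_{\bar E}S_{\bar E}}\otimes\ket{\tilde\xi_E}_{RP_ES_E}$ for some subnormalized vector $\ket{\tilde\xi_E}$ of norm-squared $p_E$, simply because $\Pi_E$ projects every pair $P_iS_i$ with $i\notin E$ onto $\ket{\Phi^+}$. Consequently the $S^n$-marginal $(\sigma_E)_{S^n}:=\trace[RP^n]{\proj{\sigma_E}}$ is a tensor product of $|\bar E|$ maximally mixed qubits (one from each $\ket{\Phi^+}_{P_iS_i}$ with $i\notin E$) with a positive operator on $S_E$ of trace $p_E$; since the top eigenvalue of a positive operator is at most its trace and $|\bar E|=n-|E|\ge n-r=(1-\epsilon)n$, we obtain $\lambda_{\max}\big((\sigma_E)_{S^n}\big)\le 2^{-|\bar E|}p_E\le 2^{-(1-\epsilon)n}p_E$.

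It remains to reassemble $\sigma_{S^n}$ from these pieces without losing too much. Applying Proposition~\ref{prop:smallnumbterms} to the family $\{\ket{\sigma_E}\}_{|E|\le r}$ gives $\proj\sigma\le M\sum_{|E|\le r}\proj{\sigma_E}$, where $M:=\#\{E\subseteq[n]:|E|\le r\}=\sum_{j=0}^{r}\binom nj$ is the number of terms, and tracing out $RP^n$ (a completely positive map) yields $\sigma_{S^n}\le M\sum_{|E|\le r}(\sigma_E)_{S^n}$. Combining this with the previous paragraph and subadditivity of $\lambda_{\max}$ on positive operators, $\lambda_{\max}(\sigma_{S^n})\le M\sum_{|E|\le r}\lambda_{\max}\big((\sigma_E)_{S^n}\big)\le M\cdot 2^{-(1-\epsilon)n}\sum_{|E|\le r}p_E=M\cdot 2^{-(1-\epsilon)n}$. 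We may assume $\epsilon<\tfrac12$, since otherwise $1-\epsilon-h(\epsilon)\le 0$ and the claim is vacuous; then $M=\sum_{j\le\epsilon n}\binom nj\le 2^{h(\epsilon)n}$ (a routine strengthening of the estimate $\binom n{\beta n}\le 2^{h(\beta)n}$ recorded in Section~\ref{sec:notation}), so $\lambda_{\max}(\sigma_{S^n})\le 2^{-(1-\epsilon-h(\epsilon))n}$ and hence $\hmin{S^n}_\sigma\ge(1-\epsilon-h(\epsilon))n$.

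The only real subtlety — and the source of the $h(\epsilon)n$ loss — is the coherent superposition over the error set $E$: a single vector of $\Delta_{\epsilon n}(\ket{\Phi^+}^{\otimes n})$ may be spread over exponentially many distinct error patterns, so one cannot argue as though the $\le\epsilon n$ deviating positions were fixed. Proposition~\ref{prop:smallnumbterms} is precisely the tool that converts this coherence into the combinatorial factor $M=\sum_{j\le\epsilon n}\binom nj$, which the binomial estimate then absorbs into the entropy. Everything else — the factorization of $\ket{\sigma_E}$, the eigenvalue-versus-trace inequality, and bookkeeping the weights $p_E$ — is elementary.
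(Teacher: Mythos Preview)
Your proof is correct and follows essentially the same route as the paper's: decompose $\ket\sigma$ into the orthogonal components $\ket{\sigma_E}=(\id_R\otimes\Pi_E)\ket\sigma$, apply Proposition~\ref{prop:smallnumbterms} to pay the factor $|\Pi_\epsilon|\le 2^{h(\epsilon)n}$, and use that each $(\sigma_E)_{S^n}$ is maximally mixed on the $\ge(1-\epsilon)n$ positions outside $E$. The only cosmetic difference is that the paper phrases the conclusion as the operator inequality $\sigma_{S^n}\le 2^{-(1-\epsilon-h(\epsilon))n}\id_{S^n}$ directly, whereas you pass through $\lambda_{\max}$ and subadditivity; these are equivalent.
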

\begin{proof}
  Let $\Pi_\epsilon= \{E\subseteq[n]: |E|\leq \epsilon n\}$ and let
  $\mathbb P_{P^nS^n}^{\epsilon n, \ket{\Phi^+}} = \sum_{E\in
    \Pi_\epsilon} \mathbb P^E_{P^nS^n} $ be the projector onto
  $\Delta_{\epsilon n}(\ket{\Phi^+}^{\otimes n}_{P^nS^n})$ where
  \begin{equation*}
    \mathbb P^E_{P^nS^n}= \bigotimes_{i\in E} (\id-\proj{\Phi^+})_{P_iS_i}
      \bigotimes_{i\notin E} 
      \proj {\Phi^+}_{P_iS_i}\enspace .
  \end{equation*}
  Define $\ket{\tilde \sigma^E_{RP^nS^n}}= (\id_R\otimes \mathbb P^E_{P^nS^n})
  \ket{\sigma_{RP^nS^n}}$ for each $E\in \Pi_\epsilon$. It holds by
  Proposition~\ref{prop:smallnumbterms} that
  \begin{equation*}
    {\sigma_{RP^nS^n}}= \sum_{E,E'\in\Pi_\epsilon}\sketbra
    {\tilde \sigma^E_{RP^nS^n}}{\tilde \sigma^{E'}_{RP^nS^n}}  
    \leq 2^{h(\epsilon)n} \sum_{E\in\Pi_\epsilon}
    \proj {\tilde \sigma^E_{RP^nS^n}}
  \end{equation*}
  because the set $\Pi_\epsilon$ contains at most
  $ 2^{h(\epsilon)n}$ elements.  Furthermore, we know by the
  definition of $\ket{\tilde \sigma_{RP^nS^n}^E}$ that
  \begin{equation*}
    \frac{\tilde\sigma^E_{S^n}}{\|\tilde\sigma^E_{S^n}\|_1}= 
    \left(\bigotimes_{i\notin E} 
      \frac {\id_{S_i}}2\right)\otimes \psi_{S_E }\leq 2^{-n+|E|}\id_{S^n}
  \end{equation*}
  for some normalized state $ \psi_{S_E}$ living on register $S_E=
  \bigotimes_{i\in E} S_i$. Since $|E|\leq \epsilon n$, it directly
  follows that
  \begin{equation*}
    \sigma_{S^n}\leq 2^{h(\epsilon)n} \sum_{E\in \Pi_\epsilon} 
    \tilde \sigma^E_{S^n}\leq  2^{-(1-\epsilon-h(\epsilon))n}\id_{S^n}
  \end{equation*}
  and we can thus conclude that $\hmin{S^n}_\sigma \geq
  (1-\epsilon-h(\epsilon))n $.  \qed
\end{proof}

Lower-bounding Bob's output min-entropy is essentially applying
Lemma~\ref{lem:minentropyidealstate} to Bob's state after the sampling
step of protocol of Fig.~\ref{fig:cointossprotocol} which can be
approximated by an ideal state by means of our main result
(Corollary~\ref{thm:mrupperbound}).
\begin{lemma}[Entropy of Bob's output] 
  If Bob follows the protocol, for any $\gamma>0$, his output $X_B\in
  \bool^n$ satisfies
  \begin{equation*}
    \hmin{X_B}\geq (1-\gamma)n\enspace ,
  \end{equation*}
  except with probability negligible in $n$.
\end{lemma}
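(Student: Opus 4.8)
The plan is to mirror the structure of the proof of Lemma~\ref{lem:aliceentropy} but going through our main sampling result instead of a direct argument. First I would fix parameters $\alpha,\epsilon>0$ (to be optimized at the end so that $\gamma$ can be made as small as desired) and consider the joint state $\rho_{A^nB^n}$ that Bob holds after the sampling phase, conditioned on acceptance. Treating Alice as a (possibly dishonest) prover in the protocol \purifbased of Fig.~\ref{fig:samplingprot}, and Bob as the honest sampler, Proposition~\ref{prop:perminvofpurifsampling} tells us the protocol satisfies Definition~\ref{def:perminv}, so Corollary~\ref{thm:mrupperbound} applies: Bob's subnormalized post-acceptance state $\rho^\acc_{B^n}$ satisfies $\rho^\acc_{B^n}\leq c_{N,4}\cdot\psi_{B^n}+\sigma_{B^n}$ for some $\epsilon$-ideal $\psi_{B^n}$ and $\|\sigma_{B^n}\|_1\leq\exp(-\Omega(N))$, where here $d=\dim\hilbert_B=2$ and the reference state is $\varphi=\id/2$.

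Next I would combine this with Lemma~\ref{lem:minentropyidealstate}. Since $\psi_{B^n}$ is $\epsilon$-ideal, by definition it has a purification $\ket{\psi_{RP^nB^n}}\in\hilbert_R\otimes\Delta_{\epsilon n}(\ket{\Phi^+}^{\otimes n}_{P^nB^n})$, so Lemma~\ref{lem:minentropyidealstate} gives $\psi_{B^n}\leq 2^{-(1-\epsilon-h(\epsilon))n}\,\id_{B^n}$ (the min-entropy bound is exactly an operator upper bound by a multiple of the identity). Plugging into the Corollary inequality, $\rho^\acc_{B^n}\leq\bigl(c_{N,4}\,2^{-(1-\epsilon-h(\epsilon))n}\bigr)\id_{B^n}+\sigma_{B^n}$. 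Since $c_{N,4}\leq(N+1)^{3}$ is only polynomial in $N$, the first term contributes at most $2^{-(1-\epsilon-h(\epsilon))n+o(n)}$ times the identity. The $\sigma_{B^n}$ term has negligible trace norm but need not be positive; I would absorb it by noting that the event ``Bob's measurement outcome lies in a set $\mathcal S\subseteq\bool^n$'' has probability at most $\trace{(\sum_{x\in\mathcal S}\proj x)\rho^\acc_{B^n}}/\Pr[\acc]$, and $\trace{(\sum_{x\in\mathcal S}\proj x)\sigma_{B^n}}\le\|\sigma_{B^n}\|_1$. So conditioned on acceptance with $\Pr[\acc]\geq 2^{-\alpha n}$, the guessing probability of $X_B$ is at most $2^{-(1-\epsilon-h(\epsilon))n+o(n)}/\Pr[\acc]+2^{\alpha n}\|\sigma_{B^n}\|_1 \le 2^{-(1-\epsilon-h(\epsilon)-\alpha)n+o(n)}+\exp(-\Omega(N))$, which yields $\hmin{X_B}\geq(1-\epsilon-h(\epsilon)-\alpha-o(1))n$ except with probability $\le 2^{-\alpha n}$ (accounting for the bad case $\Pr[\acc]<2^{-\alpha n}$, exactly as in (\ref{eq:FHAEfh0wh0fah})).

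Finally I would choose $\epsilon$ and $\alpha$ small enough that $\epsilon+h(\epsilon)+\alpha<\gamma$ (possible since $h(\epsilon)\to 0$ as $\epsilon\to 0$) and absorb the $o(1)$ loss, giving $\hmin{X_B}\geq(1-\gamma)n$ except with negligible probability. The main obstacle, and the step I would be most careful about, is the bookkeeping around conditioning on acceptance together with the non-positive error term $\sigma_{B^n}$: one has to pass from the operator inequality $\rho^\acc_{B^n}\leq c_{N,4}\psi_{B^n}+\sigma_{B^n}$ on the \emph{subnormalized} acceptance state to a min-entropy statement on the \emph{normalized} conditional state, handling both the division by $\Pr[\acc]$ and the fact that $\sigma_{B^n}$ only controls trace norm, not operator norm. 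This is the same pattern used for Alice in (\ref{eq:FHAEfh0wh0fah}), and I would reuse that $\Pr[\acc]\geq 2^{-\alpha n}$ dichotomy here.
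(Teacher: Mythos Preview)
Your proposal is correct and follows essentially the same approach as the paper: apply Corollary~\ref{thm:mrupperbound} to upper-bound Bob's (subnormalized) post-acceptance state by $c_{N,d^2}\psi_{B^n}+\sigma_{B^n}$, invoke Lemma~\ref{lem:minentropyidealstate} on the $\epsilon$-ideal $\psi_{B^n}$, split off the bad event $\Pr[\acc]<2^{-\alpha n}$, and finally choose $\epsilon,\alpha$ so that $\epsilon+h(\epsilon)+\alpha<\gamma$. Your bookkeeping around the non-positive error term $\sigma_{B^n}$ (bounding $\bra{x}\sigma_{B^n}\ket{x}$ by $\|\sigma_{B^n}\|_1$ directly) is in fact a touch more explicit than the paper's phrasing, which simply says the state ``will behave exactly like $\tilde\psi_{B^n}$'' except with negligible probability; but the argument and parameters are the same.
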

\begin{proof}
  The security of the protocol against dishonest Alice is almost a
  direct consequence of our quantum sampling result
  (Theorem~\ref{thm:mainresultunpermuted}). Let $\rho_{B^n}\in\mathcal
  D(\hilbert_2^{\otimes n})$ be the normalized state of Bob after
  step~\ref{step:samplingforct} of the protocol of
  Fig.~\ref{fig:cointossprotocol} given that Bob did not reject and
  let $P_\acc$ be the probability that he did not reject the sampling. By
  Corollary~\ref{thm:mrupperbound}, it holds that for any $\epsilon>0$
  there exists an ideal $\psi_{B^n}$ and an operator $\sigma_{B^n}$
  with negligible norm such that
  \begin{equation}\label{eq:08h2f4fh20hf}
    \rho_{B^n}\leq P_\acc^{-1}( c_{N,d^2} \psi_{B^n} +
    \sigma_{B^n})\enspace .
  \end{equation}
  Let $\tilde\psi_{B^n}=\frac{c_{N,d^2}}{P_\acc} \cdot\psi_{B^n}$. Then
  \begin{equation*}
  \left\| \frac{c_{N,d^2}}{P_\acc} (\psi_{B^n} +
  \sigma_{B^n}) - \tilde \psi_{B^n}\right\|_1=
  \frac{1}{P_\acc}\|\sigma_{B^n}\|_1\enspace ,
\end{equation*}
 which is negligible in $N$
  whenever $P_\acc$ is non-negligible. It follows that except with
  negligible probability, the right-hand side of
  (\ref{eq:08h2f4fh20hf}) will behave exactly like $\tilde
  \psi_{B^n}$, in which case their min-entropy will be
  equal. 
  This min-entropy is bounded below by
  \begin{equation}\label{eq:fh034h0h0h0h0}
    \hmin{\tilde \psi_{B^n}} = \hmin{\psi_{B^n}}  -\log \frac{c_{N,d^2}}{P_\acc} \geq (1-\epsilon -
    h(\epsilon))n -\log \frac{c_{N,d^2}}{P_\acc}
  \end{equation}
  by Lemma~\ref{lem:minentropyidealstate}. 

  Using the bound of (\ref{eq:fh034h0h0h0h0}), we can claim that the
  min-entropy of $\rho_{B^n}$ is lower-bounded by
  \begin{equation*}
    (1-\epsilon - h(\epsilon)-\alpha)n
  \end{equation*}
  unless one of two negligible probability events occurred. The first
  event is that $\rho_{B^n}$ behaves like $\sigma_{B^n}$ instead of
  $\tilde \psi_{B^n}$ and the second event is that Bob accepted the
  outcome of a sampling that had probability $P_\acc\leq
  c_{N,d^2}\cdot 2^{-\alpha n}$ of being accepted. We can conclude
  that the result $X_B$ of measuring $\rho_{B^n}$ in the computational
  basis will have min-entropy at least $(1-\epsilon -
  h(\epsilon)-\alpha)n$, except with negligible probability.  The
  statement follows by choosing $\epsilon$ and $\alpha$ in the above
  such that $\gamma= \epsilon + h(\epsilon)+\alpha$.\qed

\end{proof}


\section{Conclusion and Open Questions}

Statistical sampling is a natural task that is well understood from a
classical perspective. Classical tools such as Hoeffding's inequality,
Azuma's inequality and other results on concentration of measure that
are used to analyze classical sampling (and quantum sampling to a
certain degree~\cite{bouman-fehr}) are of no use when trying to sample
from quantum data with a \emph{mixed} reference state.  The tools of
symmetric invariance can substitute the classical tools up to a
certain degree when analyzing fully quantum sampling protocols. We
have introduced a framework for sampling mixed states by presenting a
general sampling protocol and we have shown that if an instantiation
of that general protocol respects simple criteria, then it can be used
to certify that a quantum population is close to an $n$-fold tensor
product of a reference state $\varphi$ in an adversarial setting.

Sampling of a quantum population is a new concept and many questions
are left unanswered, especially when sampling with a mixed reference
state where the usual (classical) tools do not apply. Precisely,
future directions for this work include:
\begin{enumerate}
\item A formulation of our results where a conclusion can be made when
  an error rate significantly larger than 0 has been observed. From an
  observed error rate of $\delta>0$ within the sample, we would want
  to conclude that the state of the remaining positions can be
  controlled by means of an $(\varepsilon+\delta)$-ideal state for small
  $\varepsilon>0$.
\item An extension of our results to multiple reference states for the
  same population instead of a fixed reference state $\varphi$, e.g.
  with reference states $\varphi_0, \varphi_1$ where register $i$ of
  the population is tested against $\varphi_{x_i}$ for $x\in
  \bool^n$. While sampling according to an arbitrary (pure) reference
  state is given ``for free'' for pure state sampling (since all pure
  states are related by a unitary transformation on the sampler's
  register), it requires more work in the case of mixed state
  sampling.
\item On top of the previous point, it is often useful for quantum sampling applications
   to have a statement in terms of an
  \emph{adaptive} sampling protocol where the reference states
  (i.e. the bits of $x$) are chosen adaptively by the adversary based
  on what positions were sampled. Such an extension would have
  applications in two-party cryptography where sampling is done in a
  sequential  manner using a 1- or 2-bit cryptographic primitive, such
  as cut-and-choose.  
   In fact, if our results were extended in such a
  way, it would allow to certify states with a 2-bit description (such
  as the BB84 encoding) using a 1-bit cut-and-choose, a task that is
  not known to be possible
  relying on existing sampling tools.
  The pure-state sampling framework of~\cite{bouman-fehr}
  was shown to apply in the adaptive setting in~\cite{fkszz13}. 
\end{enumerate}


\printbibliography

\appendix

\section{Permutation Invariance of Sampling Protocols}
\label{app:addproofs}
\subsection{Proof of Proposition~\ref{prop:perminvofpurifsampling}
}
\label{sec:purifscheme}

\newcommand{\Ut}{U_{R\rightarrow R'P^k}^t}
\newcommand{\Upi}{U_{R\rightarrow R'P^k}^{t_\pi}}
\newcommand{\Vt}{V^t_{S^N\rightarrow S^n S^k}}
\newcommand{\Vk}{V^{[k]}_{S^N\rightarrow S^n S^k}}
\newcommand{\Vpi}{V^{t_\pi}_{S^N\rightarrow S^n S^k}} We can assume
w.l.o.g. that the state $\rho_{RS^N}\in \density{\hilbert_R\otimes
  \hilbert_S^{\otimes N}}$ is pure and that adversarial strategies
against the protocol depicted in Fig.~\ref{fig:samplingprot} is
described by a family of isometries of the form $\Ut$ for $t\subseteq
[N]$ of size $k$, where $P^k$ represents the register sent to Sam and
supposed to contain the purifications of $\varphi_S$, and $R'$ is a
register kept by Paul.

For convenience, define the isometry $\Vt$
that, for any $t\subseteq [N]$, maps subsystems $S_i$ for $i\in t$
into the last $k$ subsystems (denoted $S^k$) and subsystems $S_i$ for
$i\notin t$ into the first $n=N-k$ subsystems (denoted $S^n$).  In
other words, isometry $V^t_{S}$ simply groups together the registers
to be sampled.

For an adversarial strategy as described above, the completely
positive trace non-increasing map $\Eacc$ that maps the input state
$\rho_{RS^N}$ to the sampler's conditional output is defined by
\begin{align*}
  \Eacc(\rho_{RS^N}):= \frac 1{\binom Nk}\sum_{t\subseteq [N]}
  \trace[R']{\bra{\varphi}^{\otimes k}_{P^kS^k}\cdot
    [U^t_R\otimes V^t_{S^N}](\rho_{PS})
    \cdot\ket{\varphi}^{\otimes 
      k}_{ P^k S^k}} \;.
\end{align*}
where we left the identity operator acting on $R'S^n$ implicit and
where $[U](\rho)$ is short for $U\rho U^\dagger$ for any isometry $U$.

The following property of $\Vt$ will be
useful for proving Lemma~\ref{lem:eqsymadv} below.
\begin{remark}\label{rem:dividepi}
  Let $\pi\in \symgrp N$, and let $t_\pi=  \{\pi^{-1}(i)\mid
  i\in [k]\}$. There exist $\tau^\pi\in \symgrp k$ and $\bar \tau^\pi\in
  \symgrp n$ such that $\Vk\cdot\pi_S = (
  \bar \tau^\pi_{S^n}\otimes \tau^\pi_{S^k} )\cdot
  \Vpi$. Furthermore, there is a
  one-to-one correspondence between permutations $\pi\in \symgrp N$ and
  triplets $(t_\pi, \tau^\pi, \bar\tau^\pi)$.
\end{remark}

\begin{lemma}\label{lem:eqsymadv}
  Protocol \purifbased from Fig.~\ref{fig:samplingprot} satisfies
  the first criterion of Definition~\ref{def:perminv}.

\end{lemma}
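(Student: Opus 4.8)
The plan is to turn the proof sketch that precedes Lemma~\ref{lem:eqsymadv} into an explicit construction of $\barEacc$. Following the excerpt, assume w.l.o.g.\ that $\rho_{RS^N}$ is pure with purifying vector $\ket{\rho_{RS^N}}$ and that Paul's strategy is an isometry family $\{\Ut\}$; write ${\cal E}_t$ for the \cptn map $\rho_{RS^N}\mapsto \trace[R']{\bra\varphi^{\otimes k}_{P^kS^k}[\Ut\otimes\Vt](\rho_{RS^N})\ket\varphi^{\otimes k}_{P^kS^k}}$, so that $\Eacc=\frac1{\binom{N}{k}}\sum_t {\cal E}_t$. By Remark~\ref{lem:purif} fix a symmetric purification $\ket{\rhobar}\in\sym^N(\hilbert_P\otimes\hilbert_S)$ of $\bar\rho_{S^N}:=\frac1{N!}\sum_{\pi\in\symgrp N}\pi_{S^N}\rho_{S^N}\pi^\dagger_{S^N}$. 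First I would observe that $\ket\mu:=\frac1{\sqrt{N!}}\sum_{\pi\in\symgrp N}(\pi_{S^N}\otimes\id_R)\ket{\rho_{RS^N}}\otimes\ket\pi_\Lambda$, with $\Lambda$ a fresh register carrying an orthonormal basis $\{\ket\pi\}_{\pi\in\symgrp N}$, is also a purification of $\bar\rho_{S^N}$ (tracing out $\Lambda$ kills the cross terms). Since any two purifications of the same operator are related by a partial isometry on the purifying registers (the dimension count is routine), $\barEacc$ may begin by applying to $P^N$ the \cptn map $\sigma\mapsto W\sigma W^\dagger$ with $(\id_{S^N}\otimes W)\ket{\rhobar}=\ket\mu$, then measuring $\Lambda$ in the $\{\ket\pi\}$ basis, which at outcome $\pi$ (probability $1/N!$) leaves $RS^N$ in state $(\pi_{S^N}\otimes\id_R)\,\rho_{RS^N}\,(\pi^\dagger_{S^N}\otimes\id_R)$.

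Next, for each $\pi\in\symgrp N$ let $(t_\pi,\tau^\pi,\bar\tau^\pi)\in\binom{[N]}{k}\times\symgrp k\times\symgrp n$ be the triple of Remark~\ref{rem:dividepi}, and let the branch $\mathcal F_\pi$ of $\barEacc$ following outcome $\pi$ do the following: apply $\Vk$ to $S^N$; apply $\Upi$ to $R$ (producing $R'P^k$); apply $(\tau^\pi)^{-1}$ to register $S^k$; run Sam's test $\{\proj\varphi^{\otimes k}_{P^kS^k},\,\id-\proj\varphi^{\otimes k}_{P^kS^k}\}$, keep the accept branch and trace out $R'$; finally write $\bar\tau^\pi$ into the output register $\Pi$. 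This is completely positive and trace non-increasing, the only loss being at Sam's test. Invoking the identity $\Vk\,\pi_{S^N}=(\bar\tau^\pi_{S^n}\otimes\tau^\pi_{S^k})\,\Vpi$ of Remark~\ref{rem:dividepi}, together with the fact that $\Upi$, $\bar\tau^\pi_{S^n}$ and $\tau^\pi_{S^k}$ act on pairwise disjoint registers, the state after the first two steps of $\mathcal F_\pi$ is $(\bar\tau^\pi_{S^n}\otimes\tau^\pi_{S^k})\,[\Upi\otimes\Vpi](\rho_{RS^N})\,(\bar\tau^\pi_{S^n}\otimes\tau^\pi_{S^k})^\dagger$; the third step cancels the $\tau^\pi_{S^k}$, and Sam's test, the trace over $R'$, and recording $\bar\tau^\pi$ then leave exactly $\proj{\bar\tau^\pi}_\Pi\otimes\bar\tau^\pi_{S^n}\,{\cal E}_{t_\pi}(\rho_{RS^N})\,(\bar\tau^\pi_{S^n})^\dagger$.

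The key point is that this branch output depends on $\pi$ only through $(t_\pi,\bar\tau^\pi)$, precisely because the $\tau^\pi$ produced by Remark~\ref{rem:dividepi} was removed by conjugating with $(\tau^\pi)^{-1}$. Hence, using the bijection $\pi\leftrightarrow(t_\pi,\tau^\pi,\bar\tau^\pi)$ (for each $(t,\bar\tau)$ there are $k!$ choices of $\tau$, and $N!=k!\,n!\binom{N}{k}$), summing the branches gives
\begin{align*}
  \barEacc(\rhobar)
  &= \frac1{N!}\sum_{\pi\in\symgrp N}\proj{\bar\tau^\pi}_\Pi\otimes\bar\tau^\pi_{S^n}\,{\cal E}_{t_\pi}(\rho_{RS^N})\,(\bar\tau^\pi_{S^n})^\dagger\\
  &= \frac1{n!}\sum_{\pi\in\symgrp n}\proj\pi_\Pi\otimes\pi_{S^n}\,\Eacc(\rho_{RS^N})\,\pi^\dagger_{S^n}\,,
\end{align*}
which is exactly the left-hand side of~(\ref{eq:0q4gh0awfe}), so the first criterion follows.

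The step I expect to be the main obstacle is the register bookkeeping packaged into Remark~\ref{rem:dividepi}: one must check carefully that letting the ``virtual'' permutation $\pi$ play the role of Sam's sample makes the symmetrized protocol test precisely the $\pi$-image of the first $k$ registers, that the induced reshuffle of the $n$ unsampled positions is exactly $\bar\tau^\pi$, and --- most delicately --- that the leftover intra-sample permutation $\tau^\pi$ can be undone \emph{before} Sam's test; it is this cancellation that makes each branch output independent of $\tau^\pi$, which is what lets the average over $\symgrp N$ collapse to the $\symgrp n$-symmetrization of $\Eacc$. The remaining points (that $\ket\mu$ purifies $\bar\rho_{S^N}$, the existence of the partial isometry $W$, and that each $\mathcal F_\pi$ is trace non-increasing) are routine.
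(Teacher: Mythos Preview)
Your proof is correct and follows essentially the same route as the paper's: build an explicit purification of $\bar\rho_{S^N}$ carrying the permutation label, pass to it via an isometry on $P^N$, use Remark~\ref{rem:dividepi} to split $\pi$ into $(t_\pi,\tau^\pi,\bar\tau^\pi)$, run the real attack $\Upi$, and collapse the $\symgrp N$-average to an $\symgrp n$-average of $\Eacc$. The only cosmetic difference is how the intra-sample permutation $\tau^\pi$ is neutralised: the paper applies $\tau^\pi$ to the adversary's register $P^k$ and then invokes the permutation invariance of $\ket\varphi^{\otimes k}_{P^kS^k}$ under the joint action $\tau^\pi_{P^k}\otimes\tau^\pi_{S^k}$, whereas you apply $(\tau^\pi)^{-1}$ directly to $S^k$ to cancel it before the test --- equivalent moves, since $\proj\varphi^{\otimes k}$ commutes with joint permutations.
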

\begin{proof}
  We need to show the existence of a completely positive trace
  non-increasing map $\barEacc$ such that for any $\rho_{RS^N}$,
  \begin{equation}
    \label{eq:4rccr248cr9284cr2938}
    \frac 1{n!}\sum_{\pi\in \symgrp n} \proj\pi_\Pi\otimes
    \pi_{S'} \Eacc(\rho_{RS^N})
    \pi_{S'}^\dagger=
    \barEacc(\rhobar)
  \end{equation}
  for some symmetric purification $\ket{\rhobar}$ of $ \frac
  1{N!}\sum_{\pi\in \symgrp N} \pi_{S^N} \rho_{S^N} \pi_{S^N}^\dagger$ where
  $\Eacc$ is defined earlier in this
  section.
 
  Let $\ket{\rhobar}\in \sym^N(\hilbert_P\otimes \hilbert_S)$
  be an arbitrary purification of $ \frac
  1{N!}\sum_{\pi\in \symgrp N}\linebreak \pi_{S^N} \rho_{S^N} \pi_{S^N}^\dagger$. Since all purifications are equivalent up
  to an isometry on the purifying register, there exists an isometry
  ${W}_{P^N\rightarrow R\bar\Pi}$ such that
  \begin{equation*}
    ({W}_{P^N}\otimes
    {\id}_{S^N})\ket{\rhobar} = \frac 1{\sqrt{N!}}\sum_{\pi\in \symgrp n}
    (\id_R\otimes 
    \pi_{S^N}) \ket{\rho_{RS^N}}\otimes \ket \pi_{\bar \Pi} \enspace.
  \end{equation*}  Let $\bar U_{ P^N\rightarrow \bar RP^k}$ be the
  isometry that performs the following actions 
  unitarily on register $P^N$ of $\ket{\rhobar}$:
\begin{enumerate}
\item Apply ${W}_{P^N}$, producing registers $R$ and $\bar \Pi$.
\item From permutation $\pi\in \symgrp N$ held in register $\bar \Pi$, compute
  $t_\pi$, $\tau^\pi\in \symgrp k$ and $\bar \tau^\pi\in \symgrp n$ as in
  Remark~\ref{rem:dividepi}, i.e. such that $\Vk\cdot\pi_S = (\tau^\pi_{\hat S} \otimes
  \bar\tau^\pi_{S'})\cdot \Vpi$.
\item Apply attack $\Upi$ on register $R$, producing registers $R'$
  and $P^k$ and reorder register $P^k$ using permutation $\tau^\pi$ so
  that each $P_i$ aligns with the right sampled $S_i$.
  \label{itm:reorder}
\item Let register $\bar R$ be composed of registers $R'$, $\bar \Pi$.
  Output registers $P^k$ , $\bar R$ and register $\Pi$ containing the
  permutation $\bar \tau^\pi$ that acts on the output $S^n$ (i.e. on the
  unsampled registers).\label{step:advoutput} 
 \end{enumerate}
 From the definition of the above isometry, 
\begin{align*}
  &(\bar U_{ P^N\rightarrow \bar RP^k}\otimes \Vk)\ket{\rhobar}\\
  &= \frac 1{\sqrt{N!}}
    \sum_{\pi\in \symgrp N}(\tau^\pi_{ P^k}\otimes \tau^\pi_{ S^k} \otimes \bar\tau^\pi_{S^n})(\Upi\otimes \Vpi) 
    \ket{\rho_{RS^N}} \ket \pi_{\bar \Pi} \ket{\bar \tau^\pi}_{\Pi}
\end{align*}
Tracing out register $\bar \Pi$ from the above and using the
one-to-one correspondence between $\pi$ and $(t_\pi, \tau^\pi,
\bar\tau^\pi)$ to break the sum over $\pi$ into sums over $t$, $\tau$
and $\bar \tau$, we get
\begin{align*}
  &\frac 1{N!} \sum_{\pi\in \symgrp N} 
    [(\tau^\pi_{ P^k}\otimes \tau^\pi_{ S^k} \otimes
    \id_{R'}\otimes \bar\tau^\pi_{S^n})(\Upi \otimes \Vpi)] (\rho_{RS^N})\otimes
    \proj{\bar \tau^\pi}_{\Pi}\\
  &  =\frac 1{n!} \frac 1{k!}\frac 1{\binom Nk}\sum_{\bar \tau\in \symgrp n}  
    \bar\tau_{S^n}
    \Bigg(
    \sum_{\substack{\tau\in \symgrp k\\t\subseteq [N]: |t|=k}} 
  [(\tau_{ P^k}\otimes \tau_{ S^k})
  (U^t_R \otimes V^t_{S^N})]
  (\rho_{RS^N})\Bigg)(\bar\tau_{S^n})^\dagger\otimes
  \proj{\bar \tau^\pi}_{\Pi}
\end{align*}
Taking the partial inner product with $\ket\varphi^{\otimes k}_{P^kS^k}$ and tracing out $R'$ leaves us with
\begin{align*}
  &\frac 1{n!\binom Nk} 
  \sum_{\bar \tau\in \symgrp n}  \bar\tau_{S^n}\Bigg(\sum_{t} \trace[R']{\bra\varphi^{\otimes k}_{
  P^k S^k}\cdot[
  U^t_R \otimes V^t_{S^N}]
  (\rho_{RS^N})\cdot\ket\varphi^{\otimes k}_{
  P^k S^k}}\Bigg)( \bar\tau_{S^n})^\dagger\otimes
    \proj{\bar \tau^\pi}_{\Pi}\\
  &\qquad= \frac 1{n!}\sum_{\bar\tau\in \symgrp n}
  \bar \tau_{S^n} \Eacc(\rho_{RS^N})\bar\tau_{S^n}^\dagger\otimes
    \proj{\bar \tau^\pi}_{\Pi}
\end{align*}
where the sum over $\tau$ disappeared because $\ket\varphi^{\otimes
  k}_{ P^k S^k}$ is invariant under permutation.  Then $\barEacc$
defined as
    \begin{align*}
      \barEacc(\rhobar):= 
      \trace[\bar R]{\bra{\varphi}^{\otimes k}_{ P^k S^k}\cdot
        [\bar U_{P^N}\otimes V^{[k]}_{S^N}](\rhobar)
        \cdot\ket{\varphi}^{\otimes 
          k}_{P^kS^k}} \enspace.
  \end{align*}
  satisfies~(\ref{eq:4rccr248cr9284cr2938}).  \qed
\end{proof}

\begin{lemma}
  Protocol \purifbased from Fig.~\ref{fig:samplingprot} satisfies the
  second criterion of Definition~\ref{def:perminv}.
\end{lemma}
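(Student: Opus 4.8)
The plan is to reduce the statement to a uniform bound on Paul's acceptance probability in the symmetrized test when Sam's $k$ sampled registers already carry $k$ i.i.d.\ copies of $\theta_S$. Recall from the proof of Lemma~\ref{lem:eqsymadv} that the symmetrized map has the form
\[
  \barEacc(\rhobar) \;=\; \trace[\bar R]{\bra{\varphi}^{\otimes k}_{P^kS^k}\cdot\bigl[\bar U_{P^N}\otimes V^{[k]}_{S^N}\bigr](\rhobar)\cdot\ket{\varphi}^{\otimes k}_{P^kS^k}},
\]
where $\bar U_{P^N\rightarrow\bar R P^k}$ (which absorbs the adversary's attack) and $V^{[k]}_{S^N\rightarrow S^nS^k}$ (which merely regroups the first $n$ and last $k$ copies of $S$) are isometries. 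First I would observe that $\barEacc$ is completely positive, so that $\|\barEacc(\proj\theta^{\otimes N})\|_1 = \trace{\barEacc(\proj\theta^{\otimes N})}$, and that a direct computation reduces this to $\trace{\proj\varphi^{\otimes k}_{P^kS^k}\cdot\xi_{P^kS^k}}$, where $\xi_{P^kS^k}$ is the reduced state on the checked registers of the normalized pure state $\bigl[\bar U_{P^N}\otimes V^{[k]}_{S^N}\bigr](\proj\theta^{\otimes N})$. In other words, $\trace{\barEacc(\proj\theta^{\otimes N})}$ is exactly the probability that Sam observes $\proj\varphi^{\otimes k}$ in the symmetrized test on the i.i.d.\ input.

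Next I would exploit that $\bar U_{P^N}$ acts only on Paul's registers $P^N$, hence cannot touch $S^N$: after the regrouping by $V^{[k]}$, Sam's sampled registers $S^k$ are still in the state $\theta_S^{\otimes k}$, so $\xi_{S^k}=\theta_S^{\otimes k}$ no matter which attack $\bar U$ encodes. Monotonicity of fidelity under the partial trace over $P^k$, together with its multiplicativity over tensor products, then gives
\[
  \trace{\barEacc(\proj\theta^{\otimes N})} \;=\; \trace{\proj\varphi^{\otimes k}_{P^kS^k}\,\xi_{P^kS^k}} \;=\; F\bigl(\proj\varphi^{\otimes k}_{P^kS^k},\xi_{P^kS^k}\bigr)^2 \;\le\; F\bigl(\varphi_S^{\otimes k},\theta_S^{\otimes k}\bigr)^2 \;=\; F(\varphi_S,\theta_S)^{2k},
\]
the second equality holding because $\proj\varphi^{\otimes k}$ is pure. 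Since $F(\theta_S,\varphi_S)^2<1-\epsilon$ and $k=\beta N=\Omega(N)$, the right-hand side is at most $(1-\epsilon)^k\le e^{-\epsilon k}=\exp(-\Omega(N))$, which is the second criterion.

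The only delicate point, and the main thing to pin down carefully, is the bookkeeping behind the first paragraph: checking that $\barEacc$ as constructed in Lemma~\ref{lem:eqsymadv} is genuinely completely positive, that the trace of its output on the i.i.d.\ input collapses to $\trace{\proj\varphi^{\otimes k}_{P^kS^k}\xi_{P^kS^k}}$, and that no operation available to Paul through $\bar U$ can perturb the reduced state $\xi_{S^k}$ away from $\theta_S^{\otimes k}$. Once this is in place the fidelity estimate and the exponential decay are immediate; note that, in contrast with classical sampling, no concentration inequality enters here, the decay coming solely from the tensor-power structure of the reference state.
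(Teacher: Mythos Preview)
Your proposal is correct and follows essentially the same route as the paper: both arguments recognise that $\bar U$ touches only Paul's registers so that Sam's sampled registers remain in state $\theta_S^{\otimes k}$, and then bound the acceptance probability by $F(\varphi_S^{\otimes k},\theta_S^{\otimes k})^2=F(\varphi_S,\theta_S)^{2k}$ via monotonicity of the fidelity. Your write-up is more explicit about the intermediate reduction to $\trace{\proj\varphi^{\otimes k}\xi_{P^kS^k}}$ and about why $\xi_{S^k}=\theta_S^{\otimes k}$, whereas the paper compresses this into the one-line observation that the optimal $\bar U$ matches the fidelity; but the substance is identical.
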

\begin{proof}
  We need to show that for any $\epsilon>0$, $\|\barEacc(\proj\theta_{
    P^NS^N}^{\otimes N})\|_1 \leq \exp(-\Omega(N))$ whenever
  $F(\theta_{S}, \varphi_{ S})^2< 1-\epsilon$ where
 \begin{align*}
   \barEacc(\rhobar):= 
   \trace[\bar R]{\bra{\varphi}^{\otimes k}_{ P^k S^k}\cdot
     [\bar U_{P^N}\otimes V^{[k]}_{S^N}](\rhobar)
     \cdot\ket{\varphi}^{\otimes 
       k}_{P^kS^k}}\enspace.
    \end{align*} 
    The proof is based on the simple observation that the isometry
    $\bar U$ that maximizes the probability of observing
    $\ket{\varphi}^{\otimes k}$ on registers $P^k S^k$ is the one that
    matches the fidelity with ${\varphi}^{\otimes k}$ by the fact that
    the fidelity is monotonous. Therefore it holds that, since the
    fidelity is multiplicative for product states,
    \begin{equation*}
      \|\barEacc\left(\proj\theta_{ P^NS^N}^{\otimes 
          N}\right)\|_1 \leq F(\theta^{\otimes k}_{ S^k},
      \varphi^{\otimes k}_{ S^k})^2\leq
      (1-\epsilon)^{2k}\leq \exp(-2 \epsilon k)
    \end{equation*}
    whenever $F(\theta_{ S}, \varphi_{ S})^2< 1-\epsilon$
    .\qed
\end{proof}

The third criterion of Definition~\ref{def:perminv} follows trivially
from the observation that neither $\Eacc$ nor $\barEacc$ acts on the
unsampled qubits other than by rearranging them.

\subsection{Proof of Proposition~\ref{prop:EPRsampling}}
\label{sec:proofeprsampl}

As in Section~\ref{sec:purifscheme}, let us establish that the
protocol satisfies the each criterion of Definition~\ref{def:perminv}.
\begin{lemma}[First criterion]
\label{lem:firstcritepr}
  Let $\Eacc$ be the output of the
  sampling protocol \eprlocc from Fig.~\ref{fig:EPRsampling}.  For any
  $\rho_{RS^N}\in \density{\hilbert_R\otimes \hilbert_S^{\otimes N}}$ there
  exists $\barEacc$ such that
    \begin{equation}
      \label{eq:f094hf9hg02hg3}
      \frac 1{n!}\sum_{\pi\in \symgrp n} \proj\pi_\Pi\otimes
      \pi_{S^n} \Eacc(\rho_{RS^N})
      \pi_{S^n}^\dagger=
      \barEacc(\rhobar)
    \end{equation}
    for some symmetric purification $\ket{\rhobar}$ of $
    \frac 1{N!}\sum_{\pi\in \symgrp N} \pi_{S^N} \rho_{S^N} \pi_{S^N}^\dagger$.
\end{lemma}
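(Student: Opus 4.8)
The plan is to mirror the proof of Lemma~\ref{lem:eqsymadv}: I would turn an arbitrary attack against the real protocol into an attack against the symmetrized protocol (in which Sam always samples the fixed set $[k]$), one that reads an internal ``hidden'' permutation $\pi$ off the symmetric purification and relabels the sampled positions, the announced bases, and the classical string returned to Sam accordingly, so that the real attack is faithfully simulated while the unsampled registers come out permuted by $\bar\tau^\pi$.

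First I would fix notation. We may assume $\rho_{RS^N}$ is pure and that an adversarial strategy is a family of isometries $W^{t,c}_{R\rightarrow R'Q}$, one for each sample $t\subseteq[N]$ of size $k$ and each basis string $c\in\{+,\times\}^k$, where we may take $Q$ to be a classical register holding the announced string $\hat X\in\bool^k$, since Sam's test only reads $Q$ in the computational basis. As in Section~\ref{sec:purifscheme}, let $V^t_{S^N\rightarrow S^nS^k}$ be the isometry that moves the sampled subsystems $S_i$, $i\in t$, into the last $k$ slots $S^k$ and the remaining ones into the first $n$ slots $S^n$, and let $M^{x\mid c}_{S^k}$ denote the rank-one projector for measuring $S^k$ in the bases $c$ with outcome $x$. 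Then the sampler's accept map is
\begin{equation*}
  \Eacc(\rho_{RS^N}) = \frac{1}{\binom Nk\,2^k}\sum_{\substack{t\subseteq[N],\,|t|=k\\ c\in\{+,\times\}^k}}\ \sum_{x\in\bool^k}\trace[R'QS^k]{\bigl(M^{x\mid c}_{S^k}\otimes\proj{x}_Q\bigr)\,[W^{t,c}_R\otimes V^t_{S^N}](\rho_{RS^N})},
\end{equation*}
with the identity on $R'S^n$ left implicit; acceptance occurs exactly when $X=\hat X=x$.

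Next I would take an arbitrary symmetric purification $\ket{\rhobar}\in\sym^N(\hilbert_P\otimes\hilbert_S)$ of $\frac1{N!}\sum_{\pi\in\symgrp N}\pi_{S^N}\rho_{S^N}\pi_{S^N}^\dagger$ and, exactly as in the proof of Lemma~\ref{lem:eqsymadv}, use the equivalence of purifications to obtain an isometry $W_{P^N\rightarrow R\bar\Pi}$ with $(W_{P^N}\otimes\id_{S^N})\ket{\rhobar}=\frac1{\sqrt{N!}}\sum_{\pi\in\symgrp N}(\id_R\otimes\pi_{S^N})\ket{\rho_{RS^N}}\otimes\ket\pi_{\bar\Pi}$. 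I then define the symmetrized adversary $\bar U_{P^N\rightarrow\bar RQ}$ which, on the basis string $c$ received from Sam for the fixed sample $[k]$: applies $W_{P^N}$; reads $\pi$ off $\bar\Pi$ and computes $t_\pi$, $\tau^\pi\in\symgrp k$, $\bar\tau^\pi\in\symgrp n$ as in Remark~\ref{rem:dividepi}; runs the real attack $W^{t_\pi,\,(\tau^\pi)^{-1}c}$ on $R$, the bases being permuted so that the one used on the actual qubit $\pi^{-1}(i)$ equals $c_i$; permutes the returned string $\hat X$ by $\tau^\pi$ so that it aligns slot by slot with $S^k$; and outputs $Q$, the register $\bar R:=(R',\bar\Pi)$, and a classical register $\Pi$ holding $\bar\tau^\pi$. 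We then set
\begin{equation*}
  \barEacc(\rhobar):=\frac{1}{2^k}\sum_{c}\sum_{x}\trace[\bar RQS^k]{\bigl(M^{x\mid c}_{S^k}\otimes\proj{x}_Q\bigr)\,[\bar U_{P^N}\otimes V^{[k]}_{S^N}](\rhobar)}.
\end{equation*}

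Finally I would verify~(\ref{eq:f094hf9hg02hg3}) by the same direct computation as in Lemma~\ref{lem:eqsymadv}. Substituting the expression for $(W_{P^N}\otimes\id_{S^N})\ket{\rhobar}$ and tracing out $\bar\Pi$, the state $[\bar U_{P^N}\otimes V^{[k]}_{S^N}](\rhobar)$ becomes an average over $\pi\in\symgrp N$ of $[(\text{relabelled real attack on }t_\pi)\otimes\bar\tau^\pi_{S^n}](\rho_{RS^N})\otimes\proj{\bar\tau^\pi}_\Pi$; the bijection $\pi\leftrightarrow(t_\pi,\tau^\pi,\bar\tau^\pi)$ of Remark~\ref{rem:dividepi} then splits this sum into independent sums over $t$, over $\tau\in\symgrp k$, and over $\bar\tau\in\symgrp n$. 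Applying Sam's acceptance measurement, averaging over $c$, and summing over $x$, the $\tau$-sum together with the basis relabelling washes out: the accept event ``$X=\hat X$'' on the $k$ sampled slots is invariant under simultaneously permuting those slots, their bases, and the returned string, and averaging the symmetrized Sam's uniform $c$ over $\{+,\times\}^k$ reproduces, under the $\pi$-dependent relabelling (a bijection of $\{+,\times\}^k$), the uniform average over the real protocol's basis choices. What survives is exactly $\frac1{n!}\sum_{\bar\tau\in\symgrp n}\bar\tau_{S^n}\,\Eacc(\rho_{RS^N})\,\bar\tau_{S^n}^\dagger\otimes\proj{\bar\tau}_\Pi$, which is the left-hand side of~(\ref{eq:f094hf9hg02hg3}). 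I expect the main obstacle to be precisely this last bookkeeping: tracking how the sample $t_\pi$, the basis string, and the classical message must be jointly relabelled so that the accept predicate computed by the symmetrized Sam agrees with the real one and the unsampled output comes out permuted by exactly $\bar\tau^\pi$; everything else is a transcription of the corresponding manipulations in Section~\ref{sec:purifscheme}.
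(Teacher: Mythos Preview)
Your proposal is correct and follows essentially the same route as the paper's proof: build $\barEacc$ by extracting the hidden permutation $\pi$ from the symmetric purification, decompose it into $(t_\pi,\tau^\pi,\bar\tau^\pi)$ via Remark~\ref{rem:dividepi}, simulate the real attack on sample $t_\pi$, and then use the bijection $\pi\leftrightarrow(t_\pi,\tau^\pi,\bar\tau^\pi)$ to recover the permuted output. The only cosmetic difference is where the $\tau^\pi$ bookkeeping lands: the paper feeds the adversary the unpermuted $c$ and instead permutes Sam's projector to $\tau^\pi\,\mathbb P^{x,c}_{S^k}\,(\tau^\pi)^\dagger$ so that the $\tau^\pi$ cancels directly against the one coming from $V^{[k]}\pi=(\bar\tau^\pi\otimes\tau^\pi)V^{t_\pi}$, whereas you keep Sam's projector fixed and push the permutation into the basis string $(\tau^\pi)^{-1}c$ handed to the adversary and into the returned $\hat X$, then use that the uniform average over $c$ is invariant under this relabelling; both routes make the $\tau$-sum collapse to a factor $k!$ and yield the same $\barEacc$.
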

\begin{proof}
  Recall the linear operator $\Vt$ from Section~\ref{sec:purifscheme}
  that maps $S_t$ to $S^k$ and $S_{\bar t}$ to $S^n$ (where $S^k$ is
  understood to represent the last $k$ registers). The \cptn map
  $\Eacc$ that models the action of the protocol on the state
  $\rho_{RS^N}$ when Sam accepts can be represented as 
  \begin{align*}
    2^{-k}\binom Nk^{-1}\sum_{t,c,x}
    \trace[RS^k]{ (E^{t,c}_x\otimes \mathbb P_{S^k}^{x,c}) \Vt \rho_{RS^N} \Vt}
  \end{align*}
  where the sum is over $t\subset[N]$ such that $|t|=k$, $c\in
  \{+,\times \}^k$ and $x\in\bool^k$ and where, for $t$ and $c$ sent
  by Sam, $E^{t,c}= \{E^{t,c}_x\}_{x\in\bool^k}$ is the POVM
  measurement on $R$ that produces $x$ and $\mathbb P_{S^k}^{x,c}:=
  H^{\otimes c}\proj xH^{\otimes c}$ is the projector onto $x$ in
  basis $c$.

  Let $\rhobar$ be an arbitrary purification of $ \frac
  1{N!}\sum_{\pi\in \symgrp N} \pi_{S^N} \rho_{S^N}
  \pi_{S^N}^\dagger$. Define the map $\barEacc$ as follows:
  \begin{enumerate}
  \item Map state $\rhobar$ to $\frac 1{N!}\sum_{\pi\in \symgrp N} \proj
    \pi_{\bar \Pi}\otimes (\id_R\otimes \pi_{S^N}) \rho_{RS^N}
    (\id_R\otimes\pi_{S^N}^\dagger)$.
  \item From permutation $\pi\in \symgrp N$ held in register $R$, compute
    $t_\pi$, $\tau^\pi\in \symgrp k$ and $\bar \tau^\pi\in \symgrp n$ as in
    Remark~\ref{rem:dividepi}.
  \item Apply $\Vk$ on $S^N$, choose $c\in\{+,\times \}^k$ at random
    and apply POVM $E^{t_\pi, c}$ on $R$ producing output
    $x$.\label{st:0f892f4q9p}
  \item Measure the sampled registers $S^k$ by projecting on $
    H^{\otimes \tau^\pi(c)}\ket{\tau^\pi(x)}_{S^k}= \tau^\pi
    H^{\otimes c}\ket{x}_{S^k}$.
  \item Output $\bar \tau^\pi$ in register $\Pi$ and register $S^n$.
  \end{enumerate}

  The output of $\barEacc$
  applied on $\rhobar$ is
  \begin{align*}
    &
        \frac {2^{-k}}{N!}\sum_{\pi, c,x} \tr_{RS^k}
    \Big((E^{t_\pi,c}_x\otimes \tau^\pi _{ S^k}\mathbb P^{x,c}_{
          S^k}(\tau^\pi _{ S^k})^\dagger)\cdot
        [\Vk\pi_{S^N}](\rho_{RS^N})\Big)\otimes \proj{\bar
          \tau^\pi}_\Pi
\\
    &=
         \frac
         {2^{-k}}{N!}\sum_{\pi,c,x}\bar\tau^\pi_{S^n}\tr_{RS^k}\Big
         (E^{t_\pi,c}_x\otimes \mathbb P^{x,c}_{S^k}) [\Vpi]
         (\rho_{RS^N}) \Big)\bar\tau^\pi_{S^n}\otimes \proj{\bar
           \tau^\pi}_\Pi
\\
    &=
       \begin{aligned}[t]
         \frac {2^{-k}}{n!} \binom Nk^{-1}\sum_{\bar\tau^\pi\in
           \symgrp
           n}[\bar\tau^\pi_{S^n}]\Bigg(&\sum_{t,c,x}\tr_{RS^k}\Big(
         (E^{t,c}_x\otimes \mathbb P^{x,c}_{S^k}) [V^t_{S^N}](
         \rho_{RS^N}) \Big)\Bigg)\otimes \proj{\bar
           \tau^\pi}_\Pi
       \end{aligned}
\\
    &= \frac  {1}{n!}\sum_{\bar\tau^\pi\in
       \symgrp n}\bar\tau^\pi_{S^n} \Eacc(\rho_{RS^N})\bar\tau^\pi_{S^n}\otimes 
       \proj{\bar \tau^\pi}_\Pi 
  \end{align*}
  where the second equality uses Remark~\ref{rem:dividepi}.\qed
\end{proof}

\begin{lemma}[Second criterion]
  Let $\barEacc$ be as in the proof of
  Lemma~\ref{lem:firstcritepr}. For any $\epsilon>0$,
  $\|\barEacc(\proj\theta_{P^NS^N}^{\otimes N})\|_1 \leq
  \exp(-\Omega(N))$ whenever $F(\theta_{ S}, \varphi_{ S})^2<
  1-\epsilon$
\end{lemma}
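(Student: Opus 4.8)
The plan is to reduce the statement to a single-copy bound on Paul's guessing probability and then exploit the rigidity of the EPR pair. Recall that here $\varphi_S=\id/2$, so $\ket{\varphi_{PS}}$ is an EPR pair. By the construction of $\barEacc$ in the proof of Lemma~\ref{lem:firstcritepr}, feeding $\barEacc$ the permutation-invariant input $\proj\theta_{P^NS^N}^{\otimes N}$ amounts to running Paul's (re-ordered) attack on a state that is i.i.d.\ across the $N$ positions with each pair $P_iS_i$ in the pure state $\ket\theta_{PS}$, while Sam honestly measures a $k$-element subset $t$ of his qubits in fresh uniformly random $\{+,\times\}$ bases $c$ and compares with the string announced by Paul; hence $\|\barEacc(\proj\theta_{P^NS^N}^{\otimes N})\|_1$ is at most the probability that all $k$ comparisons succeed, maximised over Paul's measurement on $P^N$. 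Since the $S_i$ are in tensor product and the comparison at a sampled position $i$ pits the outcome $X_i$ of measuring $S_i$ in basis $c_i$ against a bit Paul extracts from $P^N$, while the registers $P_j$ with $j\ne i$ carry no information about $X_i$, the optimal guessing probability is multiplicative over the sample; it is therefore bounded by $\prod_{i\in t}p_{c_i}$, where $p_b$ is the best possible probability of guessing the outcome of a measurement of $S$ in basis $b\in\{+,\times\}$ given $P$, for a single copy of $\ket\theta_{PS}$. Averaging over $c$ gives $\|\barEacc(\proj\theta_{P^NS^N}^{\otimes N})\|_1\le p^k$ with $p:=\frac12(p_++p_\times)$.

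It then remains to bound $p$ away from $1$ as a function of $\epsilon$. Write $\ket\theta_{PS}$ in Schmidt form, with $\theta_S$ having eigenvalues $\lambda_0\ge\lambda_1$. Conditioned on Sam obtaining outcome $x$ in a basis $b=\{\ket{b_0},\ket{b_1}\}$, Paul holds a subnormalised pure state $\ket{\mu_x^b}$, the two probabilities $\|\mu_0^b\|^2,\|\mu_1^b\|^2$ summing to $1$; the Helstrom bound gives $p_b=\frac12+\frac12\sqrt{1-4|\braket{\mu_0^b}{\mu_1^b}|^2}$, hence $1-p_b\ge|\braket{\mu_0^b}{\mu_1^b}|^2$. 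A short computation yields $\braket{\mu_0^b}{\mu_1^b}=\bra{b_1}\theta_S\ket{b_0}=\frac12(\lambda_0-\lambda_1)\bra{b_1}(\vec n\cdot\vec\sigma)\ket{b_0}$, where $\vec n$ is the unit vector with $\theta_S=\frac12\id+\frac12(\lambda_0-\lambda_1)\,\vec n\cdot\vec\sigma$ (the identity part drops out of the off-diagonal because $\braket{b_1}{b_0}=0$). For the computational and the diagonal bases (the $z$- and $x$-directions of the Bloch sphere) one gets $|\bra{b_1}(\vec n\cdot\vec\sigma)\ket{b_0}|^2$ equal to $n_x^2+n_y^2$ and $n_z^2+n_y^2$, summing to $1+n_y^2\ge1$.

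Putting these together, $(1-p_+)+(1-p_\times)\ge\frac14(\lambda_0-\lambda_1)^2(1+n_y^2)\ge\frac14(\lambda_0-\lambda_1)^2$. The hypothesis $F(\theta_S,\varphi_S)^2<1-\epsilon$, together with $F(\theta_S,\id/2)^2=\frac12+\sqrt{\lambda_0\lambda_1}$, forces $\lambda_0\lambda_1<(\frac12-\epsilon)^2$, hence $(\lambda_0-\lambda_1)^2=1-4\lambda_0\lambda_1>4\epsilon(1-\epsilon)$ and $p<1-\frac12\epsilon(1-\epsilon)$. Since $k$ is of order $N$, $\|\barEacc(\proj\theta_{P^NS^N}^{\otimes N})\|_1\le p^k\le\exp(-\frac12\epsilon(1-\epsilon)k)=\exp(-\Omega(N))$.

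The main obstacle is the middle paragraph: turning the informal fact ``only the EPR pair is perfectly correlated in both conjugate bases'' into a quantitative gap depending only on $\epsilon$. The crux is the Bloch-sphere identity $n_x^2+2n_y^2+n_z^2=1+n_y^2\ge1$, i.e.\ the off-diagonal entry of $\theta_S$ cannot vanish simultaneously in two conjugate bases unless $\theta_S$ is maximally mixed; the remaining ingredients (the Helstrom formula and converting the fidelity condition into a lower bound on $\lambda_0-\lambda_1$) are routine.\qed
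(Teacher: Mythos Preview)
Your proof is correct, and its skeleton matches the paper's: both reduce $\|\barEacc(\proj\theta^{\otimes N})\|_1$ to a product of $k$ single-copy guessing probabilities and then argue that the average $p=\tfrac12(p_++p_\times)$ is bounded away from $1$ whenever $\theta_S$ is $\epsilon$-far from $\id/2$.

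The execution of that last step is where you diverge. The paper works with the overlaps $\braket{\theta^0_P}{\theta^1_P}$ and $\braket{\theta^+_P}{\theta^-_P}$ directly, combines them via the operator identity $\ketbra01+\ketbra+-=\tfrac12 H$ and the triangle inequality, and then uses the Schmidt decomposition of $\ket\theta_{PS}$ together with the relation between trace distance and fidelity to conclude that at least one overlap is bounded below in terms of $\epsilon$. You instead parametrise $\theta_S=\tfrac12\id+\tfrac12(\lambda_0-\lambda_1)\,\vec n\cdot\vec\sigma$ on the Bloch sphere, compute $|\bra{b_1}\theta_S\ket{b_0}|^2$ for each basis, feed this into the Helstrom bound $1-p_b\ge|\braket{\mu_0^b}{\mu_1^b}|^2$, and close with the identity $(n_x^2+n_y^2)+(n_z^2+n_y^2)=1+n_y^2\ge1$ and the exact formula $F(\theta_S,\id/2)^2=\tfrac12+\sqrt{\lambda_0\lambda_1}$. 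What this buys you is an explicit constant, $p<1-\tfrac12\epsilon(1-\epsilon)$, and an argument that is manifestly uniform in the Schmidt basis of $\ket\theta_{PS}$. By contrast, the paper's chain asserts $|\lambda_0\bra{e_0}H\ket{e_0}+\lambda_1\bra{e_1}H\ket{e_1}|=|\lambda_0-\lambda_1|$, which fails when the Schmidt vectors lie along the $y$-axis (then $\bra{e_0}H\ket{e_0}=0$); your Bloch-sphere identity is precisely what is needed to handle that degenerate direction and make the ``only the EPR pair is simultaneously correlated in both conjugate bases'' intuition fully quantitative.
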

\begin{proof}
  For any $c\in\{+,\times \}^k$, let $\bar E^c_x$ be the POVM element
  on $P^N$ that gives the probability of $x$ being outputted in
  step~\ref{st:0f892f4q9p} of $\barEacc$ when $c$ is chosen in the
  same step. In essence, $\bar E^c_x$ is to $\barEacc$ what $E^{t_\pi,
    c}$ is to $\Eacc$; it gives the probability of observing $x$ when
  the following measurement is done on $P^N$: produce registers
  $\bar\Pi R$ from $P^N$, measure $\pi$ from register $\bar \Pi$,
  compute the corresponding sample $t_\pi$, and apply the measurement
  corresponding to POVM $E^{t_\pi, c}$.

  Using these POVM operators $\bar E^c_x$, we can express the norm we
  wish to upper-bound as
  \begin{equation} \label{eq:f0q40gffffadf}
    \|\barEacc(\proj\theta_{P^NS^N}^{\otimes N})\|_1=
        {2^{-k}}\sum_{c,x}\trace{(\bar E_x^c \otimes \mathbb P^{x,c}_{S^k} \otimes \id_{S^n})
        \proj\theta_{P^NS^N}^{\otimes N}}
  \end{equation}
  where $\mathbb P^{x,c}_{S^k}$ is the projector onto $x$ in basis
  $c$. Note that the right-hand side of~(\ref{eq:f0q40gffffadf}) can be
  interpreted as the probability of guessing the outcome of measuring
  register $ S^k$ in a known but random basis $c$ by observing the
  reduced operator of register $P^N$. We now analyze this guessing
  probability to provide an upper-bound on~(\ref{eq:f0q40gffffadf}).

  Since each measurement on $ S^k$ is independent of each other and
  since the joint state is in an i.i.d. form, the probability of Paul
  guessing outcome $x$ is of the form $\gamma^k$ where $\gamma$
  corresponds to the probability of guessing a single bit of $x$. This
  probability is given by the expression
  \begin{equation*}
    \gamma= \frac 12 \Pr(\text{guess }X\mid C=+) + \frac 12
    \Pr(\text{guess }X\mid C=\times )
  \end{equation*}
  We show that at least one of the above conditional term is bounded
  above by a constant strictly smaller than $1$ when $F(\theta_S,
  \varphi_S)<1-\epsilon$, which means that $\gamma^k$ is negligible in
  $k$.

  The maximum probability of guessing $X$ when $C=+$ is given by the
  probability of distinguishing states
  \begin{equation*}
    \ket{\theta^0_P}= (\id_p\otimes \bra 0_S)\ket{\theta_{PS}}\text{
      and } \ket{\theta^1_P}= (\id_p\otimes \bra 1_S)\ket{\theta_{PS}}
  \end{equation*}
  and the same holds when $C=\times$ for similarly defined
  $\ket{\theta^+_P}$ and $\ket{\theta^-_P}$. Let
  \begin{equation*}
    \sqrt{\lambda_0} \ket{f_0}_P\ket{e_0}_S+
    \sqrt{\lambda_1} \ket{f_1}_P\ket{e_1}_S
  \end{equation*}
  be the Schmidt decomposition of $\ket{\theta_{PS}}$ and consider the
  quantity
  \begin{align*}
    \label{eq:fqoiwejqfpof4h}
    &\left|\braket{\theta^0_P}{\theta^1_P}\right|+
      \left|\braket{\theta^+_P}{\theta^-_P}\right|
      \geq\left|\braket{\theta^0_P}{\theta^1_P}+
      \braket{\theta^+_P}{\theta^-_P}\right|\\
    &= \left|\bra{\theta_{PS}}(\id_P\otimes \ketbra 01_S) \ket{\theta_{PS}}
      + \bra{\theta_{PS}}(\id_P\otimes \ketbra +-_S)
      \ket{\theta_{PS}}\right| \\ 
    &= \frac 12\left|\bra{\theta_{PS}}(\id_P\otimes H_S)
      \ket{\theta_{PS}} \right|=  \frac 12\left|{\lambda_0} \bra{e_0}_S H_S\ket{e_0}_S+
      {\lambda_1} \bra{e_1}_S H_S\ket{e_1}_S\right|\\&=  \frac 12\left|{\lambda_0}-
      {\lambda_1}\right|
  \end{align*}
  where $H_S= \begin{pmatrix}1&1\\1&-1\end{pmatrix}$, the only
  inequality above is the triangle inequality and the last equality
  follows from the fact that $\bra{e_0}_S H_S\ket{e_0}_S =
  -\bra{e_1}_S H_S\ket{e_1}_S$ for any two orthogonal vectors
  $\ket{e_0}_S$ and $\ket{e_1}_S$.  The last term from the above
  equation can be bounded above by $\epsilon$ since
  \begin{align*}
    \left|{\lambda_0}-{\lambda_1}\right|
    = \left|{\lambda_0}-\frac 12\right| +
    \left|{\lambda_1}-\frac 12\right|
    = \left\| \theta_S - \frac{\id_S}2 \right\|_1
    \geq 2(1-F(\theta_S, \frac {\id_S}2))\geq 2\epsilon
  \end{align*}

  Suppose that $\left|\braket{\theta^0_P}{\theta^1_P}\right|\geq
  \epsilon/2$ (otherwise,
  $\left|\braket{\theta^+_P}{\theta^-_P}\right|\geq \epsilon/2$ and
  the same argument holds for those two states), this means that Paul
  cannot distinguish between the two reduced states $\ket{\theta^0_P}$
  and $\ket{\theta^1_P}$ with probability better than one minus some
  constant (that depends on $\epsilon$). We conclude that $\gamma$ is
  bounded above by a constant strictly less than $1$ and that the
  probability $\gamma^k$ of guessing all measurement outcomes
  correctly declines exponentially fast in $k$.\qed

\end{proof}

The third criterion of Definition~\ref{def:perminv} follows trivially
from the observation that neither $\Eacc$ nor $\barEacc$ acts on
the unsampled qubits other than by relabeling them.


\section{Additional Proofs}
\label{app:addproofs2}
\begin{proof}[Proposition \ref{prop:leqequivpostsel}]
  Let's start with the easier direction of the proof. Let  $
  \ket{\sigma_{R_1Q}}$ be a purification of $\sigma_Q$, let
  $\ket{\rho_{R_2Q}}$ be a purification of $\rho_Q$ and let $A_{R_1\rightarrow R_2}$ be
  as in (\ref{eq:post-selection}). Then by Remark~\ref{rem:fq09h4g},
  $\rho_Q$ is equal to
  \begin{align*}
     \trace[R_2]{\rho_{R_2Q}}=
    c\cdot\trace[R_1]{(A_{R_1\rightarrow R_2}^\dagger
    A_{R_1\rightarrow R_2}\otimes 
    \id_Q)\sigma_{R_1Q}}\leq
    c\cdot\trace[R_1]{\sigma_{R_1Q}}= c\cdot\sigma_Q  \enspace.
  \end{align*}

  For the other direction, write $\sigma_Q$ as $ \sigma_Q= \frac 1c
  (\rho_Q+\tilde\sigma_Q) \text{ where } \tilde \sigma_Q := c\cdot
  \sigma_Q-\rho_Q \geq 0\enspace .  $ Let $\ket{\rho_{R_2Q}}$ be an
  arbitrary purification of $\rho_Q$ and let $\ket{\tilde
    \sigma_{R_2Q}}$ be a purification of $\tilde\sigma_Q$ that lives
  in the same space. Then consider the following purification of
  $\sigma_Q$: $ \ket{\sigma_{R'R_2Q}}:= \sqrt{\frac 1c}(\ket{0}_{R'}
  \ket{\rho_{R_2Q}} + \ket 1_{R'}\ket{\tilde\sigma_{R_2Q}}) $. Let
  $\ket{\sigma_{R_1Q}}$ be an arbitrary purification of $\sigma_Q$ and
  let $A_{R_1\rightarrow R_2}:= (\bra 0_{R'}\otimes
  \id_{R_2})V_{R_1\rightarrow R'R_2}$ where $V_{R_1\rightarrow R'R_2}$
  is an isometry that maps $\ket{\sigma_{R_1Q}}$ to
  $\ket{\sigma_{R'R_2Q}}$. Then
  \begin{equation*}
    (A_{R_1\rightarrow R_2}\otimes \id_Q)\ket{\sigma_{R_1Q}}= (\bra 0_{R'}\otimes
    \id_R)\ket{\sigma_{R'R_2Q}} = \sqrt {\frac 1c} \ket{\rho_{R_2Q}}
    \enspace .
  \end{equation*}
  \qed
\end{proof}

\begin{proof}[Proposition \ref{prop:smallnumbterms}]
  It suffices to show that $\bra a \left(
  |\mathcal J| \cdot\rho^{mix}-\rho \right) \ket a\geq 0$ for any
$\ket a\in \hilbert$. Consider the following chain of (in)equalities:
\begin{multline*}
  |\mathcal J|\bra a \rho^{mix} \ket a 
  = |\mathcal J|\bra a\left(\sum_{i\in \mathcal J}
    \proj{\psi_i}\right)\ket a 
  = |\mathcal J|\sum_{i\in \mathcal J} 
    \left|\braket{a}{\psi_i}\right|^2 \\
  \geq \left|\sum_{i\in \mathcal J} 
    \braket{a}{\psi_i}\right|^2 
  =   \left(\sum_{i\in \mathcal J} 
    \braket{a}{\psi_i}\right)\left(\sum_{j\in \mathcal J} 
    \braket{\psi_j}a\right) 
  = \bra a\left(\sum_{i,j\in \mathcal J}
     \ketbra{\psi_i}{\psi_j}\right) \ket a
  = \bra a\rho \ket a
\end{multline*}
where the only inequality above follows from the Cauchy-Schwarz
inequality ``$|\braket \varphi\psi|^2 \leq \braket \varphi\varphi
\braket \psi\psi$'' with $\ket \varphi = \sum_{i\in \mathcal J} \ket i$
and $\ket \psi= \sum_{i\in\mathcal J}\braket a{\psi_i} \ket
i$. This completes the proof.\qed
\end{proof}

\begin{proof}[Lemma \ref{lem:puriftauisideal}]
  Observe that
  \begin{align*}
    \trace{\mathbb P^{r, \ket\nu} \proj\theta^{\otimes
    n} }=\Pr[ wt(X_\theta)\leq r]=\Pr[ wt(X_\theta) - \epsilon n\leq
    \alpha n] 
  \end{align*}
  where $X_\theta$ is a random variable obtained by measuring $n$
  copies of $\ket \theta$ with observables $M_0= \proj\nu$ and $M_1=
  \id-\proj \nu$ and where $wt(\cdot)$ is the Hamming weight function,
  i.e. the number of ones. Since $X_\theta$ consists of $n$
  i.i.d. Bernoulli trials with parameter $1-F(\nu, \theta)^2\leq
  \epsilon$, Hoeffding's inequality allows us to lower-bound the above
  quantity: $ \trace{\mathbb P^{r, \ket\nu}\, \proj\theta^{\otimes n}
  }\geq 1-\exp(-2\alpha^2 n)$. \qed
\end{proof}

\begin{proof}[Lemma \ref{lem:unpermuteideal}]
  Let $r=\epsilon n$. We need to show that if $\bar
  \sigma_{S^n}:=\frac 1{n!}\sum_{\pi\in \symgrp n} \pi_{S^n}
  \sigma_{S^n}\pi^\dagger_{S^n}$ has a purification in
  $\hilbert_{R}\otimes \Delta_{r}(\ket \varphi^{\otimes
    n}_{{P^n}{S^n}})$ for some register $R$, then $\sigma_{S^n}$ also
  has a purification in $\hilbert_{R}\otimes \Delta_{r}(\ket
  \varphi^{\otimes n}_{{P^n}{S^n}})$.  Let $\ket{\bar
    \sigma_{R{P^n}{S^n}}}\in \hilbert_{R}\otimes \Delta_{r}(\ket
  \varphi^{\otimes n}_{{P^n}{S^n}})$ be the purification of $\bar
  \sigma_{S^n}$ that exists by assumption and let
  $\sum_ip_i\proj{i_{S^n}}$ be the spectral decomposition of
  $\sigma_{S^n}$. Define the pure state
  \begin{equation*}
    \ket{ \bar \sigma_{\Pi {P^n}{S^n}}}= \sqrt{\frac 1{n!}} \sum_{\pi\in
      \symgrp n} \ket\pi_\Pi 
    \otimes \left(\sum_i \sqrt{p_i} \ket{i_{P^n}} \otimes \pi_{S^n}\ket
      {i_{S^n}}\right) 
  \end{equation*}
  where $\{\ket{i_{P^n}}\}_i$ is an orthonormal basis of
  $\hilbert_{P^n}$. Note that this state is a purification of
  $\bar\sigma_{S^n}$, so there exists an isometry $V_{\Pi
    {P^n}\rightarrow R{P^n}}$ such that $V_{\Pi {P^n}\rightarrow R{P^n}}
  \ket{\bar \sigma_{\Pi {P^n}{S^n}}} = \ket{\bar \sigma_{R{P^n}{S^n}}}\in
  \hilbert_{R}\otimes \Delta_{r}(\ket \varphi^{\otimes
    n}_{{P^n}{S^n}})$.
  We can express $\ket{\bar \sigma_{R {P^n}{S^n}}} $ as:
  \begin{align*}
    &\ket{\bar \sigma_{R{P^n}{S^n}}}
    = (V_{\Pi {P^n}\rightarrow R{P^n}}\otimes \id_{S^n})
      \ket{\sigma_{\Pi {P^n}{S^n}}} \\ 
    &= \sum_{\pi,i} \sqrt{\frac {p_i}{n!}} V_{\Pi {P^n}\rightarrow
      R{P^n}}\ket\pi_\Pi 
      \ket{i_{P^n}} \otimes \pi_{S^n}\ket {i_{S^n}}=\sum_{\pi,i} \sqrt{\frac {p_i}{n!}} \ket{\xi_{\pi,i}}_{R{P^n}} \otimes
      \pi_{S^n}\ket {i_{S^n}}
  \end{align*}
  where the vectors $\ket{\xi_{\pi,i}}_{R{P^n}}:= V_{\Pi
    {P^n}\rightarrow R{P^n}}\ket\pi_{\Pi} \ket{i_{P^n}}$ are orthogonal
  to each other. Then by acting on this state with  an isometry that
  extracts $\pi$ from registers $R{P^n}$ and that undoes $\pi$ on
  registers $P^n$ and $S^n$, we get 
    \begin{equation*}\sum_{\pi,i} \sqrt{\frac {p_i}{n!}}
      (\id_R\otimes\pi_{P^n}^{-1})\ket{\xi_{\pi,i}}_{R{P^n}} \otimes 
      \ket {i_{S^n}}
  \end{equation*}
  Note that both before and after this isometry is applied, the state
  of registers ${P^n}$ and ${S^n}$ has support in $\Delta_{r}(\ket
  \varphi^{\otimes n}_{{P^n}{S^n}})$ because this subspace is
  invariant under permutation of these registers. The proof is then
  completed since the above state is a purification of $\sigma_{S^n}$
  that lies in $\hilbert_R\otimes \Delta_{r}(\ket \varphi^{\otimes
    n}_{{P^n}{S^n}})$.
  \qed
\end{proof}


\end{document}